\theoremstyle{plain}
\newtheorem{thm}{\protect\theoremname}
\theoremstyle{plain}
\newtheorem{prop}[thm]{Proposition}
\newtheorem{lem}[thm]{Lemma}
\newtheorem{observ}[thm]{\protect\observname}
\newtheorem{corol}[thm]{\protect\corolname}
\newtheorem{cnj}[thm]{Conjecture}
\newtheorem{example}[thm]{\protect\examplename}
\newtheorem{lemma}[thm]{Lemma}
\newtheorem{proposition}[thm]{Proposition}
\theoremstyle{definition}
\newtheorem{definition}[thm]{Definition}
\theoremstyle{remark}
\renewcommand{\phi}{\varphi}
\newcommand{\Z}{\mathbb Z}
\newcommand{\Q}{\mathbb Q}
\providecommand{\theoremname}{Theorem}
\providecommand{\definname}{Definition}
\providecommand{\observname}{Observation}
\providecommand{\corolname}{Corollary}
\providecommand{\examplename}{Example}
\providecommand{\problemname}{Problem}
\newcommand{\tvect}[2]{%
  \ensuremath{\Bigl(\negthinspace\begin{smallmatrix}#1\\#2\end{smallmatrix}\Bigr)}}
\begin{document}

\title{Efficient Topological Compilation for Weakly-Integral Anyon Model}

\author{Alex Bocharov$^*$}
\author{Xingshan Cui$^\dagger$}
\author{Vadym Kliuchnikov$^*$}
\author{Zhenghan Wang$^{\bullet\dagger}$}

\affiliation{
$^*$Quantum Architectures and Computation Group, Microsoft Research, Redmond, WA (USA) \\
$^\dagger$  University of California, Santa Barbara, CA (USA) \\
$^\bullet$  Station Q, Microsoft Research, Santa Barbara, CA (USA)}

\begin{abstract}

A class of anyonic models for universal quantum computation based on weakly-integral anyons has been recently proposed.
While universal set of gates cannot be obtained in this context by anyon braiding alone, designing a certain type of sector charge measurement provides universality.

In this paper we develop a compilation algorithm to approximate arbitrary $n$-qutrit unitaries with asymptotically efficient circuits over the metaplectic anyon model. One flavor of our algorithm produces efficient circuits with upper complexity bound asymptotically in $O(3^{2\,n} \, \log{1/\varepsilon})$ and entanglement cost that is exponential in $n$. Another flavor of the algorithm produces efficient circuits with upper complexity bound in $O(n\,3^{2\,n} \, \log{1/\varepsilon})$ and no additional entanglement cost.
\end{abstract}

\maketitle

\section{Introduction}

Fault tolerance is becoming a key issue that will define success or failure of future programmable quantum computers.
Certain quasiparticles, called non-abelian anyons, provide a framework for coherent encoding of quantum information that will require little or no error correction.

Our primary goal is to propose an algorithm for efficient circuit synthesis (compilation) in one such non-abelian framework.

Braiding non-abelian objects such as anyons and zero-energy modes is the standard gate operation for topological quantum computation \cite{KitaevTop,FreedmanKitaev}.  But any physically realistic quantum operations are good for quantum information processing.  Besides braiding, measurement is a natural primitive for quantum computation.  While measurements in the quantum circuit model in the computational basis can always be postponed to the end, this cannot be done in topological quantum computation.  Therefore, we could gain extra computational power by supplementing braiding with measurements.  One physically realistic measurement in topological quantum computation is to measure the total charge of a group of anyons, which can be done by either projective measurement or interferometric measurement.

In \cite{CuiWang}, we pursue a qutrit generalization of the standard quantum circuit model.  Some anyon systems are very natural for the implementation of qutrits, e.g. anyons with quantum dimension $\sqrt{3}$.  One such anyon system is $SU(2)_4$---the first of the sequence of metaplectic anyons \cite{HNW}.  While braiding alone for $SU(2)_4$ is not universal as it is the case with the Majorana system, the metaplectic system is no longer like Majorana when measurement is added.  We proved that for $SU(2)_4$, braiding supplemented by projective measurement of the total charge of a pair of metapletic anyons is universal for qutrit quantum computation (see \cite{CuiWang}).

Our motivation for weakly-integral anyon framework is potential realization of metaplectic anyons and zero modes in physical systems.  Majoranas are closer to be well-controlled, but their computational power is impacted by the high complexity and cost of a universal basis \cite{SarmaFreNayak}.  Metaplectic models strike the right balance between controllability and universality.  There is some recent numerical evidence that $SU(2)_4$ might be realized in the $\nu=\frac{8}{3}$ fractional quantum Hall liquid (see \cite{PetersonEtAl}).
There is also recent research potentially leading to practical recipes for synthesizing and braiding parafermionic  zero modes in fractional quantum Hall liquids pared with $s$-wave superconductors (see \cite{ClarkeAlicea}). These are essentially recipes generalizing the synthesis of Majorana zero modes in the same general set up. In particular it is theoretically feasible that a species of $Z_4$-parafermion zero modes exhibiting $SU(2)_4$ statistics can be realized along these lines (ibid.).
Therefore, $SU(2)_4$ is a promising viable path to universal topological quantum computation.

In this paper we build upon the metaplectic model definition (\cite{CuiWang}) and develop algorithms for effective synthesis of efficient $n$-qutrit circuits over the model. Given a unitary target gate $U$ and an arbitrary small target precision $\varepsilon>0$ a circuit approximating $U$ to precision $\varepsilon$ is considered \emph{efficient} if the number of primitive gates in that circuit is asymptotically proportional to $\log{1/\varepsilon}$. An algorithm for synthesis of such efficient circuit is considered \emph{effective} if it can be completed on a classical computer in expected runtime that is polynomial in $\log{1/\varepsilon}$.

We develop two flavors of an effective general synthesis algorithm.
The first flavor makes a distinction between the parameter approximation cost and entanglement cost in an efficient circuit and produces such circuits with upper complexity bound in
$O(3^{2\,n}\,(\log_3{1/\varepsilon}+2 \, n+\log(\log(1/\varepsilon)))) +O((9\,(2+\sqrt{5}))^n)$. The second flavor makes no such distinction and produces efficient circuits with upper complexity bound in
$O(n\,3^{2\,n}\,(\log_3{1/\varepsilon}+2\, n+\log(\log(1/\varepsilon))))$. While the first flavor of our algorithm is clearly asymptotically superior when $n$ is fixed and $\varepsilon \rightarrow 0$, there is obviously a practical tradeoff threshold between the two flavors when $\varepsilon$ is fixed and $n$ is growing. Leading terms of our upper bounds for both complexities are expressed in terms of specific leading coefficients, not merely in the big $O$ terms.

The technique for the algorithm is number-theoretic in nature.
For any range of practically interesting precisions the circuits produced by our algorithms are significantly more efficient (both in the asymptotical and practical sense) than any hypothetical circuits obtainable by the Dawson-Neilsen version of Solovay-Kitaev algorithm (c.f. \cite{DN}).  Our algorithm designs are more broadly applicable to other classes of weakly-integral anyons involving the quantum dimension of $\sqrt{3}$.

The paper is organized as follows: in section II we make a very brief introduction into fundamental properties of metaplectic anyons, basic encodings and quantum gates; in section III the core circuit synthesis tools are developed, which are meant to reduce Householder reflections to axial reflections, and axial reflections are then described as metaplectic circuits in section IV. In section V and VI two approaches to synthesizing approximation circuits for arbitrary unitaries are introduced and compared, then the top level overview of the synthesis flow is given in section VII. Section VIII concludes the paper with some open problems and future work directions.

\section{Fusion, Braiding, and Basic Gates}

For completeness and readability we start with a very brief introduction into the concepts of braiding and fusion, focussing narrowly on the mathematical and logical side of these concepts. For a more broad exposure the reader is encouraged to look up the available tutorials on the subject such as  \cite{BeverlandPreskill}, \cite{KitaevTop},\cite{Preskill}.

\subsection{Background on fusion and braiding of non-Abelian anyons}

\emph{Anyons} are quasiparticles described by a certain Topological quantum field theory (TQFT), and, axiomatically such theory allows for a finite number of \emph{anyon species} that have distinct values $\{ \alpha, \, \beta, \, \gamma, \ldots \}$ of \emph{topological charge}. For example, one of the simplest theories leads to Fibonacci anyons that allows only two values of charge, $1$ and $\tau$, where $\tau$ is the charge of a non-trivial anyon and $1$ is the charge of \lq\lq no-anyon" or vacuum (\cite{Preskill}).

Given an ensemble of anyons $(a_1,\, a_2, \ldots, a_n)$ the structure of their collective state space $H$ depends on the underlying theory. If we measured collective topological charge of some subsequence of anyons in the ensemble, say $(a_i,\ldots, a_j), 1 \leq i < j \leq n$, the charge would probabilistically assume some value $c \in \{ \alpha, \, \beta, \, \gamma, \ldots \}$. After this is done, the state space of the ensemble is reduced to some smaller subspace
$H_{i,j,c} \subset H$. This is the phenomenon known as \emph{fusion} and the resulting topological charge is often called the \emph{fusion charge}.

Once we have measured out the fusion charge of several subsequences, we may end up with a one-dimensional state space, or, up to a global phase, with one specific state. This state can be characterized by the collection of measurement outcomes, and it is an established practice to represent such collection as a tree, called \emph{fusion tree}.

\begin{figure}[bt]
\includegraphics[width=3.5in]{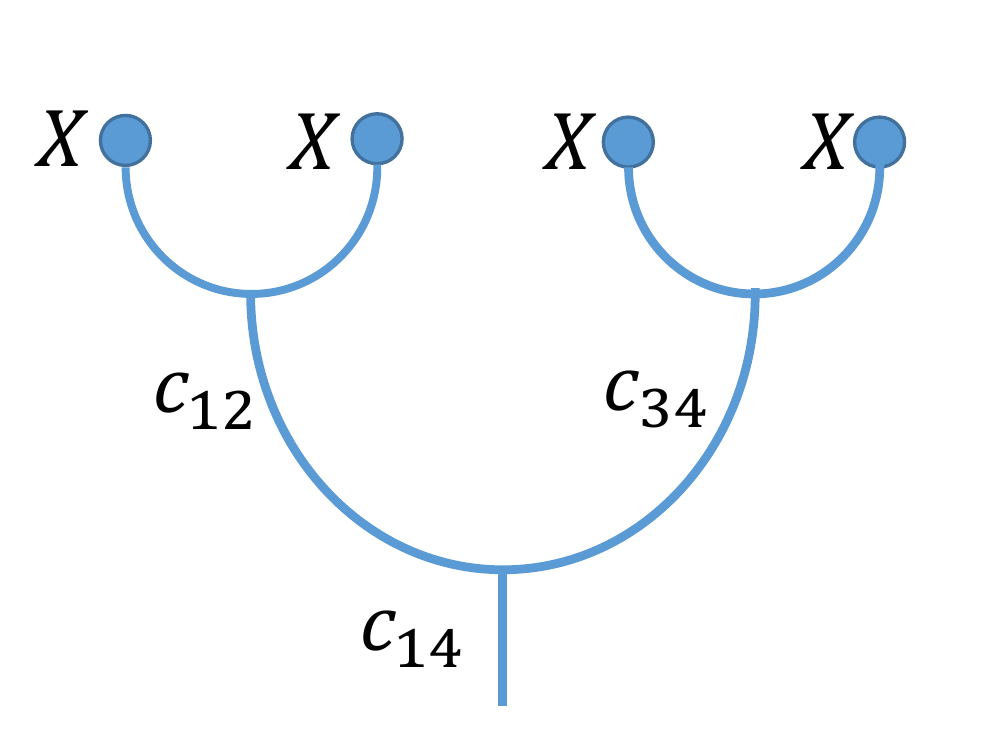}
\caption{\label{fig:fusion:tree:one} A fusion tree for anyonic quartet. The left pair of anyons has the fusion charge $c_{12}$, the right pair has the charge of $c_{34}$ and the overall charge of the quartet is $c_{14}$.}
\end{figure}

As a segway into the next subsection consider the following

\begin{example}
Theory of \emph{metaplectic anyons} allows five values of topological charge: $\{1,Z,X,X',Y\}$. Consider a quartet of anyons of type $X$, i.e. an ensemble $(a_1,a_2,a_3,a_4)$ where each anyon $a_i$ has the charge of $X$. Let us measure the charge $c_{12}$ of the pair $(a_1,a_2)$, then the charge $c_{34}$ the pair $(a_3,a_4)$ and then the charge $c_{14}$ of the entire quartet.
This sequence of measurement is represented by the tree shown in figure  \ref{fig:fusion:tree:one}.
\end{example}

Possible outcomes of fusion charge measurement are dictated by a set of \emph{fusion rules}.

A fusion rule has the following synthax:

\[ a \otimes b = \sum_{c}{N_{a\,b}^c \, c} \]

Here the left hand side stands for fusion of two systems with topological charges $a$ and $b$. The $\sum_{c}$ on the right is a disjunction indexed by all possible outcomes ($c$) of fusing of the two systems. $N_{a\,b}^c$ is the multiplicity of the corresponding outcome $c$. Its meaning being: if a pair of anyons of type $a$ and $b$ happened to have fused to the charge $c$ then their collective state space have reduced to an $N_{a\,b}^c$-dimensional Hilbert space.

\begin{example} \footnote{Incomplete set of rules is sufficient for our purposes.} \label{ex:mp:fusion:rules}
The following three rules are among the fusion rules of the  \emph{metaplectic anyon} theory:
 \begin{equation} \label{eq:mp:rule1}
 \forall c \in \{1,Z,X,X',Y\}, \, c \otimes 1 =  1 \otimes c = c
 \end{equation}
 \begin{equation} \label{eq:mp:rule2}
 X \otimes X = 1 \, + Y
 \end{equation}
 \begin{equation} \label{eq:mp:rule3}
 Y \otimes Y = 1 \, + Z \, + Y
 \end{equation}
\end{example}

To simplify the matters, we allow only multiplicities of $1$ below. Suppose $(a_1,\, a_2, \ldots, a_n)$ is an ensemble of anyons and a sequence of topological charge measurements has been selected that defines a certain fusion tree structure. Then the number of distinct fusion trees that are allowed by the fusion rules is precisely the dimension of the Hilbert state space $H$ of the ensemble, and there exists a basis in $H$ whose elements are labeled by those distinct fusion trees. We will describe a basis like this in the beginning of the next subsection.

While fusion bases are suitable for encoding the quantum information, natively topologically protected gates on such encodings can be derived from \emph{braiding} of non-Abelian anyons. Quite simply, \emph{braiding} is either an exchange of two distinct anyons in an ensemble or moving a single anyon along a complete closed loop. In general, braiding causes a non-trivial unitary action on the state space. By definition of "non-Abelian" these actions caused by different exchanges do not have to commute and the corresponding sets of unitary operators are not simultaneously diagonalizable. This creates the opportunity for building interesting and useful groups of unitary gates from braiding operations. Such groups are not always universal for quantum computation. Braiding happens to be universal in case of Fibonacci anyons (\cite{FreedmanKitaev}), and in the case of metaplectic anyons below universality can be achieved with a little help from measurement.

\subsection{Metaplectic basis and metaplectic circuits} \label{subsec:metaplectic:basis}


Metaplectic anyon model is defined in \cite{CuiWang} as an idealized multi-qutrit model, where each qutrit is encoded using a specific quartet of $SU(2)_4$ anyons and thus an $n$-qutrit quantum register is encoded using $4\,n$ anyons.
The model allows five values of topological charge $\{1,Z,X,X',Y\}$ and the relevant subset of fusion rules has been already listed in example \ref{ex:mp:fusion:rules}. We encode a standard qutrit using a quartet of anyons of type $X$ prepared such that their joint topological charge is $Y$. The corresponding basis states can be labeled by fusion trees such as shown on Fig. \ref{fig:fusion:tree:one} with the $c_{14}=Y$ constraint. It follows from the fusion rules that $(c_{12},c_{34}) \in \{ (1,Y),(Y,1),(Y,Y)\}$.

\begin{figure}[bt]
\includegraphics[width=3.5in]{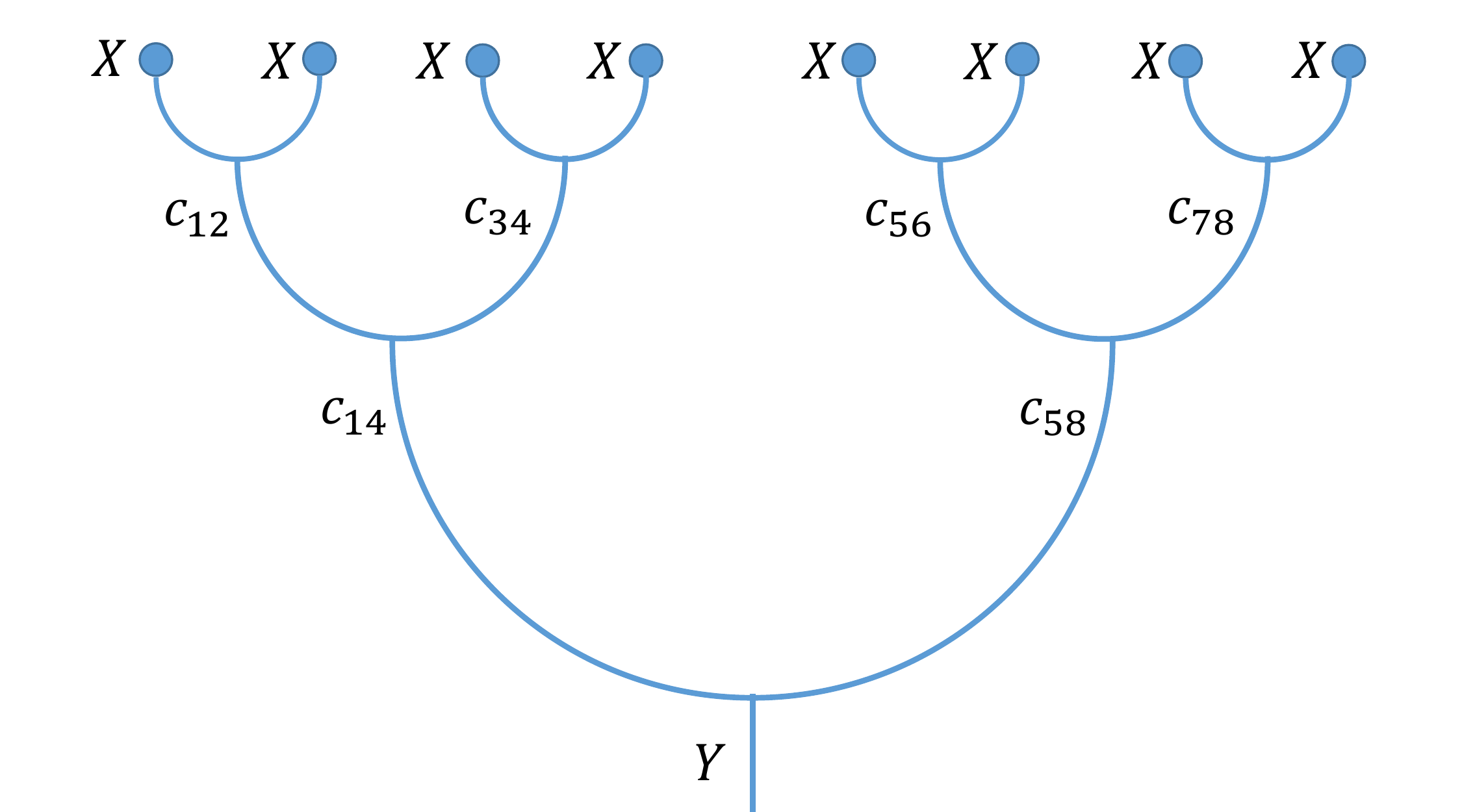}
\caption{\label{fig:fusion:tree:two} A fusion tree for $8$ anyons. The overall charge assumed to be $Y$. There are six fusion charges defining specific fused state.}
\end{figure}

One can do similar analysis on the state space $H$ of $8$ anyons of type $X$ prepared such that their overall topological charge is $Y$. The possible charges that label a basis in $H$ are shown on Fig. \ref{fig:fusion:tree:two}. Under the constraint $c_{14} = c_{58} = Y$ the system is reduced to a state in a $9$-dimensional subspace
$H' \subset H$ with an obvious ad hoc isomorphism of this subspace and $H_3 \otimes H_3$, where $H_3$ is the state space of the standard qutrit. We use $H'$ to encode a standard two-qutrit register and call it the
\emph{computational subspace}. It is not difficult to compute the dimension of $H$. As per fusion rules (\ref{eq:mp:rule1},\ref{eq:mp:rule2},\ref{eq:mp:rule3}) and by combinatorial enumeration, $\dim H = 21$. Thus $H'$ is a proper subspace of co-dimension $12$.

This analysis generalizes in a natural way to multi-qutrit encodings with more than two qutrits.

One should be cognizant that braiding of anyons from quartets encoding different qutrits (cf. Fig. \ref{fig:fusion:tree:two}) does not, in general, preserve the computational subspace, therefore we should only be deriving the multi-qutrit gates from the subgroup of braids that do preserve $H'$.

The actual derivation of primitive gates is beyond the scope of this paper. Below we summarize the designs developed in \cite{CuiWang}.

Consider the one-qutrit fusion basis $\{ |1,Y\rangle,|Y,1\rangle,|Y,Y\rangle\}$ introduced in the beginning of this subsection and relabel it as $\{|0\rangle = - |Y,Y\rangle, \, |1\rangle = |1,Y\rangle, \, |2\rangle = |Y,1\rangle\}$ (the minus sign leads to nicer algebra).

Introduce $\omega=e^{ 2 \pi \, i/3}$ and $\gamma=e^{\pi \, i/12}$.

Braiding of the anyons constituting a qutrit amounts to a finite-image representation of the braid group $B_4$ where the generators of $B_4$ are represented by the following unitaries in the above basis:

$\sigma_1 = \gamma\, diag(1,\omega,1)$, $\sigma_3 = \gamma\, diag(1,1,\omega)$,

$\sigma_2 =  \gamma^3 \, s_2; \, s_2 = \frac{1}{\sqrt{3}} \, \left(\begin{array}{ccc}
              1 & \omega & \omega \\
              \omega& 1 & \omega \\
              \omega & \omega & 1
            \end{array}\right)$

We observe that, up to global phase, $\sigma_1$ is equivalent to the $Q_1=diag(1,\omega,1)$ , $\sigma_3$ is equivalent to $Q_2=diag(1,1,\omega)$ and $\sigma_2$ is equivalent to the $s_2$.

For completeness we also need classical transpositions of the qutrit basis.

By direct computation, $\tau_{0,1} = i \, (\sigma_3 \, \sigma_2 \, \sigma_3)^2; \tau_{0,2} = i \, (\sigma_1 \, \sigma_2 \, \sigma_1)^2$ where $\tau_{j,k}$ is the $|j\rangle \leftrightarrow |k\rangle$ transposition.

Obviously $\tau_{0,1},\,\tau_{0,2}$ generate a faithful representation of the symmetric group $S_3$ on the qutrit and in particular, in terms of notations of \cite{CuiWang} we have $Q_0=\tau_{0,1} \, \sigma_1 \, \tau_{0,1}^{\dagger} = \tau_{0,2} \, \sigma_3 \, \tau_{0,2}^{\dagger}; \,\mbox{INC}= \tau_{0,2} \, \tau_{0,1}; \, \mbox{INC}^{\dagger}= \tau_{0,1} \, \tau_{0,2}$.

In the two-qutrit encoding explained above there is a certain braid explicitly composed out of $92$ anyon exchanges that preserves the computational subspace and, in the  $|j\rangle \otimes |k\rangle, \, j,k = 0,1,2$ basis implements the following  entangler:
\[\mbox{SUM} |j,k\rangle = |j,(j+k) \mod 3\rangle\]
\noindent which is a natural qutrit generalization of the $\mbox{CNOT}$.

It turns out that the gates designed above are not sufficient for the universal quantum computation, as per \cite{Gottesman}. They are known to generate a finite group that is projectively equivalent to the two-qutrit Clifford group.

However the \emph{reflection gate} \footnote{also called $\mbox{Flip}[2]$ gate elsewhere}
\[R_{|2\rangle} = diag(1,1,-1)\]

\noindent is outside the Clifford group and thus provides universality when added to the above gates.

The other two single-qutrit axial reflection operators are classically equivalent to $R_{|2\rangle}$:
$R_{|0\rangle} = \tau_{0,2} \, R_{|2\rangle} \, \tau_{0,2}^{\dagger}; \, R_{|1\rangle}=  \tau_{0,1} \,  R_{|0\rangle} \, \tau_{0,1}^{\dagger}$. We collectively call these reflections the $R$-\emph{gates}.

An  $R$-gate is implemented exactly via a certain \emph{measurement-assisted repeat-until-success circuit with two ancillary qutrits}, as described in \cite{CuiWang}, Lemma 5. The circuit performs a probabilistic protocol that succeeds in $3$ iterations on average (with the variance of the iterations to success equal to $6$). This is the most expensive protocol in our set so far \footnote{Although not nearly as expensive as a magic state distillation}, and for the purposes of resource estimation, we take the following

\emph{Assumption.} The cost of performing any braiding-only (generalized Clifford) gate, including the $\mbox{SUM}$ is trivial compared to the cost of performing an $R$-gate.

Therefore we will be using the $R$-\emph{count} as the measure of the cost of a quantum circuit.

\begin{definition}
A circuit composed of unitary gates introduced in this section is called a \emph{metaplectic circuit}.

The $R$-count of a metaplectic circuit is the minimal number of $R$-gates in all equivalent representations of the circuit.
\end{definition}

All the generators of metaplectic circuits are defined by matrices that are populated with algebraic numbers, and it follows from \cite{BG3} that the generator set is \emph{efficiently universal}, meaning that for any target unitary operator $G$ and small enough desired approximation precision $\varepsilon$ there exists a circuit of depth in $O(\log(1/\varepsilon))$ that approximates $G$ to precision $< \varepsilon$.

The main purpose of this paper is to develop actual classically feasible algorithm for finding such efficient approximating circuits.

\subsection{Useful additional gates.}
Here we expand the metaplectic basis defined in section \ref{subsec:metaplectic:basis} with additional useful gates.

1) $P$ \emph{gates}.

$P_j = I - (\omega^2+1) |j\rangle \langle j| = R_{|j\rangle} Q_j^2, j=0,1,2$

By design a $P$ gate has the $R$-count of 1. Any odd power of a $P$ gate also has the $R$-count of 1 while an even power of a $P$ gate has $R$-count of 0.

Here is a useful observation regarding the cost of $P$ gate sequences:
\begin{observ} \label{observ:two:P}
Any gate in the group generated by $\{P_0, P_1, P_2\}$ can be effectively represented as a product of the global phase in $\{\pm 1\}$ and a circuit of the $R$-count of at most $1$.
\end{observ}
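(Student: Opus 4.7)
The plan is to leverage that $P_0, P_1, P_2$ are diagonal in the computational basis, so the group they generate is abelian and each generator has order $6$. Using the defining identity $P_j = R_{|j\rangle} Q_j^2$ together with the mutual commutation of all the diagonal factors involved, a generic group element $\prod_j P_j^{a_j}$ splits as
\[
\left( \prod_{j=0}^{2} Q_j^{2 a_j}\right)\,\cdot\,\left( \prod_{j=0}^{2} R_{|j\rangle}^{a_j}\right),
\]
cleanly separating a braiding-only factor from a reflection-only factor. The first factor, being built out of the $Q_j$'s, is braiding-only, and hence has $R$-count $0$ by our cost assumption.

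Next I would reduce the reflection factor. Since $R_{|j\rangle}^2 = I$, we may take the exponents $a_j \in \{0,1\}$, leaving only eight cases. The key algebraic identity is $R_{|0\rangle}\, R_{|1\rangle}\, R_{|2\rangle} = -I$, which is immediate from $R_{|j\rangle} = I - 2 |j\rangle\langle j|$ together with $\sum_j |j\rangle\langle j| = I$. It implies $R_{|j\rangle}\, R_{|k\rangle} = -R_{|l\rangle}$ whenever $\{j,k,l\}=\{0,1,2\}$, so any case with two of the $a_j$ equal to $1$ collapses to $-R_{|l\rangle}$, and the case $a_0=a_1=a_2=1$ collapses to $-I$; the remaining cases already contain at most one $R$-gate.

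Combining both factors gives in every case a representation of the form $(\pm 1)\cdot(\text{braiding-only})\cdot R_{|j\rangle}^{\epsilon}$ with $\epsilon \in \{0,1\}$, which is a metaplectic circuit of $R$-count at most $1$ modulated by a sign in $\{\pm 1\}$. Effectiveness is automatic: the whole reduction amounts to computing three exponents modulo $2$ and consulting a table of eight cases. The only genuinely delicate point I would watch is tracking the global sign that accumulates during the collapse of pairs of reflections; because the braiding-only factor contributes no extra sign and all residual phases from the reflection collapse land in $\{\pm 1\}$, the overall global phase ends up in $\{\pm 1\}$ as claimed.
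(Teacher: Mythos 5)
Your proposal is correct and follows essentially the same route as the paper: both split a generic element into a braiding-only diagonal of $\omega$-powers (R-count $0$) times a $\pm1$ sign diagonal, and then observe that the sign diagonal is $\pm I$ or $\pm R_{|j\rangle}$; your identity $R_{|0\rangle}\,R_{|1\rangle}\,R_{|2\rangle}=-I$ is just a compact way of deriving the paper's explicit list $\{\pm I, R_{|j\rangle}, f_{jk}\}$. No gaps.
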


\begin{proof}
Clearly $diag(-1,-1,-1)$ is identity up to the global phase of $(-1)$ and has the $R$-count of $0$.
Similarly each of the gates $f_{01} = diag(-1,-1,1), \, f_{02} = diag(-1,1,-1), f_{12} = diag(1,-1,-1)$ is an $R$ gate up to the global phase of $(-1)$ and has the $R$-count of $1$.

Now, any gate in the group generated by $\{P_0, P_1, P_2\}$ is of the form
$diag((-\omega^2)^{d_0},(-\omega^2)^{d_1},(-\omega^2)^{d_2})=
diag((-1)^{d_0},(-1)^{d_1},(-1)^{d_2}) \times diag(\omega^{2\,d_0},\omega^{2\,d_1},\omega^{2\,d_2})$. The second factor in this product has the $R$-count of $0$ by convention and the first factor is either $\pm I$ or one of the $R$ gates or one of the $f_{01}, \, f_{02},\, f_{12}$ gates and has the $R$-count of at most $1$.
\end{proof}

2) $\mbox{SWAP}$ gate.

While it is intuitively clear that the two-qutrit $\mbox{SWAP}$ gate can be performed by pure
braiding, a direct computation leads to the following

\begin{observ} \label{observ:SWAP}
$\mbox{SWAP} = $

$(\tau_{1,2} \otimes I) \mbox{SUM}_{1,2} \mbox{SUM}_{2,1} \mbox{SUM}_{2,1} \mbox{SUM}_{1,2}$
\end{observ}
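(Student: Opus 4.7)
The plan is to verify the identity by direct action on the standard basis of the two-qutrit computational subspace. Since both sides are unitary and the nine product states $|j,k\rangle$ with $j,k\in\{0,1,2\}$ span that subspace, it suffices to show that the right-hand side maps $|j,k\rangle$ to $|k,j\rangle$ for every pair $(j,k)$.

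First I would unwind the definitions: by the description of $\mbox{SUM}$ given earlier in this section, $\mbox{SUM}_{1,2}|j,k\rangle = |j,(j+k) \bmod 3\rangle$, and symmetrically $\mbox{SUM}_{2,1}|j,k\rangle = |(j+k)\bmod 3,\,k\rangle$. Each of these gates is a permutation of the computational basis whose combinatorial action is an affine map on $(\Z/3\Z)^2$; no phases are introduced, so the whole composition can be tracked purely by arithmetic mod $3$ on the label pair.

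Next I would apply the four SUM factors to $|j,k\rangle$ reading the product right to left. This gives the chain
\[
|j,k\rangle \to |j,\,j{+}k\rangle \to |2j{+}k,\,j{+}k\rangle \to |3j{+}2k,\,j{+}k\rangle = |{-k},\,j{+}k\rangle \to |{-k},\,j\rangle ,
\]
so the combined effect of the four SUM gates is $|j,k\rangle \mapsto |{-k},\,j\rangle$. The key simplification occurs at the second application of $\mbox{SUM}_{2,1}$, where the first coordinate goes from $2j+k$ to $3j+2k \equiv -k \pmod{3}$, eliminating the $j$-dependence of the first qutrit.

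Finally I would observe that, as a permutation of $\{0,1,2\}$, the classical transposition $\tau_{1,2}$ fixes $0$ and exchanges $1$ with $2$, which coincides with negation modulo $3$; hence $(\tau_{1,2}\otimes I)|{-k},j\rangle = |k,j\rangle = \mbox{SWAP}\,|j,k\rangle$, completing the verification. The only subtlety worth flagging is the modular bookkeeping through the middle $\mbox{SUM}_{2,1}\,\mbox{SUM}_{2,1}$ pair, together with the observation that in the computational basis $\tau_{1,2}$ is a genuine permutation matrix, so no hidden global phase corrections enter the identity.
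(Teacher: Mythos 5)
Your verification is correct: tracking the four $\mbox{SUM}$ factors right-to-left as affine maps on $(\Z/3\Z)^2$ gives $|j,k\rangle\mapsto|-k,j\rangle$, and since $\tau_{1,2}$ acts as negation mod $3$ on the first qutrit, the composition is exactly $\mbox{SWAP}$. This is the same route the paper takes — it simply asserts the identity "by direct computation" — so your write-up just supplies the computation explicitly, with the phase-free nature of the classical gates correctly noted.
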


Here $\tau_{1,2}$ is the single-qutrit transposition $|1\rangle \leftrightarrow |2\rangle$ (that can be expressed through already available transpositions as $\tau_{1,2}= \tau_{0,2} \tau_{0,1} \tau_{0,2}$).

By the usual notation convention here and everywhere the $\mbox{SUM}_{j,k}$ in multi-qutrit context is a shorthand for the two-qutrit sum gate applied to $j$-th qutrit as the control and $k$-th qutrit as the target (tensored with the identity gates on all other qutrits).

\bigskip

3) Axial reflection.

The following is key for our circuit synthesis:

\begin{definition}
Consider an integer $n\geq 1$ and let $|j\rangle, j = 0,\ldots,3^n-1$ be an element of standard $n$-qutrit basis.

The operator $R_{|j\rangle} = I^{\otimes n} -2 \, |j\rangle \langle j|$ is called an $n$-qutrit  \emph{axial reflection} (operator).
\end{definition}

Clearly it is indeed a reflection w.r.t. the hyperplane orthogonal to $|j\rangle$.

\section{Exact Single-Qutrit and Approximate Two-Level States.}

Consider the field of \emph{Eisenstein rationals} $\Q(\omega)$ which is a quadratic extension of $\Q$.
$\Z[\omega]$ is its integers ring called the ring of \emph{Eisenstein integers}. $\Z[\omega]$ has the group of units isomorphic to $\Z_6$ \ generated by $-\omega^2 = 1 + \omega$.

The two core tools needed for effective synthesis of metaplectic circuits are described in Lemmas \ref{lem:core:short:column} and \ref{lem:multi:two:level} below.

\begin{lemma} ["Short column lemma"] \label{lem:core:short:column}
Consider a unitary single-qutrit state $|\psi\rangle = (u \, |0\rangle + v \, |1\rangle + w \, |2\rangle)/\sqrt{-3}^L$ where
$u,v,w \in \Z[\omega]; L \in \Z$.

1) There is an effectively synthesizable metaplectic circuit $c$ with the $R$-count at most $L+1$ such that $c \, |\psi\rangle \in \{|0\rangle, |1\rangle, |2\rangle\}$.

2) The classical cost of finding such a circuit is linear in $L$.

\end{lemma}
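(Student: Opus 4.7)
My plan is to prove this by induction on the denominator exponent $L$, peeling off one factor of $\sqrt{-3}$ at each iteration. The first observation I would lean on is the algebraic fact $\Z[\omega]/(\sqrt{-3}) \cong \mathbb{F}_3$, in which $\omega \equiv 1$. Unitarity of $|\psi\rangle$ gives the norm equation $|u|^2 + |v|^2 + |w|^2 = 3^L$, and since $|x|^2 \bmod 3$ equals $0$ or $1$ according to whether $\sqrt{-3}$ divides $x$, for $L \geq 1$ either all three of $u,v,w$ are divisible by $\sqrt{-3}$ or none of them is. This dichotomy drives the inductive step.

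In the ``all divisible'' case I would simply factor $\sqrt{-3}$ out of each coordinate; the exponent drops to $L-1$ with no gate applied. In the ``none divisible'' case each residue lies in $\{1,-1\} \subset \mathbb{F}_3$, and I would first apply a product of $P_j$-gates chosen to equalize the three residues. By Observation \ref{observ:two:P}, any such product can be realized with $R$-count at most one (up to global phase). Next I would apply $s_2$, which is free of $R$-cost; because $\omega \equiv 1 \pmod{\sqrt{-3}}$, each new coordinate of $s_2|\psi\rangle$ is congruent to $u+v+w \equiv \pm 3 \equiv 0 \pmod{\sqrt{-3}}$, so one factor of $\sqrt{-3}$ can be pulled out of the entire numerator. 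Combining this extracted factor with the $1/\sqrt{3}$ normalization of $s_2$ yields, up to a global phase, a new state of denominator exponent $L-1$, completing the step at a cost of at most one $R$-gate.

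For the base case $L = 0$, unitarity forces exactly one of $u,v,w$ to be a unit of $\Z[\omega]$ and the other two to vanish; a transposition $\tau_{i,j}$ moves the nonzero entry onto a target axis and an appropriate $Q_j$-power absorbs the residual unit phase, all at zero $R$-cost. Summing over at most $L$ reduction steps plus the base case yields $R$-count at most $L$, comfortably inside the claimed $L+1$ bound. The classical workload per step is constant --- reducing three Eisenstein integers modulo $\sqrt{-3}$, selecting the correct $P_j$-combination, applying $s_2$, and dividing out the common $\sqrt{-3}$ --- so the total classical cost is $O(L)$, giving part (2). The main subtlety I expect to require care is the bookkeeping of global phases and the precise interaction between the $1/\sqrt{3}$ factor in $s_2$ and the $\sqrt{-3}$ extracted from the numerator; this is only a constant adjustment per iteration but must be tracked to certify that the numerator remains in $\Z[\omega]$ after each reduction.
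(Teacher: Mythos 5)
There is a genuine gap in the inductive step: working modulo $\sqrt{-3}$ is not enough, and your exponent bookkeeping is off by one factor of $\sqrt{-3}$. After applying $s_2$ the coefficient of $|j\rangle$ is $N_j/(\sqrt{3}\,\sqrt{-3}^{\,L})$ where $N_j$ is the corresponding $\Z[\omega]$-combination of $u,v,w$; since $\sqrt{3}$ already eats one factor of $\sqrt{-3}$ (up to the unit $i$), knowing only $\sqrt{-3}\mid N_j$ leaves you with a state of denominator exponent $L$, not $L-1$. To actually drop the exponent you need $3\mid N_j$, i.e.\ a congruence modulo $3=-(\sqrt{-3})^2$, and your equalization of residues in $\Z[\omega]/(\sqrt{-3})\cong\mathbb{F}_3$ cannot certify that: modulo $\sqrt{-3}$ the $P_j$ gates only flip signs and $\omega\equiv 1$, so the needed twist is invisible. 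A concrete counterexample to your step is $|\psi\rangle=(|0\rangle+|1\rangle+|2\rangle)/\sqrt{-3}$: all three residues mod $\sqrt{-3}$ are already equal, so your procedure applies no $P$ gates, yet $(1,1,1)^T$ is an eigenvector of $s_2$ (each $N_j=1+2\omega$ is divisible by $\sqrt{-3}$ but not by $3$), so $s_2$ leaves the state unchanged up to phase and the exponent never decreases; as stated, your loop need not terminate.

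The paper's proof repairs exactly this point by working in the nine-element ring $\Z[\omega]/3$ rather than in $\mathbb{F}_3$: when all three norms are $\equiv 1 \pmod 3$ the residues of $u,v,w$ lie in a single orbit of the six-element unit group (Observation \ref{observ:orbits}), so one can choose powers of $P_1,P_2$ (total $R$-count at most $1$ by Observation \ref{observ:two:P}) making $v\equiv w\equiv \omega^2 u \pmod 3$ --- note the $\omega^2$ twist, which is essential and invisible mod $\sqrt{-3}$ --- and then each numerator of $s_2|\psi\rangle$ is divisible by $3$, giving a genuine drop from $L$ to $L-1$. Your treatment of the ``all divisible'' case and of the dichotomy coming from $|x|^2\bmod 3\in\{0,1\}$ matches the paper and is fine. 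A minor further quibble: in the base case $L=0$ the leftover unit $(-\omega^2)^d$ may contain a sign $-1$ that $Q_j$-powers cannot remove, which is why the paper uses a $P$ gate there and quotes the bound $L+1$ rather than $L$; if you insist on landing exactly on a basis vector (not merely up to global phase), your claimed zero-cost base case and total bound $L$ are not justified.
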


Before proving the lemma, we need to handle one special case and make one algebraic observation.

\begin{lemma} [Special case.] \label{lem:zero:exponent}
If $|\psi\rangle$ is a unitary state the coefficients of which in computational basis are Eisenstein integers, then

1) One and only one coefficient is non-zero;

2) This non-zero coefficient is an Eisenstein integer unit;

3) $|\psi\rangle$ can be reduced to one of the computational basis states using at most one $P$ gate.
\end{lemma}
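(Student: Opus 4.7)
The plan rests on the arithmetic of the Eisenstein integer ring $\Z[\omega]$: every $z = a + b\,\omega \in \Z[\omega]$ satisfies $|z|^2 = a^2 - a\,b + b^2$, which is a non-negative rational integer, and which equals $0$ precisely when $z=0$ and equals $1$ precisely when $z$ is one of the six units $\{\pm 1, \pm \omega, \pm \omega^2\}$. I would therefore first record this norm identity as the only ingredient from number theory that is actually needed.

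Now the unitarity hypothesis is $|u|^2 + |v|^2 + |w|^2 = 1$. Since the three summands are non-negative rational integers, exactly one of them equals $1$ and the other two vanish; this simultaneously yields claim (1) (only one coefficient is nonzero) and claim (2) (the nonzero coefficient has norm $1$, hence is a unit in $\Z[\omega]$).

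For claim (3), let $j \in \{0,1,2\}$ be the index of the nonzero coefficient and write that coefficient as $\zeta$, a unit of $\Z[\omega]$. The six units form a cyclic group of order $6$ generated by $-\omega^2 = 1+\omega$, so $\zeta = (-\omega^2)^{k}$ for some $k \in \{0,1,\ldots,5\}$. The gate $P_j = I - (\omega^2+1)\,|j\rangle\langle j|$ acts on the $j$-th coordinate as multiplication by $-\omega^2$ and trivially on the other two coordinates, hence $P_j^{\,6-k}\,|\psi\rangle = |j\rangle$. Because $P_j^{\,6-k}$ lies in the group generated by $\{P_0,P_1,P_2\}$, Observation \ref{observ:two:P} guarantees that it is implementable with $R$-count at most $1$ (up to a global sign), i.e.\ using at most one $P$ gate as claimed.

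The only conceptual step is recognising that $|z|^2 \in \Z_{\geq 0}$ for Eisenstein integers; once that is in hand, the rest is bookkeeping with the six units and the definition of $P_j$. I do not anticipate a significant obstacle.
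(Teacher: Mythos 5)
Your proposal is correct and follows essentially the same route as the paper: non-negativity and integrality of the Eisenstein norm forces exactly one coefficient to be a unit $(-\omega^2)^k$, which is then cleared by an appropriate power of the corresponding $P_j$ gate (the paper, like you, counts such a power as "at most one $P$ gate" since its $R$-count is at most $1$). The only cosmetic difference is that the paper phrases the argument for an arbitrary number of coefficients $\psi_0,\ldots,\psi_N$ rather than just three, but the argument is identical.
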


\begin{proof}
If $\psi_0, \ldots, \psi_N$ are the coefficients, then $\sum_{j=0}^{N} |\psi_j|^2 = 1$. Since for any $j$ , $|\psi_j|^2$ is a non-negative integer, all the coefficients, except one, some $\psi_{j_*}$, must be zeros, while $|\psi_{j_*}|^2=1$ and hence $\psi_{j_*}$ is a unit in $\Z[\omega]$. Therefore $\psi_{j_*} = (-\omega^2)^ d$ and
$(-\omega^2)^{-d \mod 6} \, \psi_{j_*} = 1$. Hence it is easy to find a $P$ gate of the form $G=I\otimes \ldots P_j^{-d \mod 6} \ldots \otimes I$ such that $G |\psi\rangle$ is a standard basis vector.
\end{proof}

Let us introduce the finite ring $\Z_3[\omega] = \Z[\omega]/(3\,\Z[\omega])$. This is a ring with exactly nine elements
$\{0,1,2,\omega, 2\, \omega, 1+\omega, 1+ 2\, \omega, 2+\omega, 2+ 2\, \omega\}$.

Let $\rho : \Z[\omega] \rightarrow \Z_3[\omega]$ be the natural epimorphism. By construction, its kernel consists of elements that are divisible by $3$.

Both the complex conjugation $*: \Z[\omega] \rightarrow \Z[\omega]$ and the norm map $|*|^2 : \Z[\omega] \rightarrow \Z$ can be consistently factored down to the
morphism $\tilde{*}: \Z_3[\omega] \rightarrow \Z_3[\omega]$ and the reduced norm map $\tilde{|*|^2} : \Z_3[\omega] \rightarrow \Z_3$
(since both $\rho \, *$ and $|*|^2 \mod 3$ annihilate the kernel of $\rho$).

For the benefit of several future constructions we need to analyze the action of the group of Eisenstein units $EU=\{-\omega^2\}$ on $\Z_3[\omega]$.

\begin{observ} \label{observ:orbits}
$\Z_3[\omega]$ is split into three orbits under the action of the group $EU$ as follows:

0) The one-element orbit $O_0$ of $0$; Note that $|0|^2=0$

1) The six-element orbit $O_1$ of $1$; Note that for any $z \in O_1$, $|z|^2 = 1 \mod 3$.

2) The two-element orbit $O_2$ of $1+2\,\omega$; Note that for any $z \in O_2$, $|z|^2 = 0 \mod 3$.
\end{observ}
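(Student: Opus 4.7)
The plan is to run a direct enumeration, exploiting that $EU$ has order six and $\Z_3[\omega]$ has only nine elements, so the orbit structure is essentially forced by counting once two of the three orbits are identified.

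First I would make $EU$ explicit: since $-\omega^2 = 1+\omega$ has multiplicative order $6$, $EU = \{\pm 1,\pm\omega,\pm\omega^2\}$. Reducing modulo $3$ and using $\omega^2 \equiv -1-\omega \equiv 2+2\omega$, I get the six values
\[
\{1,\,2,\,\omega,\,2\omega,\,2+2\omega,\,1+\omega\}\subset \Z_3[\omega].
\]
A quick inspection shows these are pairwise distinct, so the action of $EU$ on $1$ already produces a six-element orbit $O_1$, while $\{0\}$ is obviously fixed and forms $O_0$.

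Next I would locate the remaining two elements, $1+2\omega$ and $2+\omega$, and verify they form a single orbit of size $2$. The key computation is that multiplication by $\omega$ fixes $1+2\omega$ modulo $3$: $\omega(1+2\omega)=\omega+2\omega^2 \equiv \omega+2(-1-\omega)=-2-\omega\equiv 1+2\omega$. Hence the stabilizer of $1+2\omega$ in $EU$ contains the order-$3$ subgroup $\langle\omega\rangle$, so the orbit has size $|EU|/3 = 2$. Multiplying by $-1$ gives $-(1+2\omega)\equiv 2+\omega$, the missing element; call this orbit $O_2$. At this point the three orbits account for $1+6+2 = 9 = |\Z_3[\omega]|$ elements, so the partition is complete.

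Finally I would check the norm claims. For $O_0$ the statement is immediate. For $O_1$ it suffices to note that every element of $EU$ has complex norm $1$, so $|z|^2 \equiv 1 \pmod 3$ for the representative $z=1$, and the $EU$-action preserves $|\cdot|^2$ exactly (not merely modulo $3$) because elements of $EU$ have norm $1$. For $O_2$ I would compute on the representative: $|1+2\omega|^2=(1+2\omega)(1+2\bar\omega)=(1+2\omega)(1+2\omega^2)=1+2(\omega+\omega^2)+4\omega^3 = 3\equiv 0\pmod 3$, and again $EU$-invariance of the norm gives the same value on $2+\omega$. No step here is an obstacle — the only mildly non-obvious item is the stabilizer identity $\omega(1+2\omega)\equiv 1+2\omega\pmod 3$, which geometrically reflects the fact that $1+2\omega \equiv 1-\omega \pmod 3$ lies in the ramified prime $(1-\omega)$ over $3$ in $\Z[\omega]$.
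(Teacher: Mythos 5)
Your proof is correct, and it is essentially the paper's own argument: the paper simply asserts that the orbit split "is established by direct computations," and you have carried out exactly that enumeration (with a small stabilizer-counting shortcut for $O_2$ and the norm checks on representatives, which is legitimate since the paper has already noted that $|\cdot|^2 \bmod 3$ descends to $\Z_3[\omega]$). Nothing further is needed.
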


This split is established by direct computations.

\begin{proof} (Of the "Short column lemma").
We will be proving the lemma by induction on $L$. For $L=0$ the claim follows from the lemma \ref{lem:zero:exponent}.

Consider a state with denominator exponent $L>0$.

Note that $\sqrt{-3}=1+2\, \omega$ and thus it is an Eisenstein integer. It follows, of course that $3 = -(1+2\,\omega)^2$ and thus $3$ is divisible by both
$1+2\,\omega$ and $(1+2\,\omega)^2$ in $\Z[\omega]$.

The state $|\psi\rangle$ is immediately reducible to a state of the form $1/\sqrt{-3}^{L-1} (u'\, |0\rangle+ v'\, |1\rangle + w'\, |2\rangle)$ if each of $u,v,w$ is divisible by $1+2\,\omega$
 and it is immediately reducible to a state of the form $1/\sqrt{-3}^{L-2}  (u''\, |0\rangle+ v''\, |1\rangle + w''\, |2\rangle)$ if each of $u,v,w$ is divisible by $3$ in $\Z[\omega]$.

From the unitariness condition on $|\psi\rangle$ we have $|u|^2+|v|^2+|w|^2=3^L$.
Given $L>0$, then $3^L \mod 3=0$ and thus $(|u|^2 \mod 3)+(|v|^2 \mod 3)+(|w|^2 \mod 3)=0$. By direct computation we check, however, that for any $z \in \Z[\omega]$, $|z|^2 \mod 3$ is either $0$ or $1$. By simple exclusion argument for $(|u|^2 \mod 3)+(|v|^2 \mod 3)+(|w|^2 \mod 3)=0$ to hold, either all the summands must be $0$ or all the summands must be $1$.

Let us distinguish the two cases.

Case $0$: $(|u|^2 \mod 3)=(|v|^2 \mod 3)=(|w|^2 \mod 3)=0$

As per the above observation \ref{observ:orbits} the residues $\rho(u), \rho(v), \rho(w)$ belong to the union of orbits $O_0$ and $O_2$.

In the edge case when all three belong to the orbit $O_0$ , each of the $u,v,w$ is divisible by $3$. As per earlier remark, $|\psi\rangle$ is reducible to the case of denominator
exponent $L-2$ and we do not need to apply any gates for this reduction.

More generally within the case $0$ each of the residues $\rho(u), \rho(v), \rho(w)$ is divisible by $\rho(1+2\,\omega)$. However if $\rho(z)$ is divisible by  $\rho(1+2\,\omega)$ then $z$ is divisible by $1+2\,\omega$ in the $\Z[\omega]$. Indeed , the divisibility of the residue implies that $z=(1+2\,\omega)\, z' + 3\,z'', \, z', z'' \in \Z[\omega]$,
but, as we noted, $3$ is divisible by $1+2\,\omega$ in the $\Z[\omega]$. Thus the general subcase allows reduction to the denominator
exponent $L-1$ without application of any gates.

Case $1$ : $(|u|^2 \mod 3)=(|v|^2 \mod 3)=(|w|^2 \mod 3)=1$.

We are going to find a short circuit $c_L$ of $R$-count at most $1$ such that $c_L \, |\psi\rangle$ is reduced to a case with denominator exponent at most $L-1$. (This would complete the induction step.)

Suppose first that $\rho(v)= \rho(w) = \omega^2 \, \rho(u) \in \Z_3[\omega]$,  which means that
$v=\omega^2 \,  u + 3 \, v', \, w = \omega^2 \, u  + 3 \, w'$ for some $v',w' \in \Z[\omega]$ and it follows that
$s_2 \, |\psi\rangle = (-(u + \omega \, v' + \omega \, w')\, |0\rangle - (v' + \omega \, w') \, |1\rangle - (\omega\, v'+w') \, |2\rangle))/\sqrt{-3}^{L-1}$.

Thus, in this particular special case the denominator exponent is reduced to $L-1$ by application of the single $s_2$ gate that has $R$-count $0$.

In general, since $(|\omega^2 \, u|^2 \mod 3)=(|u|^2 \mod 3)=(|v|^2 \mod 3)=(|w|^2 \mod 3)=1$, then  $\omega^2 \, \rho(u),\rho(v),\rho(w)$ must belong to the same orbit $O_1$ of the unit group $EU$. This means, in particular we can effectively find integers $d_v, d_w$ such that $\omega^2 \,  \rho(u)=\rho((-\omega^2)^{d_v}\,v)=\rho((-\omega^2)^{d_w}\,w)=r \in \Z_3[\omega]$.  Hence the short circuit $c_L = s_2\, P_1^{d_v} \, P_2^{d_w}$ reduces the state as shown .
As per the observation \ref{observ:two:P}, $P_1^{d_v} \, P_2^{d_w}$ in this circuit is equivalent to a circuit of $R$-count at most $1$ up to the possible global phase of $\pm 1$.

This completes the induction step.

\end{proof}

\begin{example}
Consider unitary column $|K\rangle = ((2 + i \,\sqrt{3})\,|0\rangle + |1\rangle+ |2\rangle )/3$.

$|K\rangle$ is reduced to basis state at $R$-count of $2$ as follows: $s_2 \,  R_{|0\rangle}\, Q_1^2\, Q_2^2 \, s_2 \, R_{|0\rangle}\, |K\rangle = |0\rangle$

Note that

$s_2 \,  R_{|0\rangle}\, Q_1^2\, Q_2^2 \, s_2 \, R_{|0\rangle} = -\omega \, \sigma_2 \,  R_{|0\rangle}\, \sigma_1^2 \, \sigma_3^2 \, \sigma_2 \,  R_{|0\rangle}$.

\end{example}

Below we present the method suggested by the lemma  \ref{lem:core:short:column} in algorithmic format

\begin{algorithm}[H]
\caption{Reduction of a short unitary column}
\label{alg:short:column:reduction}
\algsetup{indent=2em}
\begin{algorithmic}[1]
\REQUIRE{$L \in \Z$, $u,v,w \in \Z[\omega]$}
\STATE { $ret \gets \langle \mbox{empty} \rangle$ }
\WHILE {$L > 0$}
\STATE{$\{\nu u, \nu v, \nu w \}=\{ |u|^2, |v|^2, |w|^2\} \mod{3}$}
\IF {$\nu u =\nu v = \nu w = 1$}
\STATE{Find $d_v, d_w \in \{-2,-1,0,1,2,3\}$ such that}
\STATE{ $ \omega^2 \, u \equiv (-\omega^2)^{d_v} v \equiv  (-\omega^2)^{d_w} w \mod 3 $}
\STATE{$\{u,v,w\} \gets \{u,(-\omega^2)^{d_v} v,(-\omega^2)^{d_w} w\}$}
\STATE {$v' \gets (v - \omega^2\, u)/3; w' \gets (w - \omega^2\, u)/3$}
\STATE{$\{u,v,w\} \gets$ }
\STATE{$ \{- (u+ \omega \, v'+\omega \, w'), -(v' + \omega\, w'), -(\omega\, v' + w') \}$}
\STATE {$ret \gets s_2\, P_1^{d_v} \, P_2^{d_w} \, ret$}
\ELSE
\STATE {$\{u,v,w\} \gets \{u,v,w\}/(2\, \omega+1)$}
\ENDIF
\STATE{$L \gets L-1$}
\ENDWHILE
\STATE{ Implied $L=0$; Only one of $u,v,w$ is non-zero.}
\STATE{ Find classical $g$ s. t. $g (u |0\rangle + v |1\rangle + w |2\rangle) = u' |0\rangle$}
\STATE{ Find $d \in \{-2,-1,0,1,2,3\}$ such that $(-\omega^2)^d = u'$ }
\RETURN { $P_0^{-d} \, g \, ret$ }

\end{algorithmic}
\end{algorithm}

\begin{lemma} \label{lem:core:two:level:state}

Consider a "two-level" unitary single-qutrit state $|\phi\rangle =x \, |0\rangle + y \, |1\rangle + z \, |2\rangle$ where $x \, y \, z = 0$ and let $\varepsilon$ be an arbitrarily small positive number.

1) There is a family of effectively synthesizable states of the form
$|\psi_{\varepsilon}\rangle = (u_{\varepsilon} \, |0\rangle + v_{\varepsilon} \, |1\rangle + w_{\varepsilon} \, |2\rangle)/\sqrt{-3}^{L_{\varepsilon}}$;
$u_{\varepsilon},v_{\varepsilon},w_{\varepsilon} \in \Z[\omega]; L_{\varepsilon} \in \Z$ such that $|\psi_{\varepsilon}\rangle$ is an $\varepsilon$-approximation of $|\phi\rangle$ and
$L_{\varepsilon} \leq 4\,\log_3(1/\varepsilon)+O(\log(\log(1/\varepsilon)))$.

2) The expected average classical cost of finding each $|\psi_{\varepsilon}\rangle$ is polynomial in $\log(1/\varepsilon)$.

\end{lemma}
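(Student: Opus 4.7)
The plan is to realize $|\psi_\varepsilon\rangle = (u|0\rangle + v|1\rangle + w|2\rangle)/\sqrt{-3}^L$ with $u,v,w \in \Z[\omega]$ by casting the approximation problem as a Diophantine one, in the spirit of Ross--Selinger / Kliuchnikov--Maslov--Mosca exact-synthesis, adapted from Gaussian to Eisenstein integers. Without loss of generality assume $z=0$ (the other two cases follow by precomposing with a classical transposition and replaying the argument). Unitarity of $|\psi_\varepsilon\rangle$ is exactly the Diophantine constraint $|u|^2 + |v|^2 + |w|^2 = 3^L$, and the distance bound $\|\,|\psi_\varepsilon\rangle - |\phi\rangle\,\| < \varepsilon$ unfolds into three geometric conditions: $u$ lies in a disk of radius $O(\varepsilon\cdot 3^{L/2})$ around $x\sqrt{-3}^L$, $v$ lies in a similar disk around $y\sqrt{-3}^L$, and $w$ lies in a small disk around $0$.

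I would first fix $L = \lceil 4\log_3(1/\varepsilon) + c\log\log(1/\varepsilon)\rceil$ for a constant $c$ to be set in the density analysis, then enumerate candidate pairs $(w,u)\in \Z[\omega]^2$ satisfying the disk constraints for their respective targets. The $w$-disk contains on the order of $\varepsilon^2 \cdot 3^L$ Eisenstein lattice points, and similarly for $u$, so there is a large supply of candidates. Each pair collapses the search for $v$ to a single norm equation
\[
|v|^2 \;=\; 3^L - |u|^2 - |w|^2 \;=:\; N,
\]
together with an argument condition pinning $v$ into the prescribed disk around $y\sqrt{-3}^L$. Solvability of this norm equation in $\Z[\omega]$ is a purely arithmetic condition on $N$: every rational prime $p\equiv 2\pmod 3$ dividing $N$ must occur to an even power. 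When solvable, $v$ can be recovered in expected polynomial time in $\log N$ by factoring $N$ over $\Z$ and then applying Cornacchia's algorithm to the form $a^2-ab+b^2=p$ for each split prime factor, followed by multiplication of the lifts and unit-adjustment to meet the argument constraint.

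The main obstacle is the density argument: showing that among a sequence of roughly $O(\log\log(1/\varepsilon))$ candidate pairs $(w,u)$, a positive fraction yield an $N$ that is a norm from $\Z[\omega]$ and whose corresponding $v$ actually falls into the target disk. I would combine three ingredients: (a) the standard asymptotic density $\Theta(1/\sqrt{\log N})$ of Eisenstein norms among integers up to $N$, (b) equidistribution of the argument of $v$ modulo the Eisenstein unit group once $|v|^2$ is fixed and $N$ is sufficiently composite, and (c) the $c\log\log(1/\varepsilon)$ slack in $L$, which is precisely calibrated to push the expected number of candidates needed for a hit above one. The overall expected classical running time is dominated by the factoring step inside the loop and is polynomial in $\log N = O(\log(1/\varepsilon))$ (heuristically, or unconditional under GRH), yielding both claims of the lemma.
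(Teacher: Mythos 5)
There is a genuine gap, and it lies in which component you hand to the norm equation. You enumerate lattice points $(w,u)$ for the zero component and for $x$, and then ask the norm equation $|v|^2 = 3^L-|u|^2-|w|^2 = N$ to produce a $v$ that \emph{also} lands in a prescribed disk of radius $O(\varepsilon\,3^{L/2})$ around $y\sqrt{-3}^L$. For a fixed $N$ the solutions of $|v|^2=N$ form a rigid finite set on the circle of radius $\sqrt{N}$ (roughly $6\cdot 2^{\ell}$ points, $\ell$ the number of split prime factors), and your disk cuts out an arc of angular fraction $\sim\varepsilon$; unless $N$ had $\Omega(L)$ distinct split prime factors (astronomically rare among your candidates), no solution will fall in that arc, and the six Eisenstein units only rotate by multiples of $\pi/3$, so "unit-adjustment to meet the argument constraint" cannot help. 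Consequently you would need on the order of $1/\varepsilon$ candidate pairs, not the $O(\log\log(1/\varepsilon))$ you budget, and your ingredient (b) has no force. The paper's proof avoids this by reversing the roles: it enumerates lattice approximations $u,v$ of the two \emph{constrained} amplitudes $x,y$ (to precision $\delta=\varepsilon^2/5$, with the side condition $|u|\le|x|\sqrt{3}^k$, $|v|\le|y|\sqrt{3}^k$ so that $N\ge 0$), and solves the norm equation for the component corresponding to the zero amplitude, which carries no positional constraint at all — the sufficient fidelity condition $2\,\mathrm{Re}\langle\phi|\psi\rangle>2-\varepsilon^2$ does not involve $w$, and unitarity automatically forces $|w|/\sqrt{3}^k=O(\varepsilon)$. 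This is also where the constant $4$ in $L_\varepsilon\le 4\log_3(1/\varepsilon)+O(\log\log(1/\varepsilon))$ comes from (precision $\varepsilon^2$ on two components), which your accounting with disks of radius $O(\varepsilon\,3^{L/2})$ does not reproduce.

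A second, smaller gap is the runtime claim: recovering $v$ by "factoring $N$ over $\Z$" is not an expected polynomial-time step, even under GRH. The paper sidesteps factoring by only accepting candidates whose $N$ is \emph{easily solvable} — prime with $N\equiv 1\pmod 3$ (solved by Tonelli--Shanks plus a GCD in $\Z[\omega]$) or semi-smooth — and it postulates a density hypothesis (Conjecture \ref{conj:norm:eq:solvability}) that such candidates occur with probability $\Omega(1/k)$ among the randomized choices; the $O(\log\log(1/\varepsilon))$ slack in $L$ is then used to guarantee enough candidate pairs $(u,v)$, not to tune arguments of norm-equation solutions. If you rework your argument with the paper's assignment of variables and restrict to easily solvable $N$ (or state an analogous density hypothesis), the rest of your outline — lattice-point counting, Cornacchia-style solution of the prime norm equations, and the $O(\log(1/\varepsilon))$ candidate budget — does go through.
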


A proof of this lemma is found in Appendix \ref{sec:single:qutrit:approx}. The proof is very technical. It combines  elementary geometry with rather profound number theory, which is based on a mild number-theoretical hypothesis (conjecture \ref{conj:norm:eq:solvability}).

It follows from the two lemmas that a two-level unitary state can be prepared with precision $\varepsilon$ from a standard basis state using a metaplectic circuit of $R$-count at most $4\,\log_3(1/\varepsilon)+O(\log(\log(1/\varepsilon)))$ and in fact this readily generalizes to multiple qutrits as follows:

\begin{lemma} ["Two-level approximation lemma"] \label{lem:multi:two:level}
Consider an integer $n\geq 1$ and let $|\phi\rangle$ be a unitary $n$-qutrit state that has at most two non-zero components in the standard $n$-qutrit basis.

For arbitrarily small $\varepsilon>0$

1) There is an effectively synthesizable metaplectic circuit $c$ with the $R$-count at most $4\,\log_3(1/\varepsilon)+O(\log(\log(1/\varepsilon)))$ such that $c \, |0\rangle$ is an $\varepsilon$-approximation of $|\phi\rangle$.

2)  The expected average classical cost of finding  such a circuit is polynomial in $\log(1/\varepsilon)$.
\end{lemma}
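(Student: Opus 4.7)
The plan is to reduce the multi-qutrit problem to the single-qutrit two-level state approximation of Lemma \ref{lem:core:two:level:state}, using only classical (zero-$R$-count) gates as the bridge. Write $|\phi\rangle = x|j_1\rangle + y|j_2\rangle$ with $j_1, j_2 \in \{0,\ldots,3^n-1\}$ distinct. If one of $x,y$ vanishes, the state equals a basis vector up to a global phase and a classical permutation sending $|0\rangle^{\otimes n} \mapsto |j_1\rangle$ prepares it exactly with $R$-count $0$; so assume henceforth $xy \neq 0$.

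First I would construct a classical Clifford circuit $D$ of $R$-count $0$ with $D|j_1\rangle = |0\rangle^{\otimes n}$ and $D|j_2\rangle = |\vec e\rangle$, where $\vec e$ is a standard basis state whose base-$3$ digits are all zero except for a single entry $k \in \{1,2\}$ at some position $p$. Stage (i): on each qutrit $i$ apply a single-qutrit element of $S_3$ sending $(j_1)_i \mapsto 0$; this takes $|j_1\rangle \to |0\rangle^{\otimes n}$ and $|j_2\rangle \to |\vec d\rangle$ with $\vec d \neq 0$. Stage (ii): while $\vec d$ has more than one nonzero entry, pick any two positions $i\neq j$ with $d_i,d_j \neq 0$ and apply $\mbox{SUM}_{i,j}$ repeatedly; since each application sends $d_j \mapsto d_i + d_j \bmod 3$, iterating $m$ times gives $d_j + m\,d_i \bmod 3$, and choosing $m \equiv -d_j d_i^{-1} \pmod 3$ (which lies in $\{1,2\}$ because $d_i \in \mathbb{F}_3^\times$) zeros $d_j$ while fixing $|0\rangle^{\otimes n}$. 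Each iteration strictly reduces the Hamming weight of $\vec d$, so the loop terminates with exactly one nonzero coordinate, which is the required $\vec e$. Every gate used is braiding-only, so $D$ and $D^{-1}$ have $R$-count $0$.

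Finally, apply Lemma \ref{lem:core:two:level:state} to the single-qutrit state $|\phi_1\rangle = x|0\rangle + y|k\rangle$ to obtain a single-qutrit circuit $c_1$ with $\|c_1|0\rangle - |\phi_1\rangle\| < \varepsilon$, $R$-count at most $4\log_3(1/\varepsilon) + O(\log\log(1/\varepsilon))$, and expected classical synthesis cost polynomial in $\log(1/\varepsilon)$. Embed $c_1$ into the $n$-qutrit register by tensoring with identities on all qutrits except $p$; because each single-qutrit $R$-gate remains a primitive $R$-gate of the whole register under this embedding, the $R$-count is unchanged. Set $c := D^{-1}\tilde c_1$; then $c|0\rangle^{\otimes n}$ lies within $\varepsilon$ of $D^{-1}(x|0\rangle^{\otimes n} + y|\vec e\rangle) = |\phi\rangle$ since $D^{-1}$ is unitary. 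The $R$-count and classical-cost bounds now follow directly from those of $c_1$, with the overhead of constructing $D$ being polynomial in $n$. The only point needing genuine verification is that stage (ii) is always realizable by braiding alone, which is immediate from $\mathbb{F}_3$-arithmetic plus the braiding-only status of $\mbox{SUM}$; all the analytic and number-theoretic work has already been discharged in Lemma \ref{lem:core:two:level:state}.
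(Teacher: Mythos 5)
Your proposal follows essentially the same route as the paper's proof: a zero-$R$-count classical reduction that confines the two-level structure to a single qutrit, followed by the single-qutrit approximation machinery. Two remarks. First, your classical reduction (normalizing $|j_1\rangle$ to $|0\rangle^{\otimes n}$ and $|j_2\rangle$ to a weight-one vector via local $S_3$ gates and $\mbox{SUM}$ steps over $\mathbb{F}_3$) is a correct, self-contained variant of the paper's Lemma \ref{lem:two:basis:vectors}; it even spares the paper's final step of preparing $|a_1\ldots a_{n-1},0\rangle$ from $|0\rangle$ with local $\mbox{INC}$ gates. Second, one attribution needs fixing: Lemma \ref{lem:core:two:level:state} only guarantees an approximating \emph{Eisenstein state} of the form $(u\,|0\rangle+v\,|1\rangle+w\,|2\rangle)/\sqrt{-3}^{L}$ with $L\le 4\log_3(1/\varepsilon)+O(\log(\log(1/\varepsilon)))$; it does not produce a circuit. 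To obtain your $c_1$ you must additionally invoke Lemma \ref{lem:core:short:column} (the short column lemma), which exactly prepares that state from $|0\rangle$ with $R$-count at most $L+1$, the $+1$ being absorbed into the $O(\log(\log(1/\varepsilon)))$ term — exactly the combination the paper uses. With that one citation added, your argument is complete and matches the paper's proof in substance.
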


Before proving the lemma we need two lesser technical facts that are useful in their own right:

\begin{lem} \label{lem:transitive:on:axial}
Let $|b_1\rangle$  and $|b_2\rangle$ be two standard $n$-qutrit basis states. There exists an effectively and exactly representable classical permutation $\pi$ such that $|b_2\rangle= \pi \, |b_1\rangle$
\end{lem}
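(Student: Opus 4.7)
The plan is to construct $\pi$ as a tensor product of single-qutrit classical permutations, processing each qutrit position independently. Write the two basis states in tensor form as $|b_1\rangle = |x_1\rangle\otimes|x_2\rangle\otimes\cdots\otimes|x_n\rangle$ and $|b_2\rangle = |y_1\rangle\otimes|y_2\rangle\otimes\cdots\otimes|y_n\rangle$ with $x_i,y_i\in\{0,1,2\}$. For each position $i$, since $S_3$ acts transitively on the three standard basis vectors $\{|0\rangle,|1\rangle,|2\rangle\}$ of a single qutrit, I can pick a permutation $\pi_i\in S_3$ with $\pi_i|x_i\rangle=|y_i\rangle$ (take $\pi_i=I$ if $x_i=y_i$, and otherwise take $\pi_i=\tau_{x_i,y_i}$). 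Setting $\pi:=\pi_1\otimes\pi_2\otimes\cdots\otimes\pi_n$ then gives $\pi|b_1\rangle=|b_2\rangle$ as required.

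To confirm exact and effective representability, I would invoke the fact recorded in section II that $\tau_{0,1}$ and $\tau_{0,2}$ are exactly implementable by braiding (via $\tau_{0,1}=i(\sigma_3\sigma_2\sigma_3)^2$ and $\tau_{0,2}=i(\sigma_1\sigma_2\sigma_1)^2$), and that $\tau_{1,2}=\tau_{0,2}\tau_{0,1}\tau_{0,2}$, so every element of $S_3$ on a single qutrit has an exact representation over the metaplectic generating set. The $n$-qutrit permutation $\pi$ is then an exact metaplectic circuit obtained by applying the appropriate $\pi_i$ to the $i$-th qutrit, which uses only gates from the braiding-only (generalized Clifford) subgroup and therefore preserves the computational subspace. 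The construction is linear in $n$: we simply compare $x_i$ with $y_i$ and emit the appropriate single-qutrit factor. There is no genuine obstacle — the only content of the lemma is transitivity of $S_3$ on three points, together with the already-established fact that all six elements of $S_3$ are available as exact braiding circuits on a single qutrit — so the proof is purely bookkeeping.
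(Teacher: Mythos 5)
Your proposal is correct and follows essentially the same route as the paper: the paper also builds $\pi$ as a tensor product $\otimes_{j=1}^n \pi_j$ of single-qutrit classical permutations, the only (immaterial) difference being that it picks each $\pi_j$ from the cyclic subgroup $\{I,\mbox{INC},\mbox{INC}^2\}$ rather than from transpositions in $S_3$; both choices are exactly representable over the braiding gates.
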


\begin{proof}
In the case of $n=1$ the  $\Z_3$ group generated by  $\mbox{INC}$  acts transitively on the standard basis $\{|0\rangle,\, |1\rangle,\, |2\rangle\}$.

Consider $|b_k\rangle= |(b_k)_1,\ldots ,  (b_k)_n\rangle, \, n\geq 1, \, k=1,2$. Let $\pi_j \in \{I,\mbox{INC} , \mbox{INC}^2\}$ be such that $\pi_j |(b_1)_j\rangle = |(b_2)_j\rangle, j = 1,\ldots, n$.

Then $\pi = \otimes_{j=1}^{n} \pi_j$ is the desired permutation.
\end{proof}

\begin{lem} \label{lem:two:basis:vectors}
1) For any two standard $n$-qutrit basis vectors $|j\rangle$ and $|k\rangle$ there exists a classical effectively representable metaplectic gate $g$, such that for $|j'\rangle= g |j\rangle$ and $|k'\rangle= g |k\rangle$ we have $|j' - k'| < 3$.

2) Such a gate $g$ can be effectively represented with at most $(n-1)$ instances of the $\mbox{SUM}$,
 $\mbox{SUM}^{\dagger}$ or $\mbox{SWAP}$ gates.
\end{lem}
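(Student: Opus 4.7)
The plan is to reduce part (1) to part (2), and then prove part (2) by recognising that the required construction is simply Gaussian elimination on the ``difference vector'' over $\Z_3$.

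First I would identify the basis labels with their base-$3$ digit tuples $(j_{n-1},\ldots,j_0),(k_{n-1},\ldots,k_0)\in\{0,1,2\}^n$ and set $d:= j-k\pmod 3\in\Z_3^n$. A direct inspection of how the three permutations in play act on a single digit tuple shows that $\mbox{SUM}_{a,b}$ adds $j_a$ into position $b$, $\mbox{SUM}_{a,b}^{\dagger}$ subtracts it, and $\mbox{SWAP}_{a,b}$ transposes positions $a$ and $b$. Applying the same classical gate to both $|j\rangle$ and $|k\rangle$ therefore realises on the vector $d$ exactly the elementary operations $d_b\mapsto d_b\pm d_a$ and $d_a\leftrightarrow d_b$. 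Hence the whole question reduces to counting how many such operations over $\Z_3$ suffice to drive an arbitrary vector $d$ to one supported only at position $0$.

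Next I would run the obvious elimination. If $d=0$ there is nothing to do. Otherwise pick a pivot coordinate $p$ with $d_p\ne 0$, with a preference for $p=0$ whenever $d_0\ne 0$. Because $\Z_3^{*}=\{1,2\}=\{1,-1\}$, for every other coordinate $b$ with $d_b\ne 0$ exactly one of $\mbox{SUM}_{p,b}$ or $\mbox{SUM}_{p,b}^{\dagger}$ annihilates $d_b$. After this sweep $d$ has at most one non-zero entry, at position $p$, and if $p\ne 0$ a single $\mbox{SWAP}_{0,p}$ moves it to position $0$. Part (1) then follows immediately: once $g$ has been applied, $d'=j'-k'\pmod 3$ is supported only at position $0$, so as ordinary integers $|j'-k'|=|j'_0-k'_0|\le 2<3$.

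The only thing that needs any care is the gate count, and that is really the main (mild) obstacle. I would split into two cases. When $d_0\ne 0$ I pick $p=0$, and the sweep clears at most $n-1$ other positions by one SUM each, totalling at most $n-1$ gates and requiring no SWAP. When $d_0=0$ at most $n-1$ coordinates are non-zero, one of them serves as pivot, so at most $n-2$ SUMs clear the remaining non-pivot coordinates, and one final $\mbox{SWAP}_{0,p}$ places the surviving entry at position $0$, again totalling at most $n-1$ gates. This biased pivot choice is the whole trick that keeps the bound at $n-1$ rather than $n$; beyond it the argument is a routine translation of single-vector Gaussian elimination over $\Z_3$ into the gate language of the metaplectic model.
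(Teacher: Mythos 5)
Your proof is correct, and it takes a genuinely different route from the paper's. The paper argues by induction on the number of qutrits: it first settles $n=2$ by using local classical permutations (Lemma \ref{lem:transitive:on:axial}) to normalize $|j\rangle$ to $|00\rangle$ and then checking the handful of residual cases with a single $\mbox{SUM}_{2,1}^{\pm 1}$ or $\mbox{SWAP}$, and then in the inductive step applies $g_{n-1}\otimes I$ to concentrate the discrepancy in the last two positions and one more two-qutrit gate to push it into the last position, giving $(n-2)+1=n-1$ counted gates. You instead work globally with the digit-difference vector $d=j-k\in\Z_3^n$, observe that applying $\mbox{SUM}_{a,b}$, $\mbox{SUM}_{a,b}^{\dagger}$, $\mbox{SWAP}_{a,b}$ simultaneously to both basis states acts on $d$ by the elementary row operations $d_b\mapsto d_b\pm d_a$ and $d_a\leftrightarrow d_b$, and run a one-pivot Gaussian elimination; the key cancellation fact (for $d_b,d_p\in\{1,2\}$ exactly one of $d_b\pm d_p$ vanishes mod $3$) is the same arithmetic that underlies the paper's explicit $n=2$ case table, but you use it directly rather than through induction. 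Your pivot bias ($p=0$ when $d_0\neq 0$, otherwise $p\neq 0$ plus one final $\mbox{SWAP}_{0,p}$) correctly keeps the count at $n-1$ in both cases, and your construction in fact delivers the stronger conclusion the paper states after the lemma (all base-$3$ digits agree except possibly the least significant one), which is what the subsequent applications actually use. Two incidental advantages of your version: it needs no local single-qutrit permutations at all (the paper's proof uses free $\mbox{INC}$ gates for normalization), and it is non-recursive, so the effectiveness claim is immediate; the paper's inductive formulation, on the other hand, matches the style in which the lemma is later invoked (peeling off one qutrit at a time).
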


In other words, digital representations of $j'$ and $k'$ base $3$ are the same except possibly for the least-significant base-$3$ digit.

\begin{proof}
At $n=1$ there is nothing to prove.

Given Lemma \ref{lem:transitive:on:axial}, for $n=2$ the general pair of basis vectors can be reduced to the case where $|j\rangle = |00\rangle$. When $|k\rangle = |0,k_1\rangle$ no further transformations are needed, when $|k\rangle = |k_0,0\rangle$ a single $\mbox{SWAP}$ suffices. The remaining cases are covered by
$\mbox{SUM}_{2,1}^{\dagger} |11\rangle = \mbox{SUM}_{2,1} |21\rangle = |01\rangle$,
$\mbox{SUM}_{2,1} |12\rangle = \mbox{SUM}_{2,1}^{\dagger} |22\rangle = = |02\rangle$.

Suppose $n>2$ and the lemma has been proven for multi-qutrit vectors in fewer than $n$ qutrits.

Let $|j\rangle=|j_1\, \ldots, j_{n-1},  j_n\rangle , |k\rangle=|k_1\, \ldots, k_{n-1}, k_n\rangle$ be base-$3$ representations of the two vectors.

By induction hypothesis, one can effectively find $(n-1)$-qutrit classical metaplectic gate $g_{n-1}$ such that
$(g_{n-1} \otimes I) |j_1\, \ldots, j_{n-1},  j_n\rangle = |\ldots, j'_{n-1},  j'_n\rangle$ and
$(g_{n-1} \otimes I) |k_1\, \ldots, k_{n-1},  k_n\rangle= |\ldots, k'_{n-1},  k'_n\rangle$ may differ only at $(n-1)$-st and $n$-th position.

Select a two-qutrit classical gate $g_2$, as shown above,  such that $g_2 |j'_{n-1},  j'_n\rangle$ and $g_2 |k'_{n-1},  k'_n\rangle$ differ only in the last position.
Then, by setting $g=(I^{\otimes (n-2)}\otimes g_2)(g_{n-1} \otimes I)$ we complete the induction step.

\end{proof}

\begin{proof} (Of the two-level state approximation lemma.)
We start by reducing $|\phi\rangle$ to the form $x \, |a_1\ldots a_{n-1},d\rangle + z \, |a_1\ldots a_{n-1},f\rangle, \, a_1,\ldots,a_{n-1},d,f \in \{0,1,2\}$ using a classical circuit $b$ described in  Lemma \ref{lem:two:basis:vectors}.
Let $e \in \{0,1,2\}$ be the "missing" digit such that $\{d,e,f\}$ is a permutation of $\{0,1,2\}$.

Using Lemma \ref{lem:core:two:level:state}
 we can effectively approximate the single-qutrit state $x\,|d\rangle + z \, |f\rangle$ by an Eisenstein state of the form
$|\eta\rangle = (u\,|d\rangle + v\, |e\rangle+ w \, |f\rangle)/\sqrt{-3}^k, u,v,w \in \Z[\omega], k \in \Z$ to precision $\varepsilon$ with
$k \leq 4\, \log_3(1/\varepsilon)+O(\log(\log(1/\varepsilon)))$.

Using Lemma \ref{lem:core:short:column} we can effectively synthesize a single-qutrit metaplectic circuit $c_1$ with $R$-count at most $k+1$ such that $c_1 \, |0\rangle = |\eta\rangle$.

Let $c_n = (I^{\otimes(n-1)} \otimes c_1)$.

Clearly $b^{\dagger} \, c_n \,  |a_1\ldots a_{n-1},0\rangle$ is an $\varepsilon$-approximation of $|\phi\rangle$. But $|a_1\ldots a_{n-1},0\rangle$ can be prepared exactly from $|0\rangle$ using at most $n-1$ local $\mbox{INC}$ gates, which finalizes the desired circuit.
\end{proof}

\begin{corol} \label{corol:two:level:reflections}
Consider an integer $n\geq 1$ and let $|\phi\rangle$ be a unitary $n$-qutrit state that has at most two non-zero components in the standard $n$-qutrit basis and consider the corresponding Householder reflection operator $R_{|\phi\rangle} = I^{\otimes n} -2 \, |\phi\rangle \langle \phi|$.

For arbitrarily small $\varepsilon>0$

1) There is an effectively synthesizable metaplectic circuit $c$ with the $R$-count at most $4\,\log_3(1/\varepsilon)+O(\log(\log(1/\varepsilon)))$ such that  $ c \, R_{|\overline{0}\rangle} \, c^{\dagger}$ is a $\varepsilon$-approximation  of $R_{|\phi\rangle}$.(Where
$|\overline{0}\rangle = |0\rangle^{\otimes n}$.)

2)  The expected average classical cost of finding  such a circuit is polynomial in $\log(1/\varepsilon)$.

\end{corol}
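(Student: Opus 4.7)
The plan is to bootstrap the state-level approximation guaranteed by the two-level approximation lemma into an operator-level approximation of the Householder reflection, using the standard fact that conjugating a rank-one reflection by a unitary yields the reflection about the rotated state.

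First, I would invoke Lemma \ref{lem:multi:two:level} with a tightened target precision $\varepsilon' = \varepsilon/4$ to obtain an effectively synthesizable metaplectic circuit $c$ such that $\||\phi\rangle - c\,|\overline{0}\rangle\| \leq \varepsilon/4$, with $R$-count bounded by $4\log_3(4/\varepsilon)+O(\log\log(4/\varepsilon))$, which absorbs into $4\log_3(1/\varepsilon)+O(\log\log(1/\varepsilon))$. Write $|\psi\rangle = c\,|\overline{0}\rangle$ for brevity.

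Second, the conjugation identity
\[
c\,R_{|\overline{0}\rangle}\,c^{\dagger} = c\bigl(I^{\otimes n} - 2\,|\overline{0}\rangle\langle\overline{0}|\bigr)c^{\dagger} = I^{\otimes n} - 2\,|\psi\rangle\langle\psi| = R_{|\psi\rangle}
\]
reduces the task to comparing $R_{|\phi\rangle}$ with $R_{|\psi\rangle}$. A standard Lipschitz estimate for rank-one projectors, obtained by splitting
\[
|\phi\rangle\langle\phi| - |\psi\rangle\langle\psi| = \bigl(|\phi\rangle - |\psi\rangle\bigr)\langle\phi| + |\psi\rangle\bigl(\langle\phi| - \langle\psi|\bigr)
\]
and applying the triangle inequality on the operator norm, yields
\[
\|R_{|\phi\rangle} - R_{|\psi\rangle}\|_{\mathrm{op}} \leq 2\,\bigl\||\phi\rangle\langle\phi| - |\psi\rangle\langle\psi|\bigr\|_{\mathrm{op}} \leq 4\,\bigl\||\phi\rangle - |\psi\rangle\bigr\| \leq \varepsilon,
\]
which is the required operator-level bound.

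Part (2) on classical cost is immediate: replacing $\varepsilon$ by $\varepsilon/4$ inside Lemma \ref{lem:multi:two:level} only shifts the argument of its polynomial-in-$\log(1/\varepsilon)$ classical runtime by an additive constant, so the expected average cost remains polynomial in $\log(1/\varepsilon)$. I do not anticipate any serious obstacle; the only care needed is tracking the constant factor in the state-to-operator conversion so that the $O(\log\log(1/\varepsilon))$ slack fully absorbs the $\log 4$ overhead incurred by tightening the precision.
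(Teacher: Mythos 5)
Your proposal is correct and follows essentially the same route as the paper: approximate $|\phi\rangle$ by $c\,|\overline{0}\rangle$ via Lemma \ref{lem:multi:two:level} at a constant-factor tighter precision, then use Lipschitz continuity of the state-to-reflection map, with the constant-factor overhead absorbed into the $O(\log\log(1/\varepsilon))$ term. The only cosmetic difference is that you prove the rank-one Lipschitz estimate directly (precision $\varepsilon/4$) whereas the paper cites \cite{Kliuchnikoff} for the slightly sharper $\varepsilon/(2\sqrt{2})$ threshold; either constant works.
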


\begin{proof}
As per \cite{Kliuchnikoff}, if the distance between state $|\phi\rangle$ and $|\psi\rangle$ is less than
$\varepsilon/(2\,\sqrt{2})$ , then the distance between $R_{|\phi\rangle}$ and $R_{|\psi\rangle}$ is less than
$\varepsilon$. Using Lemma \ref{lem:multi:two:level} one can effectively find a metaplectic circuit $c$ with the
$R$-count in $4\,\log_3(1/\varepsilon)+O(\log(\log(1/\varepsilon)))$ such that
$c \, |\overline{0}\rangle $ approximates $|\phi\rangle$ to precision $\varepsilon/(2\,\sqrt{2})$ and the corollary follows.
\end{proof}

This result applies in a straightforward manner to one-parameter special diagonal unitary:

\begin{corol} \label{corol:two:level:diagonal}
Consider an integer $n\geq 1$ and an $n$-qutrit diagonal operator of the form
$D= I^{\otimes n} + (e^{i\, \theta}-1)\, |j\rangle \langle j| + (e^{-i\, \theta} -1)\, |k\rangle \langle k|$ where
$j,k \in \{0,\ldots,3^n-1\}, j \neq k$.

For arbitrarily small $\varepsilon>0$ there is an effectively synthesizable circuit at distance  $< \varepsilon$ from $D$ composed out of at most two axial $n$-qutrit reflection operators and local metaplectic gates with the total $R$-count of

1) at most $8\,\log_3(1/\varepsilon)+O(\log(\log(1/\varepsilon)))$ when $n=1$, and

2) at most $16\,\log_3(1/\varepsilon)+O(\log(\log(1/\varepsilon)))$ when $n>1$.
\end{corol}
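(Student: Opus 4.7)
The plan is to write $D$ exactly as a product of two two-level Householder reflections supported on the plane $\mathrm{span}\{|j\rangle,|k\rangle\}$, and then approximate each of them via Corollary~\ref{corol:two:level:reflections}. Restricted to the two-dimensional $\{|j\rangle,|k\rangle\}$ block viewed as a qubit, $D$ restricts to $\mathrm{diag}(e^{i\theta},e^{-i\theta}) = \cos\theta\,I + i\sin\theta\,Z$. A direct Pauli calculation shows that with $|\phi_1\rangle := (|j\rangle+|k\rangle)/\sqrt{2}$ and $|\phi_2\rangle := (|j\rangle+e^{i\theta}|k\rangle)/\sqrt{2}$, the two reflections $R_{|\phi_1\rangle}, R_{|\phi_2\rangle}$ restrict in that block to $-X$ and $-(\cos\theta\,X+\sin\theta\,Y)$ respectively, whose product equals $\cos\theta\,I+i\sin\theta\,Z$. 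Since both reflections act as the identity on the orthogonal complement, $D = R_{|\phi_1\rangle} R_{|\phi_2\rangle}$ exactly, and each $|\phi_i\rangle$ has only two non-zero components, so Corollary~\ref{corol:two:level:reflections} applies.

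For part~(2) I would approximate each $R_{|\phi_i\rangle}$ to precision $\varepsilon/2$ via Corollary~\ref{corol:two:level:reflections}, obtaining metaplectic circuits $c_i$ with $R(c_i)\le 4\log_3(1/\varepsilon)+O(\log\log(1/\varepsilon))$ such that $\tilde R_i := c_i\, R_{|\overline{0}\rangle}\, c_i^\dagger$ is $(\varepsilon/2)$-close to $R_{|\phi_i\rangle}$. The standard triangle inequality $\|AB-A'B'\|\le \|A-A'\|+\|B-B'\|$ for unitary products then yields $\|D - \tilde R_1 \tilde R_2\| < \varepsilon$. The assembled circuit $c_1\, R_{|\overline{0}\rangle}\, (c_1^\dagger c_2)\, R_{|\overline{0}\rangle}\, c_2^\dagger$ uses exactly two axial reflections and four $c_i^{(\dagger)}$ factors; for $n>1$ the axial reflections are tallied separately from the single-qutrit $R$-gates, so the $R$-count of the local metaplectic gates is bounded by $2R(c_1) + 2R(c_2) \le 16\log_3(1/\varepsilon)+O(\log\log(1/\varepsilon))$.

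For part~(1) the saving exploits an identity special to $n=1$: by direct $3{\times}3$ matrix computation, $R_{(|0\rangle+|1\rangle)/\sqrt{2}} = -\tau_{0,1}\,R_{|2\rangle}$ (up to the free global phase $-1$), and analogous identities cover every other choice of $\{j,k\}\subset\{0,1,2\}$ by conjugating with a classical Clifford transposition (Lemma~\ref{lem:transitive:on:axial}). Thus $R_{|\phi_1\rangle}$ has an \emph{exact} metaplectic representation as a Clifford permutation composed with a single axial reflection, so only $R_{|\phi_2\rangle}$ needs to be approximated. The assembled circuit still contains two axial reflections in total, both of which are themselves $R$-gates for $n=1$, and the total $R$-count is bounded by $2R(c_2) + 2 \le 8\log_3(1/\varepsilon)+O(\log\log(1/\varepsilon))$.

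The main non-trivial step is spotting and verifying the exact single-qutrit identity used in part~(1); without it the $n=1$ bound would reduce to the same $16\log_3$ as the multi-qutrit case and the factor-of-two saving would be lost. The remainder consists of a direct double application of Corollary~\ref{corol:two:level:reflections} together with the routine unitary-product triangle inequality, and the polynomial classical runtime is inherited from Corollary~\ref{corol:two:level:reflections}.
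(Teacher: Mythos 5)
Your proposal is correct and follows essentially the same route as the paper: factor $D$ exactly into two two-level Householder reflections supported on $\mathrm{span}\{|j\rangle,|k\rangle\}$, approximate each via Corollary~\ref{corol:two:level:reflections} (giving the $16\log_3(1/\varepsilon)$ bound for $n>1$), and save a factor of two at $n=1$ because the first reflection is exactly representable. The only cosmetic difference is your sign choice $(|j\rangle+|k\rangle)/\sqrt{2}$, which makes the first factor $-\tau_{j,k}R_{|2\rangle}$ rather than the paper's choice $(|j\rangle-|k\rangle)/\sqrt{2}$, for which $r_1$ is exactly the Clifford transposition $\tau_{j,k}$ with trivial cost; both choices stay within the stated budget of two axial reflections and the claimed $R$-counts.
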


Indeed, the diagonal unitary of this form is equal to $r_1 \, r_2$ where
$r_1= I^{\otimes n} - |j\rangle \langle j| - |k\rangle \langle k| + |j\rangle \langle k| + |k\rangle \langle j|$,
$r_2= I^{\otimes n} - |j\rangle \langle j| - |k\rangle \langle k| + e^{ -i \, \theta}\, |j\rangle \langle k| + e^{ i \, \theta}\,|k\rangle \langle j|$ and both $r_1$ and $r_2$ are two-level reflection operators. We note that for $n=1$ the $r_1$ is a Clifford gate and has trivial cost.

Since mult-qutrit axial reflection are going to grow in importance below, we offer a decomposition method for them in the next section.

\section{Implementation of Axial Reflection Operators} \label{sec:axial:reflections}

Let $|b\rangle$ be a standard $n$-qutrit basis state.

Then an \emph{axial reflection operator} $R_{|b\rangle}$ is defined as

$R_{|b\rangle}=I^{\otimes n} - 2\,|b\rangle \langle b|$

Clearly, $R_{|b\rangle}$ is represented by a diagonal matrix that has a $-1$ on the diagonal in the position corresponding to $|b\rangle$ and $+1$ in all other positions.


As per Lemma \ref{lem:transitive:on:axial} any two axial reflection operators are equivalent by conjugation with an effectively and exactly representable classical permutation.
Since we consider the cost of classical permutations to be negligible compared to the cost of the $R$ gates, we hold that for a fixed $n$ all the $n$-qutrit axial reflection operators have essentially the same cost.

We are going to show in this section that all the $n$-qutrit axial reflection operators can be effectively and exactly represented.

In view of the above if suffices to represent just one such operator for each $n$.
We start with somewhat special case of $n=2$.

\begin{observ} \label{observ:CFlip}
The circuit

$(I \otimes R_{|0\rangle}) \, \mbox{SUM} (I \otimes R_{|1\rangle}) \, \mbox{SUM} (R_{|2\rangle} \otimes R_{|2\rangle}) \, \mbox{SUM}$

is an exact representation of $(-1) R_{|20\rangle}$
\end{observ}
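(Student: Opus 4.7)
The plan is to verify by direct computation that the displayed circuit acts as a diagonal two-qutrit operator whose only $+1$ eigenvector is $|20\rangle$ and which acts as $-1$ on every other computational basis state, i.e.\ exactly $-R_{|20\rangle}$. First I would observe that $\mbox{SUM}$ permutes the computational basis as $|j,k\rangle\mapsto|j,(j+k)\bmod 3\rangle$ and preserves the first register, while each single-qutrit $R_{|c\rangle}$ is diagonal. Consequently the whole circuit $C$ is diagonal in the two-qutrit computational basis, and it suffices to compute its nine diagonal entries.

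Next I would trace a fixed basis vector $|j,k\rangle$ through the circuit from right to left. Using $\mbox{SUM}^{m}|j,k\rangle = |j,(mj+k)\bmod 3\rangle$, the second-register value right after each of the three $\mbox{SUM}$ applications is $(j+k)\bmod 3$, $(2j+k)\bmod 3$, and $k$, respectively. Each intervening reflection layer contributes a $\pm 1$ depending on whether those values match its marked state, giving
\[
C|j,k\rangle = (-1)^{E(j,k)}|j,k\rangle, \quad E(j,k) = [j{=}2] + [(j{+}k)\bmod 3{=}2] + [(2j{+}k)\bmod 3{=}1] + [k{=}0],
\]
where $[\,\cdot\,]$ is the Iverson bracket. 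The leading summand comes from the $R_{|2\rangle}$ factor acting on the first register inside the middle reflection layer; the other reflections act trivially on the first register.

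Finally I would check the parity of $E(j,k)$ by cases on $j$. When $j=0$ the three second-register conditions become $[k{=}2]$, $[k{=}1]$, $[k{=}0]$, which partition the three values of $k$ and sum to $1$. When $j=1$ they become a cyclic permutation of the same three conditions and again sum to $1$. When $j=2$ all three second-register conditions collapse to $[k{=}0]$, while the leading $[j{=}2]$ contributes an extra $1$, yielding $E(2,k) = 1 + 3[k{=}0]$, which is even iff $k=0$. Thus $C|j,k\rangle = -|j,k\rangle$ except at $(j,k)=(2,0)$, where $C|2,0\rangle = +|2,0\rangle$; this is precisely $-R_{|20\rangle}$. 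The entire argument is bookkeeping; the only potential trap is respecting the right-to-left order in which matrices compose and making sure both tensor factors of the middle $R_{|2\rangle}\otimes R_{|2\rangle}$ layer are included in the phase count.
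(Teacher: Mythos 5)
Your computation is correct and matches the paper's justification, which is simply ``established by direct matrix computation'': you trace each basis state $|j,k\rangle$ through the circuit, note that the three $\mbox{SUM}$ factors return the second register to $k$ so the operator is diagonal, and your sign count $E(j,k)=[j{=}2]+[(j{+}k)\bmod 3{=}2]+[(2j{+}k)\bmod 3{=}1]+[k{=}0]$ together with the case analysis in $j$ is exactly right, giving $+1$ only on $|20\rangle$, i.e.\ $(-1)R_{|20\rangle}$. (The only blemish is cosmetic: the $R_{|2\rangle}\otimes R_{|2\rangle}$ factor is the rightmost-written, first-applied reflection layer rather than the ``middle'' one, but your formula attributes its two contributions correctly, so nothing is affected.)
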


This is established by direct matrix computation.

We are going to generalize this solution to arbitrary $n \geq 2$ and note that the occurrence of the global phase $(-1)$ is exceptional and happens only at $n=2$.

\begin{lemma}
Given $n>2$ , denote by $\bar{2}$ in the context of this lemma a string of $n-2$ occurrences of $2$.

Then the circuit

$c_{20\bar{2}}=$

$(I \otimes R_{|0\bar{2}\rangle}) \, \mbox{SUM}_{1,2} \, (I \otimes I \otimes R_{|\bar{2}\rangle}) \, (I \otimes R_{|1\bar{2}\rangle}) $ $ \mbox{SUM}_{1,2} \, \mbox{SWAP}_{1,2} \, (I \otimes R_{|2\bar{2}\rangle}) \, \mbox{SWAP}_{1,2} \,  (I \otimes R_{|2\bar{2}\rangle}) \, \mbox{SUM}_{1,2}$

is an exact representation of the operator $R_{|20\bar{2}\rangle}$.

\end{lemma}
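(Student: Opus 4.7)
The plan is to verify the identity by direct computation on computational basis vectors $|j_1 j_2 \bar{j}\rangle$, exploiting that every factor in $c_{20\bar{2}}$ is a signed permutation: the $\mbox{SUM}_{1,2}$ and $\mbox{SWAP}_{1,2}$ gates permute basis states exactly, while each axial reflection is diagonal with entries in $\{\pm 1\}$. Hence $c_{20\bar{2}}$ itself is a signed permutation, and the task splits into (i) identifying the underlying permutation $\pi$ and (ii) computing the sign on each basis state.

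For (i), the plan is to track the first two qutrits through the classical gates, reading the circuit right to left. The sequence $\mbox{SUM}_{1,2}$, $\mbox{SWAP}_{1,2}$, $\mbox{SWAP}_{1,2}$, $\mbox{SUM}_{1,2}$, $\mbox{SUM}_{1,2}$ sends
\[ |j_1, j_2\rangle \mapsto |j_1, j_1+j_2\rangle \mapsto |j_1+j_2, j_1\rangle \mapsto |j_1, j_1+j_2\rangle \mapsto |j_1, 2j_1+j_2\rangle \mapsto |j_1, j_2\rangle, \]
with arithmetic mod $3$, and the last $n-2$ qutrits are untouched by any of these gates. So $\pi$ is the identity and $c_{20\bar{2}}$ is diagonal in the computational basis.

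For (ii), I would split into two cases. When $\bar{j} \neq \bar{2}$, each of the five reflection factors carries $\bar{2}$ on its last $n-2$ slots (including the bare $R_{|\bar{2}\rangle}$), so each acts as identity on $|j_1 j_2 \bar{j}\rangle$ and no sign is accumulated; this matches $R_{|20\bar{2}\rangle}$, which fixes any basis state with $\bar{j} \neq \bar{2}$. When $\bar{j} = \bar{2}$, the last $n-2$ qutrits remain in $|\bar{2}\rangle$ throughout, so each $I \otimes R_{|d\bar{2}\rangle}$ contributes $-1$ iff the current second qutrit equals $d \in \{0,1,2\}$, while $I \otimes I \otimes R_{|\bar{2}\rangle}$ contributes $-1$ unconditionally. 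Reading off the second-qutrit residues at the five reflection moments from the chain above, the total sign exponent modulo $2$ becomes
\[ [j_1+j_2 \equiv 2] + [j_1 = 2] + [2j_1+j_2 \equiv 1] + 1 + [j_2 = 0], \]
where Iverson brackets are used.

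A straightforward nine-way enumeration over $(j_1, j_2) \in \{0,1,2\}^2$ confirms this exponent is odd exactly when $(j_1, j_2) = (2, 0)$, which is precisely the diagonal pattern of $R_{|20\bar{2}\rangle}$. The only real obstacle is keeping the second-qutrit residue bookkeeping consistent across the five reflections; the role of the extra factor $I \otimes I \otimes R_{|\bar{2}\rangle}$, absent in the $n=2$ construction of Observation \ref{observ:CFlip}, is exactly to supply the constant $1$ in the exponent above, flipping the sign of every $|j_1 j_2 \bar{2}\rangle$ and thereby erasing the spurious global $(-1)$ present at $n=2$.
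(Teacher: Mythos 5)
Your proposal is correct and follows essentially the same route as the paper's own proof: a direct computational-basis verification that the circuit is a diagonal signed permutation (the three $\mbox{SUM}_{1,2}$'s and two $\mbox{SWAP}_{1,2}$'s compose to the identity on the first two qutrits), followed by bookkeeping of which reflections fire. The only difference is organizational — you condense the sign count into a uniform Iverson-bracket exponent checked by a nine-way enumeration, whereas the paper argues case by case on $(b_1,b_2)$ via pairwise cancellations of the $R_{|\bar{2}\rangle}$ activations — and your remark that the extra factor $I\otimes I\otimes R_{|\bar{2}\rangle}$ is what removes the global $(-1)$ seen at $n=2$ matches the paper's intent.
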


\begin{proof}

Let $|b\rangle$ be an element of the standard $n$-qutrit basis.
The circuit consists of diagonal operators and three occurrences of $\mbox{SUM}_{1,2}$.
Let $|b_1b_2\bar{b}\rangle$ be the ternary representation of $|b\rangle$ where $\bar{b}$ stands for the substring of the $n-2$ least significant ternary digits of $b$.
It is almost immediate that the circuit $c_{20\bar{2}}$ represents a diagonal unitary.
Indeed, when the input is $|b_1b_2\bar{b}\rangle$ we can only get $\pm |b_1b_2\bar{b}\rangle$, $\pm |b_1\,\mbox{INC}\,b_2\bar{b}\rangle$ or $\pm |b_1\,\mbox{INC}^2b_2\bar{b}\rangle$, up to swap,  after applying each subsequent operator of the circuit, and clearly we can only get $\phi |b_1b_2\bar{b}\rangle, \, \phi=\pm 1$ after the entire circuit is applied.

The lemma claims that $\phi=-1$ if and only if $b=20\bar{2}$.

Consider the cases when $b_1=0$ or $b_1=1$. It is easy to see that, whatever is the value of  $b_2$, one and only one of the operators  $(I \otimes R_{|0\bar{2}\rangle}), (I \otimes R_{|1\bar{2}\rangle}), (I \otimes R_{|2\bar{2}\rangle})$ activates $R_{|\bar{2}\rangle}$ on $|\bar{b}\rangle$  and this activation always cancels out with $(I \otimes I \otimes R_{|\bar{2}\rangle})$ (since $R^2=$ identity for any reflection $R$). So the result is identity.

If $b_1=2, b_2 \neq 0$ the five rightmost operations of the circuit produce $|2\rangle \otimes (\mbox{INC}^2 |b_2\rangle) \otimes (R_{|\bar{2}\rangle} |\bar{b}\rangle)$, an action that is subsequently canceled out by $I \otimes I \otimes R_{|\bar{2}\rangle}$.
It is also easy to see that for $b_2=1$  or $b_2=2$ the remaining two reflections $R_{|0\bar{2}\rangle}$ and $R_{|1\bar{2}\rangle}$ amount to non-operations. Therefore the net result is identity.

We are left with the important case of $b_1=2, b_2=0$.

By definition, $\mbox{SUM}_{12} |20\bar{b}\rangle = |22\bar{b}\rangle$ and then the subsequence $\mbox{SWAP}_{1,2} \, (I \otimes R_{|2\bar{2}\rangle}) \, \mbox{SWAP}_{1,2} \,  (I \otimes R_{|2\bar{2}\rangle})$ activates operator $R_{|\bar{2}\rangle}$ on $|\bar{b}\rangle$ twice, and of course these two activations cancel each other.

We proceed with $\mbox{SUM}_{12}  |22\bar{b}\rangle = |21\bar{b}\rangle$, and $I \otimes R_{|1\bar{2}\rangle}$ activates the $R_{|\bar{2}\rangle}$ on $|\bar{b}\rangle$ which is immediately cancelled out by the $I \otimes I \otimes R_{|\bar{2}\rangle}$.

Finally $\mbox{SUM}_{12}  |21\bar{b}\rangle = |20\bar{b}\rangle$, and $I \otimes R_{|0\bar{2}\rangle}$ activates $R_{|\bar{2}\rangle}$ on $|\bar{b}\rangle$ as desired. This applies the factor of $-1$ if and only if $\bar{b}=\bar{2}$, and that's what is claimed.

\end{proof}

Using this lemma we implement the operator $R_{|20\bar{2}\rangle}$ exactly by linear recursion.

As we noted earlier, all the axial reflection operators in $n$ qutrits have the same $R$-count.

Denote this $R$-count by $\mbox{rc}(n)$.

\begin{observ} \label{obs:cost:axial:reflection}
$\mbox{rc}(n)=\Theta((2+\sqrt{5})^n)$ when $n\rightarrow \infty$.
\end{observ}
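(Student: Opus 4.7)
The plan is to extract a linear recurrence for $\mbox{rc}(n)$ directly from the circuit $c_{20\bar{2}}$ constructed in the preceding lemma, and then read the asymptotic growth rate off the associated characteristic equation.

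First I would tally the $R$-cost contributions in $c_{20\bar{2}}$. The three instances of $\mbox{SUM}_{1,2}$ and the two instances of $\mbox{SWAP}_{1,2}$ contribute $0$ by our classical-gate convention. What remains are four factors of the form $I \otimes R_{|x\bar{2}\rangle}$ with $x \in \{0,1,2\}$, each an $(n-1)$-qutrit axial reflection up to a trivial embedding, together with one factor $I \otimes I \otimes R_{|\bar{2}\rangle}$, which is an $(n-2)$-qutrit axial reflection. Invoking Lemma \ref{lem:transitive:on:axial} to equate the $R$-count of any two axial reflections on the same number of qutrits, this immediately yields
\begin{equation*}
\mbox{rc}(n) \le 4\,\mbox{rc}(n-1) + \mbox{rc}(n-2), \qquad n \ge 3,
\end{equation*}
with base values $\mbox{rc}(1)=1$ (a single $R$-gate) and $\mbox{rc}(2)\le 4$, the latter extracted from Observation \ref{observ:CFlip}.

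Next I would analyze the companion recurrence $a_n = 4 a_{n-1} + a_{n-2}$. Its characteristic polynomial $x^2 - 4x - 1$ has roots $2 \pm \sqrt{5}$; the dominant root is $2+\sqrt{5}$, while the subdominant one has absolute value $\sqrt{5}-2 < 1$. The standard superposition-of-modes argument then delivers $\mbox{rc}(n) = O\bigl((2+\sqrt{5})^n\bigr)$, with an explicit leading constant determined by the base cases.

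The main obstacle is establishing the matching lower bound $\mbox{rc}(n) = \Omega\bigl((2+\sqrt{5})^n\bigr)$, since $\mbox{rc}(n)$ is defined as a minimum over all equivalent representations and a priori a cheaper recursion might exist. I would address this by arguing that the recursive construction is cancellation-free: the five reflection sub-calls inside $c_{20\bar{2}}$ are separated by nontrivial classical gates ($\mbox{SUM}_{1,2}$ and $\mbox{SWAP}_{1,2}$) and act on pairwise disjoint one-dimensional computational supports, so no two of them can be fused at zero additional $R$-cost. Together with a dimension/rank argument showing that any implementation of an $n$-qutrit axial reflection necessarily induces sub-reflections on the $(n-1)$- and $(n-2)$-qutrit blocks addressed by the recursion, this yields $\mbox{rc}(n) \ge 4\,\mbox{rc}(n-1) + \mbox{rc}(n-2) - O(1)$, from which the same characteristic-equation analysis produces the matching $\Omega$-bound and hence the $\Theta$ statement.
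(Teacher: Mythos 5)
Your upper-bound half is exactly the paper's argument: tally the five reflection factors in $c_{20\bar{2}}$ (four on $n-1$ qutrits, one on $n-2$ qutrits, with the $\mbox{SUM}$/$\mbox{SWAP}$ factors free), obtain $4\,\mbox{rc}(n-1)+\mbox{rc}(n-2)$ with $\mbox{rc}(1)=1$, $\mbox{rc}(2)=4$, and solve via the characteristic roots $2\pm\sqrt{5}$; the paper even writes the closed form $\bigl((2+\sqrt{5})^n-(2-\sqrt{5})^n\bigr)/(2\sqrt{5})$. The difference is in how the $\Theta$ is to be read: the paper takes the recurrence as an \emph{equality} defining the cost of this particular recursive construction and states the asymptotics of that solution (it later refers to this as the ``best known cost'' of an exact implementation), so no lower bound on the true minimal $R$-count over all circuits is claimed or needed.

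Your attempt to supply such a matching $\Omega$-bound is the genuine gap. Cancellation-freeness of the five sub-reflections \emph{inside the given circuit} only shows that this one decomposition cannot be simplified; it says nothing about an entirely different circuit for $R_{|20\bar{2}\rangle}$ that never passes through $(n-1)$- or $(n-2)$-qutrit axial reflections at all. The ``dimension/rank argument showing that any implementation \dots necessarily induces sub-reflections'' is asserted, not proved, and it is in fact the crux: a general exact-synthesis lower bound of this kind over the metaplectic gate set is not known. Indeed, the paper explicitly lists as an open question (Section on simulation and future work) whether exact metaplectic circuits for $n$-qutrit axial reflections exist with $R$-count sub-exponential in $n$ — a question your claimed inequality $\mbox{rc}(n)\ge 4\,\mbox{rc}(n-1)+\mbox{rc}(n-2)-O(1)$ would answer in the negative. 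So either restrict $\mbox{rc}(n)$ to mean the cost of the stated recursive construction (as the paper implicitly does), in which case the recurrence is an identity and your characteristic-equation analysis finishes the proof, or accept that the matching lower bound is beyond what is currently provable and drop that part of the argument.
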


\begin{proof}
We have $\mbox{rc}(1)=1, \mbox{rc}(2)=4$ (see Observation \ref{observ:CFlip}). The recurrence $\mbox{rc}(n)=4\,\mbox{rc}(n-1)+\mbox{rc}(n-2), \mbox{rp}(1)=1, \mbox{rc}(2)=4$ can be solved in closed form as $\mbox{rc}(n)=((2+\sqrt{5})^n-(2-\sqrt{5})^n)/(2 \, \sqrt{5})$.
Because $|2-\sqrt{5}|<1$ the $-(2-\sqrt{5})^n$ term is asymptotically insignificant.
\end{proof}

Thus the cost of the above exact implementation of the $n$-qutrit axial reflection operator is exponential in $n$. This defines several tradeoffs explored in the following sections.

\section{Ancilla-free reflection-based universality}

Consider integer $n\geq 1$.

For the duration of this section we set $N=3^n$.

\begin{lemma} \label{lem:general:diagonal:unitary}
Given a diagonal unitary $D \in U(N)$ and arbitrarily small $\varepsilon>0$ there is an effectively synthesizable $\varepsilon$-approximation  of $D$ composed of a global phase factor, at most $2 \, (N-1)$ axial reflection operators, and metaplectic local gates with the total $R$-count that is

1) $16 \, (\log_3(1/\varepsilon) + O(\log(\log(1/\varepsilon))))$ when $n=1$ and,

2)  smaller than $16 \, (N-1)(\log_3(1/\varepsilon) + n + O(\log(\log(1/\varepsilon))))$ when $n>1$.
\end{lemma}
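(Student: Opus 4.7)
The plan is to reduce the general diagonal unitary to $N-1$ two-level diagonal factors of the kind already handled by Corollary~\ref{corol:two:level:diagonal}, approximate each one independently, and then control the total error and cost by the triangle inequality for unitaries.

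First I would write $D = \mathrm{diag}(e^{i\beta_0}, e^{i\beta_1}, \ldots, e^{i\beta_{N-1}})$ and set $\phi = \tfrac{1}{N}\sum_{j=0}^{N-1}\beta_j$ together with $\theta_l = \beta_l - \phi$ for $l = 1, \ldots, N-1$. For each such $l$ let
\[
D_l \;=\; I^{\otimes n} + (e^{i\theta_l}-1)\,|l\rangle\langle l| + (e^{-i\theta_l}-1)\,|0\rangle\langle 0|,
\]
which is precisely the two-level diagonal operator of Corollary~\ref{corol:two:level:diagonal} (with $j=l$, $k=0$). A direct computation on each standard basis vector shows
\[
D \;=\; e^{i\phi}\prod_{l=1}^{N-1} D_l,
\]
since on $|l\rangle$ ($l \geq 1$) only $D_l$ contributes non-trivially, and on $|0\rangle$ each factor contributes $e^{-i\theta_l}$, whose product together with $e^{i\phi}$ recovers $e^{i\beta_0}$ by construction of $\phi$.

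Next, for each $l$ I would invoke Corollary~\ref{corol:two:level:diagonal} to obtain an effectively synthesizable circuit $\widetilde D_l$ at distance $< \varepsilon/(N-1)$ from $D_l$, built from at most two axial $n$-qutrit reflections and local metaplectic gates. Setting $\widetilde D = e^{i\phi}\prod_{l=1}^{N-1}\widetilde D_l$ and using the standard telescoping bound $\|A_1\cdots A_m - B_1\cdots B_m\| \leq \sum_i \|A_i - B_i\|$ for unitaries $A_i, B_i$, we get $\|\widetilde D - D\| < (N-1)\cdot\varepsilon/(N-1) = \varepsilon$. Each factor contributes at most two axial reflections, producing the claimed total of at most $2(N-1)$ axial reflections.

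For the $R$-count I would just add the per-factor bounds from Corollary~\ref{corol:two:level:diagonal} applied at precision $\varepsilon/(N-1)$. When $n=1$ there are only $N-1 = 2$ factors, each of cost at most $8(\log_3((N-1)/\varepsilon) + O(\log\log(1/\varepsilon))) = 8(\log_3(1/\varepsilon) + O(\log\log(1/\varepsilon)))$ (the additive constant $\log_3 2$ is absorbed into $O(\log\log(1/\varepsilon))$), summing to the claimed $16(\log_3(1/\varepsilon)+O(\log\log(1/\varepsilon)))$. When $n>1$, each factor has $R$-count at most $16(\log_3(1/\varepsilon)+\log_3(N-1)+O(\log\log(1/\varepsilon)))$, and since $\log_3(N-1) < n$ strictly, summing over $N-1$ factors yields a bound strictly less than $16(N-1)(\log_3(1/\varepsilon)+n+O(\log\log(1/\varepsilon)))$. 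Effectiveness is inherited from Corollary~\ref{corol:two:level:diagonal}: the classical work per factor is polynomial in $\log(1/\varepsilon)$, and there are only $N-1$ factors.

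The only non-routine point is the bookkeeping in Step~1 (verifying that the chosen $\phi$ and $\theta_l$ really do reproduce $D$ exactly as a product of two-level unitaries in the prescribed form), and checking that the additive $\log_3(N-1)$ term from the rescaled precision cleanly collapses into the $n$ term in the final bound; both are routine but must be stated explicitly to justify the strict inequality.
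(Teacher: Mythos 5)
Your proposal is correct and follows essentially the same route as the paper: factor $D$ into a global phase times $N-1$ special two-level diagonals, approximate each via Corollary~\ref{corol:two:level:diagonal} at precision $\varepsilon/(N-1)$, and use $\log_3((N-1)/\varepsilon) < \log_3(1/\varepsilon)+n$ to get the stated bounds. The only cosmetic difference is your explicit choice of pairing each level $|l\rangle$ with $|0\rangle$ (the paper leaves the pairing unspecified, using a chain of consecutive levels in its appendix variant), which does not change the argument or the counts.
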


Indeed, a unitary diagonal $D$ is decomposed into a product of a global phase factor and $(N-1)$ special two-level diagonals as in corollary \ref{corol:two:level:diagonal}. Each of the latter diagonals needs to be approximated to precision $\varepsilon/(N-1)$ with
$\log_3(1/(\varepsilon/(N-1))) < \log_3(1/\varepsilon) + n$.

In \cite{WWJD} Jesus Urias offers an effective $U(2)$ parametrization of the $U(N)$ group, whereby any $U \in U(N)$ is factored into a product of at most $N(N-1)/2$ special Householder reflections and possibly one diagonal unitary.

 All reflections in that decomposition are two-level.

This immediately leads to the following

\begin{thm}{(General unitary decomposition, reflection style.)} \label{thm:new:decomposition:reflection}
Given a $U \in U(N)$ in general position and small enough $\varepsilon>0$ the $U$ can be effectively approximated up to a global phase to precision $\varepsilon$ by ancilla-free metaplectic circuit with $R$-count of at most $4\,(N+4)(N-1)(\log_3(1/\varepsilon)+ 2\, n + O(\log(\log(1/\varepsilon))))$ and at most $(N+4)(N-1)/2$ axial reflections (in $n$ qutrits).
\end{thm}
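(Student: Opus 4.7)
The plan is to reduce $U$ to a product of two-level Householder reflections and then invoke Corollary \ref{corol:two:level:reflections} on each factor. First I would apply Urias's $U(2)$-parametrization \cite{WWJD} to factor $U$, up to a global phase, as a product of at most $N(N-1)/2$ two-level Householder reflections together with at most one residual diagonal unitary $D\in U(N)$. Next, following the decomposition strategy of Lemma \ref{lem:general:diagonal:unitary}, I would split $D$ as a product of $N-1$ two-level diagonal unitaries, each of which by the recipe behind Corollary \ref{corol:two:level:diagonal} is itself a product of two two-level Householder reflections. This repackages $U$, modulo a global phase, as a product of at most $N(N-1)/2 + 2(N-1) = (N+4)(N-1)/2$ two-level Householder reflections, which already accounts for the claimed bound on the number of axial reflections (one per factor, as seen below).

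The second step is to approximate each two-level reflection $R_{|\phi_i\rangle}$ via Corollary \ref{corol:two:level:reflections}: given any per-factor precision $\varepsilon' > 0$, the corollary effectively produces a metaplectic circuit $c_i$ with $R$-count at most $4\log_3(1/\varepsilon') + O(\log\log(1/\varepsilon'))$ such that $c_i\, R_{|\overline{0}\rangle}\, c_i^\dagger$ approximates $R_{|\phi_i\rangle}$ to within $\varepsilon'$ in operator norm. Substituting each $R_{|\phi_i\rangle}$ by this dressed axial reflection consumes exactly one axial reflection per factor and contributes $8\log_3(1/\varepsilon') + O(\log\log(1/\varepsilon'))$ of additional $R$-count per factor from the two dressing circuits $c_i$ and $c_i^\dagger$.

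To combine the per-factor approximants into an $\varepsilon$-approximation of the full product I would use the standard submultiplicativity estimate: for a product of unitaries, the operator-norm error after replacing each factor by an approximant is bounded by the sum of the per-factor errors. Setting $\varepsilon' = 2\varepsilon/((N+4)(N-1))$ therefore keeps the total error below $\varepsilon$, while $\log_3(1/\varepsilon') = \log_3(1/\varepsilon) + 2n + O(1)$. Summing the per-factor $R$-count over the $(N+4)(N-1)/2$ factors yields precisely the stated total bound $4(N+4)(N-1)(\log_3(1/\varepsilon) + 2n + O(\log\log(1/\varepsilon)))$. Effectiveness follows because Urias's factorization is closed-form and each invocation of Corollary \ref{corol:two:level:reflections} runs in expected time polynomial in $\log(1/\varepsilon)$. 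There is no real technical obstacle here: the work reduces to uniformly repackaging the reflections coming from Urias and from the diagonal decomposition into a single list of two-level reflections and then distributing the error budget, so the proof is essentially a careful aggregation of the preceding lemmas.
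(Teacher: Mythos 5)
Your proposal is correct and follows essentially the same route as the paper's proof: Urias's factorization into $N(N-1)/2$ two-level Householder reflections plus a diagonal contributing $2(N-1)$ more (via Lemma \ref{lem:general:diagonal:unitary} / Corollary \ref{corol:two:level:diagonal}), then per-factor approximation by dressed axial reflections $c_i R_{|\overline{0}\rangle} c_i^\dagger$ from Corollary \ref{corol:two:level:reflections} with the error budget $\varepsilon' = \varepsilon/((N+4)(N-1)/2)$ and additivity of operator-norm errors. Your bookkeeping of one axial reflection and $8\log_3(1/\varepsilon')$ of dressing $R$-count per factor reproduces the stated bounds exactly (and is, if anything, slightly more careful than the paper's wording, which loosely says ``at most $2$ axial reflections'' per factor while the theorem's total of $(N+4)(N-1)/2$ reflects one per factor).
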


\begin{proof}
It follows from \cite{WWJD} that $U$ is effectively decomposed into $N\,(N-1)/2$ special Householder reflections and possibly a diagonal unitary $D \in U(N)$ that may add up to $2\,(N-1)$ such reflections (see Lemma \ref{lem:general:diagonal:unitary}) to the decomposition to a total of $(N+4)(N-1)/2$ reflections. Each of these allows an effective $\varepsilon/((N+4)(N-1)/2)$-approximation by a metaplectic circuit with the $R$-count of at most $8\, (\log_3(1/\varepsilon)+ 2\, n + O(\log(\log(1/\varepsilon))$ plus at most $2$ axial reflections as per Corollary \ref{corol:two:level:reflections}, and the cost bound claimed in the theorem follows.
\end{proof}

The best know cost of exact metaplectic implementation of an $n$-qutrit axial reflection is in $\Theta((2+\sqrt{5})^n)$ as per Observation \ref{obs:cost:axial:reflection}. This may become prohibitive when $n$ is large.

In the next section we show how to curb the $R$-count at the cost of roughly doubling the width of the circuits.

\section{Ancilla-assisted approximation of arbitrary unitaries}
An alternative way of implementing a two-level unitary operator is through a network of  strongly controlled gates.

For $V \in U(3)$ introduce $C^n(V) \in U(3^{n+1})$ where

$C^n(V)|j_1,\ldots,j_n,j_{n+1}\rangle = $

$
\begin{cases}
|j_1,\ldots,j_n \rangle \otimes V |j_{n+1}\rangle, & j_1 = \cdots = j_n= 2 \\
|j_1,\ldots,j_n,j_{n+1}\rangle , & \mbox{otherwise.}
\end{cases}$

The $C^1(\mbox{INC})$ gate,

\begin{equation} \label{eq:new:CINC:gate}
C^1(\mbox{INC})|j,k\rangle = |j,(k+\delta_{j,2}) \mod{3}\rangle
\end{equation}

is going to be of a particular interest in this context.

Bullock et Al. \cite{BullockEtAl} offer a certain ancilla-assisted circuit that emulates $C^n(V)$ using only two-qudit gates.

The circuit requires $n-1$ ancillary qutrits,  $4\, (n-1)$ instances of the $C^1(\mbox{INC})$ gate (see equation (\ref{eq:new:CINC:gate})) and one single $C^1(V)$ gate.

We do not believe that the classical $C^1(\mbox{INC})$ gate can be represented exactly and must resort to approximating $C^1(\mbox{INC})$ to desired precision.

\begin{lem} \label{lem:new:CINC:approximation}
$C^1(\mbox{INC})$ (as defined by  (\ref{eq:new:CINC:gate})) can be approximated to precision $\varepsilon$ by a metaplectic circuit with $R$-count at most $16 \, \log_3(1/\varepsilon) + O(\log(\log(1/\varepsilon)))$ and $2$ two-qutrit axial reflections.
\end{lem}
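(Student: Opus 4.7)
The plan is to change basis on the target qutrit so that $C^1(\mbox{INC})$ becomes a diagonal two-level unitary covered by Corollary \ref{corol:two:level:diagonal}, then read off the resource bounds from that corollary.

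Concretely, the cyclic shift $\mbox{INC}$ is diagonalized by the single-qutrit Fourier transform $F|k\rangle=\frac{1}{\sqrt{3}}\sum_{j}\omega^{jk}|j\rangle$, giving $\mbox{INC}=F\,D\,F^{\dagger}$ with $D=\mbox{diag}(1,\omega^{2},\omega)$. Substituting into $C^1(\mbox{INC})=|0\rangle\langle 0|\otimes I+|1\rangle\langle 1|\otimes I+|2\rangle\langle 2|\otimes\mbox{INC}$ yields
\[
C^1(\mbox{INC})=(I\otimes F)\,C^1(D)\,(I\otimes F^{\dagger}),
\]
and $C^1(D)$ is the diagonal
\[
I^{\otimes 2}+(\omega^{2}-1)\,|21\rangle\langle 21|+(\omega-1)\,|22\rangle\langle 22|.
\]
This is exactly the normal form of Corollary \ref{corol:two:level:diagonal} with $n=2$, $\theta=-2\pi/3$, and $(j,k)=(|21\rangle,|22\rangle)$.

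I would then apply part (2) of Corollary \ref{corol:two:level:diagonal} (the $n>1$ clause) to approximate $C^1(D)$ to precision $\varepsilon$ by a circuit using at most two two-qutrit axial reflections together with local metaplectic gates, at total $R$-count at most $16\log_3(1/\varepsilon)+O(\log(\log(1/\varepsilon)))$. Conjugation by the unitary $I\otimes F$ is an isometry in operator norm, so this approximation transports to $C^1(\mbox{INC})$ with identical error and identical resource counts, provided $F$ itself is available at zero $R$-cost.

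The one non-routine point, and the main obstacle, is to exhibit $F$ explicitly as a braiding-only word. The paper's remark that the single-qutrit generators $\{\sigma_{1},\sigma_{2},\sigma_{3},\tau_{0,1},\tau_{0,2}\}$ generate the single-qutrit Clifford group up to global phase guarantees the existence of such a word: $F$ is a Clifford gate, so some product of these generators (together with a global phase) realizes it. By the standing cost assumption every such braiding-only gate has $R$-count zero, so the bounds of Corollary \ref{corol:two:level:diagonal}(2) transfer verbatim to $C^1(\mbox{INC})$, completing the proof.
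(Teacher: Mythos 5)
Your proof is correct, but it takes a different route from the paper's. The paper's own argument is a one-line direct factorization: it observes that $C^1(\mbox{INC}) = R_{|2\rangle\otimes v_2}\,R_{|2\rangle\otimes v_0}$ with $v_0=(|1\rangle-|2\rangle)/\sqrt2$, $v_2=(|0\rangle-|1\rangle)/\sqrt2$, i.e.\ a product of two Householder reflections about two-level two-qutrit states, and then applies Corollary \ref{corol:two:level:reflections} to each factor (two uses of $c\,R_{|\overline{0}\rangle}\,c^\dagger$ at $8\log_3(1/\varepsilon)$ apiece), giving the same $16\log_3(1/\varepsilon)+O(\log(\log(1/\varepsilon)))$ and two two-qutrit axial reflections. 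You instead diagonalize $\mbox{INC}$ by the qutrit Fourier transform, recognize $C^1(D)$ as a special two-level diagonal with conjugate phases $\omega^2,\omega$, and invoke Corollary \ref{corol:two:level:diagonal}(2); since that corollary is itself proved by splitting the diagonal into two two-level reflections $r_1 r_2$ and approximating each via Corollary \ref{corol:two:level:reflections}, the two arguments converge underneath and yield identical counts. Your route is more systematic (it would handle any strictly controlled gate whose single-qutrit part diagonalizes to conjugate phases, and it explains why the "$n>1$" clause of the diagonal corollary is the right one to cite), at the price of the extra step of arguing that $I\otimes F$ is free: that step is sound, since the paper states the braiding image is projectively the (two-qutrit) Clifford group, $F$ is Clifford, and any global phase in its braid realization cancels under conjugation, so the error and $R$-count transfer verbatim. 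The paper's route avoids this entirely by exhibiting the two reflections directly, which is why its proof can be so short.
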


\begin{proof}
$C^1(\mbox{INC})$ is the composition of two reflection operators: $C^1(\mbox{INC}) = R_{|2\rangle \otimes v_{2}} \, R_{|2\rangle \otimes v_{0}}$ where
$v_{0}=(|1\rangle-|2\rangle)/\sqrt{2}, \, v_{2}=(|0\rangle-|1\rangle)/\sqrt{2}$ and the lemma follows.

\end{proof}

\begin{corol} \label{corol:new:CnV:ancilla:assisted}
Given a  $V \in U(3)$, integer $n>0$ and a small enough $\varepsilon >0$, the  $C^n(V)$ can be effectively emulated approximately to precision $\varepsilon$ by ancilla-assisted $2\, n$-qutrit circuit  with $R$-count smaller than
$64\, n \, (\log_3(1/\varepsilon)+O(\log(\log(1/\varepsilon))))$.
\end{corol}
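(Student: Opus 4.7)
The plan is to build an $\varepsilon$-approximation of $C^n(V)$ by substituting approximate components into the exact emulation of \cite{BullockEtAl}. That construction realizes $C^n(V)$ on $n+1$ computational qutrits plus $n-1$ ancillas (total $2n$ qutrits) using $4(n-1)$ instances of the classical controlled gate $C^1(\mbox{INC})$ together with exactly one occurrence of $C^1(V)$. My strategy is to approximate each of these $4n-3$ factors independently; by the triangle inequality in operator norm, if each factor is within $\varepsilon/(4n-3)$ of its target then the composed circuit is within $\varepsilon$ of $C^n(V)$.

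For each of the $4(n-1)$ copies of $C^1(\mbox{INC})$ I would invoke Lemma \ref{lem:new:CINC:approximation} at precision $\varepsilon/(4n-3)$. Since $\log_3((4n-3)/\varepsilon) = \log_3(1/\varepsilon) + \log_3(4n-3)$, each approximation costs at most $16\log_3(1/\varepsilon) + O(\log n) + O(\log(\log(1/\varepsilon)))$ $R$-gates, plus two two-qutrit axial reflections. The latter are implemented exactly with $R$-count $\mbox{rc}(2)=4$ by Observation \ref{obs:cost:axial:reflection}, contributing a fixed constant per factor. For the single $C^1(V)$ factor, which is a two-qutrit unitary, I would apply Theorem \ref{thm:new:decomposition:reflection} at precision $\varepsilon/(4n-3)$ in $U(9)$; this contributes a summand that is $O(\log_3(1/\varepsilon))$ with constant independent of $n$, plus a bounded number of two-qutrit axial reflections of constant $R$-count.

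Summing across all $4n-3$ factors, the leading contribution to the $R$-count is $4(n-1) \cdot 16 \log_3(1/\varepsilon) = 64(n-1)\log_3(1/\varepsilon)$. All remaining contributions, namely the $O(\log n)$ precision overhead incurred $4(n-1)$ times, the constant-cost axial-reflection realizations, and the $n$-independent cost of the single $C^1(V)$ approximation, are asymptotically subdominant and are absorbed into the factor $64n$ multiplying $O(\log(\log(1/\varepsilon)))$, yielding the claimed bound $64n(\log_3(1/\varepsilon) + O(\log(\log(1/\varepsilon))))$.

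I expect the main obstacle to be a clean precision-accounting argument: specifically, verifying that the $\log n$ overhead introduced by splitting the $\varepsilon$ budget across $4n-3$ factors can be safely folded into the stated big-$O$ term, either by treating $n$ as fixed relative to $\varepsilon$ or by noting that $n \log n$ is dominated by $n \log(\log(1/\varepsilon))$ in the relevant asymptotic regime. Everything else is mechanical bookkeeping on top of the two ingredients \cite{BullockEtAl} and Lemma \ref{lem:new:CINC:approximation}.
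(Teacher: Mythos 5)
Your overall architecture matches the paper's: plug approximations of the $4(n-1)$ copies of $C^1(\mbox{INC})$ (Lemma \ref{lem:new:CINC:approximation}, with the two-qutrit axial reflections realized exactly at constant $R$-count) and of the single $C^1(V)$ into the Bullock et al.\ network, and sum. The gap is in your treatment of the one $C^1(V)$ factor. You propose to approximate it as a generic element of $U(9)$ via Theorem \ref{thm:new:decomposition:reflection}; with $N=9$ that theorem gives an $R$-count of about $4(N+4)(N-1)\log_3(1/\varepsilon)=416\,\log_3(1/\varepsilon)$ plus dozens of two-qutrit axial reflections. This is a $\Theta(\log_3(1/\varepsilon))$ term with a large constant, and it cannot be ``absorbed into the factor $64n$ multiplying $O(\log(\log(1/\varepsilon)))$'': for fixed $n$ and $\varepsilon\to 0$, a term $c\,\log_3(1/\varepsilon)$ is not $O(n\log\log(1/\varepsilon))$. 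Your accounting therefore yields a leading coefficient of roughly $64(n-1)+416=64n+352$ in front of $\log_3(1/\varepsilon)$, which misses the stated bound; the paper is explicit that its leading coefficients are meant literally, not as big-$O$ placeholders.

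The repair is to bound $C^1(V)$ much more tightly, which is what the paper's route does: decompose $V\in U(3)$ into at most three special two-level unitaries plus possibly a controlled phase, so that $C^1(V)$ becomes a product of at most four special two-level two-qutrit unitaries; each of those is a product of two Householder reflections about two-level states, each such reflection approximated as $c\,R_{|\overline{0}\rangle}\,c^{\dagger}$ via Corollary \ref{corol:two:level:reflections} at $R$-count $8\log_3(1/\varepsilon)+O(\log\log(1/\varepsilon))$ plus one cheap exact two-qutrit axial reflection. This caps the $C^1(V)$ contribution at $64\log_3(1/\varepsilon)+O(\log\log(1/\varepsilon))$, and the total becomes $64(n-1)+64=64n$ times $(\log_3(1/\varepsilon)+O(\log\log(1/\varepsilon)))$, as claimed. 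Your secondary concern about the $O(\log n)$ per-factor overhead from splitting $\varepsilon$ over $4n-3$ factors is legitimate bookkeeping (the paper is itself silent on it inside this corollary and accounts for precision subdivision only at the level of Theorem \ref{thm:decomposition:ancilla:assited} via the $+2n$ term), but it is not the obstacle; the $C^1(V)$ estimate is.
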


It is easy to see from Lemma \ref{lem:two:basis:vectors} that any two-level $n$-qutrit unitary $W$ is effectively classically equivalent to some
$C^{n-1}(\tilde{W})$ where $\tilde{W}$ is a certain (two-level) single-qutrit derivative of $W$. This applies, in particular, to the two-level Householder reflections that constitute the factors in the explicit $U(2)$ factorization of $U(3^n)$  (\cite{WWJD}).

An upper bound for the cost of ancilla-assisted emulation of arbitrary $n$-qutrit unitary is summarized in the following

\begin{thm}{(General unitary decomposition, ancilla-assisted.)} \label{thm:decomposition:ancilla:assited}
Given a $U \in U(N)$ in general position and small enough $\varepsilon>0$ the $U$ can be effectively emulated up to a global phase to precision $\varepsilon$ by metaplectic circuit with $(n-2)$ ancillas and $R$-count smaller than $32\,(N+4)(N-1)(n-1)(\log_3(1/\varepsilon)+ 2\, n + O(\log(\log(1/\varepsilon))))$.
\end{thm}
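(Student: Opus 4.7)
The plan is to mirror the proof of Theorem \ref{thm:new:decomposition:reflection}, but with the exact, exponentially costly implementation of each $n$-qutrit axial reflection replaced by the ancilla-assisted approximate emulation of strongly controlled gates that has been developed in this section.

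First I would invoke the Urias $U(2)$ factorization \cite{WWJD} to express $U$, up to a global phase, as a product of $N(N-1)/2$ two-level Householder reflections together with possibly one diagonal unitary $D \in U(N)$; Lemma \ref{lem:general:diagonal:unitary} splits $D$ further into a global phase and at most $2(N-1)$ additional two-level reflections, yielding a total of at most $(N+4)(N-1)/2$ two-level $n$-qutrit reflections. Using Lemma \ref{lem:two:basis:vectors} composed with a few $\mbox{INC}$ gates, each such two-level reflection $W$ becomes classically conjugate, at zero $R$-cost, to a strongly controlled gate of the form $C^{n-1}(\tilde W)$ acting with the last qutrit as target, where $\tilde W \in U(3)$ is the induced two-level single-qutrit unitary.

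Next I would apply the Bullock et al.\ emulation cited above to each $C^{n-1}(\tilde W)$, producing a circuit on $n-2$ ancillary qutrits composed of $4(n-2)$ instances of $C^1(\mbox{INC})$ and a single $C^1(\tilde W)$. Because every Bullock emulation returns its ancillas to their original states, the same pool of $n-2$ ancillas is reused across all of the $(N+4)(N-1)/2$ reflections, matching the ancilla count in the statement. To control error accumulation, I would allocate a uniform per-primitive precision budget of $\varepsilon' = \varepsilon/(2(N+4)(N-1)(n-1))$ across the at most $2(N+4)(N-1)(n-1)$ approximated two-qutrit primitives, so that the triangle-inequality error sum stays below $\varepsilon$.

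To finish, Lemma \ref{lem:new:CINC:approximation} provides an $R$-count of at most $16\log_3(1/\varepsilon')+O(\log\log(1/\varepsilon'))$ per approximate $C^1(\mbox{INC})$, and the cheaper $C^1(\tilde W)$ fits in the same budget via Corollary \ref{corol:two:level:reflections}. Since $\log_3(1/\varepsilon') \le \log_3(1/\varepsilon)+2n+O(\log n)$, every approximated primitive costs at most $16(\log_3(1/\varepsilon)+2n+O(\log\log(1/\varepsilon)))$ $R$-gates, and multiplying by the $2(N+4)(N-1)(n-1)$ primitive count yields the advertised total $32(N+4)(N-1)(n-1)(\log_3(1/\varepsilon)+2n+O(\log\log(1/\varepsilon)))$. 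The main obstacle is not conceptual but bookkeeping: I need to verify that this uniform precision split actually saturates the stated leading constant $32$ and that the lone $C^1(\tilde W)$ per reflection is safely absorbed within the $4(n-1)$ primitive budget rather than inflating it by a further factor.
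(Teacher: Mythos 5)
Your proposal is correct and follows essentially the same route as the paper: Urias factorization into at most $(N+4)(N-1)/2$ two-level reflections, classical equivalence of each to a $C^{n-1}(\tilde W)$ via Lemma \ref{lem:two:basis:vectors}, and the Bullock-style ancilla-assisted emulation with approximate $C^1(\mbox{INC})$ gates; the only difference is that you inline the content of Corollary \ref{corol:new:CnV:ancilla:assisted} by counting the roughly $4(n-1)$ two-qutrit primitives per reflection directly, which yields the same $32(N+4)(N-1)(n-1)$ leading factor.
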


\begin{proof}
We can still exactly and effectively decompose $U$ into a global phase and at most $(N+4)(N-1)/2$ two-level Householder reflections (see the proof of Thm \ref{thm:new:decomposition:reflection}).

But now we treat each two-level reflection as classical equivalent of a $C^{n-1}(V)$ where $V$ is a single-qutrit unitary. We emulate the each reflection as such using Corollary \ref{corol:new:CnV:ancilla:assisted} and the cost bound for the overall decomposition follows.
\end{proof}

This synthesis procedure is summarized as pseudocode in Algorithm \ref{alg:decomp:multi:qutrit:ancilla} below.

\begin{algorithm}[H]
\caption{Ancilla-assisted decomposition of a general unitary.}
\label{alg:decomp:multi:qutrit:ancilla}
\algsetup{indent=2em}
\begin{algorithmic}[1]
\REQUIRE{$U \in U(3^n)$, $\varepsilon>0$}
\STATE {$U = D \,\prod_{k=1}^{K} U_k$ as per \cite{WWJD}}
\COMMENT{Diagonal $D$ and two-level $U_k$}
\STATE{$\mbox{ret} \gets \mbox{decomposition}(D,\varepsilon)$ as per Corol. \ref{corol:new:CnV:ancilla:assisted}}
\FOR{$k=1..K$}
\STATE{$c \gets \mbox{decomposition}(U_k,\varepsilon)$ as per Corol. \ref{corol:new:CnV:ancilla:assisted}}
\STATE{$\mbox{ret} \gets \mbox{ret} \, c$}
\ENDFOR
\RETURN {$ret$ }

\end{algorithmic}
\end{algorithm}

\section{The overall synthesis algorithm flow.}

Assuming ancillary qutrits are readily available, a decision point on choosing between the ancilla-free and ancilla-assisted decomposition strategies is defined by relative magnitudes of
$(2+\sqrt{5})^n$ and $64 \, n \, \log_3(1/\varepsilon)$. Comparison of the upper bounds suggests that in practice the ancilla-free solution becomes prohibitively costly when $n > 7$. Otherwise the decision threshold in $\varepsilon$ is of the form $\varepsilon_n = \Omega(3^{-(2+\sqrt{5})^n/(64\,n)})$.

The two strategies can be run in parallel on a classical computer with the best resulting circuit post-selected. This approach is shown schematically in Figure \ref{fig:new:parallel:flow}.

\begin{figure}[bt]
\includegraphics[width=3.5in]{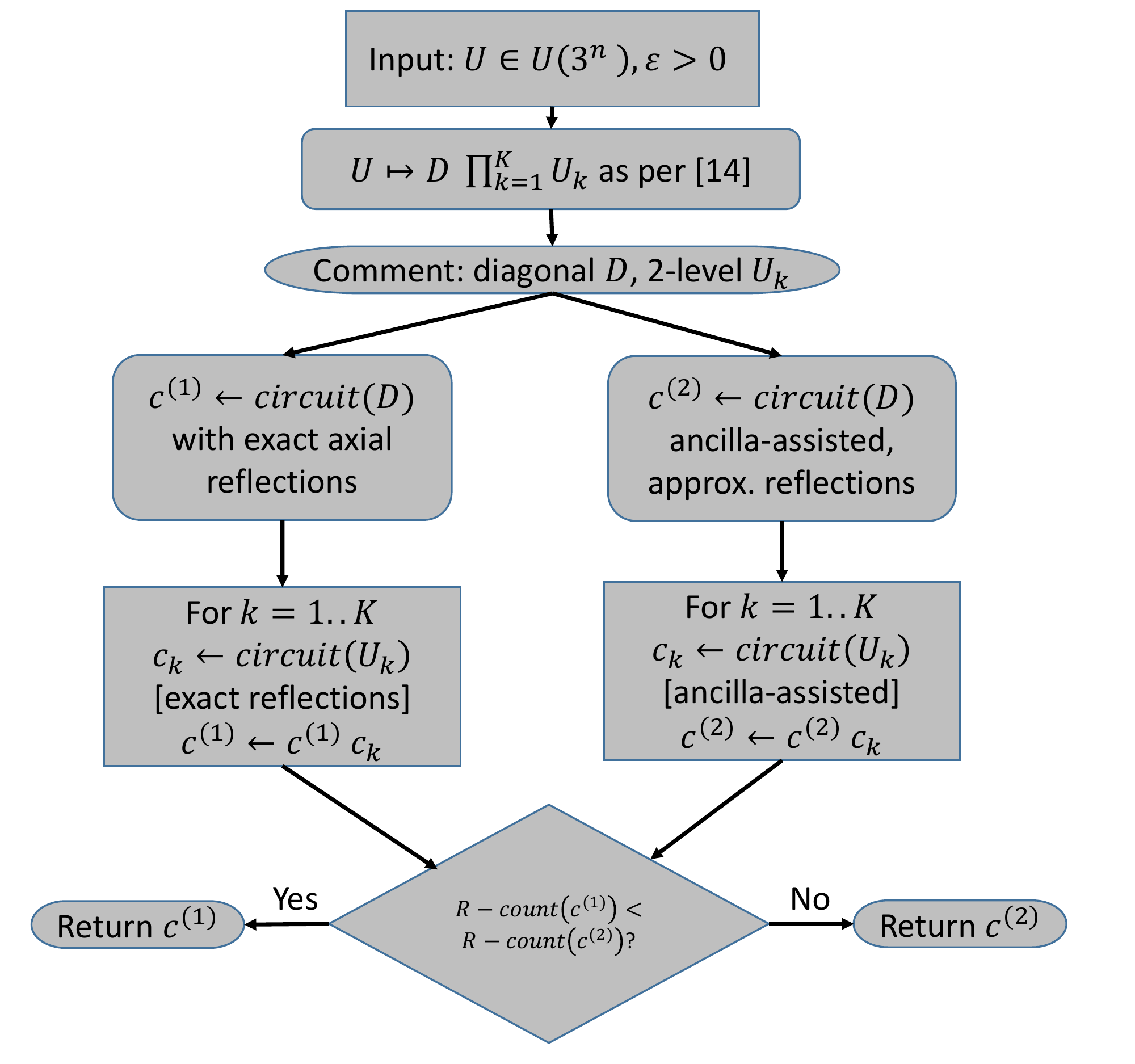}
\caption{\label{fig:new:parallel:flow} Parallelizable control flow for the two flavors of the main algorithm.}
\end{figure}

\section{Simulation, theoretical lower bound and future work.} \label{sec:bound:future}
The scaling of the cost of our metaplectic circuits is fully defined by the cost of approximating a two-level state. The $R$-count of a circuit performing an $\varepsilon$-approximation of the latter is in its turn defined by the denominator exponent $k$ of an approximating tri-level Eisenstein state $|\phi_k\rangle =(u \, |j\rangle + v \, |\ell\rangle + w \, |m\rangle)/\sqrt{-3}^k$.
.

We currently have $k$ upper-bounded by $4 \, \log_3(1/\varepsilon) + O(\log(\log(1/\varepsilon)))$.

Our numerical simulation over a large set of randomly generated two-level targets, demonstrates that an approximation algorithm based solely on Lemma \ref{lem:core:two:level:state} yields $k$ extremely close to this upper bound in overwhelming majority of cases.

A certain volume argument suggests a uniform lower bound for $k$ in
$5/2 \, \log_3(1/\varepsilon) + O(\log(\log(1/\varepsilon)))$. Indeed for a given two-level target state $|\psi\rangle$ and its $\varepsilon$-approximation $|\phi_k\rangle$ the real vector
$[Re(u),Im(u),Re(v),Im(v)]^T$ is found in a certain 4-dimension meniscus of 4-volume
$\Theta(\varepsilon^5\, 3^{2\,k})$. If we expect, uniformly, each of these menisci to contain
$\Theta(\log(1/\varepsilon))$ such vectors we need to have $\varepsilon^5\, 3^{2\,k}$ in $\Theta(\log(1/\varepsilon))$ and the above lower bound on $k$ follows.

There is clearly a gap between our guaranteed cost leading term $4 \, \log_3(1/\varepsilon)$ and the cost' lower bound leading term $5/2 \, \log_3(1/\varepsilon)$ and we currently do not know whether (a) the lower bound is reachable at all using metaplectic circuits or, (b) if it is reachable, whether this can be done by a classically tractable algorithm. More theoretical (and possibly, simulation) work is needed to answer these questions. At stake here is potential practical reduction of the metaplectic circuitry cost by $37.5\%$.

Another important open question is whether there is a set of exact metaplectic circuits for $n$-qutrit axial reflections with the $R$-count that is sub-exponential (preferrably, polynomial) in $n$.

\section{Conclusion}
We have addressed the problem of performing efficient quantum computations in a framework where quantum information is represented in multi-qutrit encoding by ensembles of certain weakly-integral anyons and the native quantum gates are represented by braids with a targeted use of projective measurement.

We have developed two flavors of a classically feasible algorithm for the synthesis of efficient metaplectic circuits that approximate arbitrary $n$-qutrit unitaries to a desired precision $\varepsilon$.
The first flavor of the algorithm produces circuits that are ancilla-free and asymptotically optimal in $\varepsilon$ (but may have additive entanglement overhead that is exponential in $n$).
The second flavor produces circuits requiring roughly $n$ clean ancillas, has a depth overhead factor of approximately $n$, but may be, nevertheless, more efficient in practice when $n$ is large.
The combined algorithm enables us to compile logical multi-qutrit circuits with the scalability properties comparable to the scalability of the recent crop of efficient logical circuits over multi-qubit bases such as Clifford+T, Clifford+V or Fibonacci.

In summary, we have demonstrated that circuit synthesis for a prospective ternary topological quantum computer based on weakly-integral anyons can be done effectively and efficiently. This implicitly validates such prospective computer for the quantum algorithm development.

Although we have achieved asymptotic optimality of the resulting circuits, there is some potential slack left in the practical bounds of leading coefficients for the circuit depths, as explained in the section \ref{sec:bound:future}. Investigating this presumed slack is one of our future research topics.


\acknowledgements
The authors wish to thank Martin Roetteler for useful discussions.


\appendix

\section{Exact Representation of Single-Qutrit Unitaries over the Metaplectic Basis} \label{sec:exact:representation}

Surprisingly, our synthesis algorithms did not require a usual theorem regarding exact decomposition of exactly representable matrices. For completeness we state such result here (Thm \ref{thm:exact:single:qutrit:synthesis}).

\begin{lemma} \label{lem:short:column:case2}
Let $|\psi\rangle$ be a unitary single-qutrit state of the form $|\psi\rangle =1/\sqrt{-3}^L ( v \, |1\rangle + w \, |2\rangle)$ where $v,w \in \Z[\omega], L \in \Z$.
Then $|\psi\rangle$ is effectively and immediately reducible to a standard basis vector at the cost of at most one $P$ gate.
\end{lemma}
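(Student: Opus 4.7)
The plan is to observe that this is the special case of the ``short column lemma'' (Lemma \ref{lem:core:short:column}) in which the first coefficient vanishes, and to show that under this vanishing the induction collapses: at no stage does one ever land in the expensive ``Case 1'' of that lemma, so no $s_2$ or $P$ gates are needed during the descent on $L$. The only gate applied is the final one from the special case (Lemma \ref{lem:zero:exponent}).

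More concretely, I would induct on $L$. For $L=0$ unitarity forces $v,w\in\Z[\omega]$ with $|v|^2+|w|^2=1$, so exactly one of them is an Eisenstein unit and the other is zero, and Lemma \ref{lem:zero:exponent} produces a single $P$ gate that maps $|\psi\rangle$ to a standard basis vector. For the inductive step $L>0$, unitarity gives
\[
|v|^2+|w|^2 = 3^L \equiv 0 \pmod 3.
\]
Recall from the proof of Lemma \ref{lem:core:short:column} that for every $z\in\Z[\omega]$ one has $|z|^2\bmod 3\in\{0,1\}$. Since $u=0$ contributes $|u|^2\bmod 3 = 0$, the ``all residues equal to $1$'' alternative (Case 1 of that lemma) cannot occur here; we must be in Case 0, i.e.\ $|v|^2\equiv|w|^2\equiv 0\pmod 3$.

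By Observation \ref{observ:orbits}, elements of norm $0$ mod $3$ in $\Z_3[\omega]$ form the union $O_0\cup O_2$, and every such element is divisible by $\rho(1+2\omega)$. As noted in the proof of Lemma \ref{lem:core:short:column}, divisibility of $\rho(z)$ by $\rho(1+2\omega)$ lifts to divisibility of $z$ by $1+2\omega$ in $\Z[\omega]$, because $3=-(1+2\omega)^2$ is itself divisible by $1+2\omega$. Thus both $v$ and $w$ are divisible by $1+2\omega=\sqrt{-3}$ in $\Z[\omega]$, and we may rewrite
\[
|\psi\rangle = \frac{1}{\sqrt{-3}^{\,L-1}}\bigl(v'\,|1\rangle + w'\,|2\rangle\bigr),
\qquad v'=v/\sqrt{-3},\; w'=w/\sqrt{-3},
\]
without applying any gate. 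Applying the induction hypothesis to this new state of denominator exponent $L-1$ completes the step.

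The only point that requires any care is the algebraic claim that $|v|^2\equiv|w|^2\equiv 0\pmod 3$ forces divisibility of $v,w$ by $\sqrt{-3}$ in $\Z[\omega]$, but this is exactly the content of Observation \ref{observ:orbits} together with the $3=-(1+2\omega)^2$ identity already invoked in the proof of Lemma \ref{lem:core:short:column}; there is no genuinely new obstacle. The effectiveness claim is immediate, since each reduction is a single division in $\Z[\omega]$ and the final step involves finding one exponent $d\in\{0,\dots,5\}$ with $(-\omega^2)^d$ equal to the surviving unit.
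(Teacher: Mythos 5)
Your proposal is correct and follows essentially the same route as the paper's own (much terser) proof: when $L>0$ the mod-$3$ norm analysis from Lemma \ref{lem:core:short:column} forces $|v|^2\equiv|w|^2\equiv 0\pmod 3$, hence divisibility of $v,w$ by $1+2\omega=\sqrt{-3}$, so the denominator exponent is driven to $0$ with no gates, and Lemma \ref{lem:zero:exponent} then finishes with at most one $P$ gate. Your explicit induction merely formalizes the paper's ``reduces algebraically'' step, so there is no substantive difference.
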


\begin{proof}
We reuse remarks in the proof of lemma \ref{lem:core:short:column} to note that, whenever $L>0$ then $|v|^2 \mod 3 = |w|^2 \mod 3 = 0$. This also implies that each of the $v,w$ is divisible by $1+2\, \omega = \sqrt{-3}$ in $\Z[\omega]$. Therefore, the state reduces algebraically to a unitary state of the form $v' \, |1\rangle + w'\, |2\rangle$ where $v',w' \in \Z[\omega]$ and the lemma follows for the lemma \ref{lem:zero:exponent}.
\end{proof}

\begin{thm}[Single-qutrit exact synthesis theorem] \label{thm:exact:single:qutrit:synthesis}
Consider a $3 \times 3$ unitary matrix of the form $U=1/\sqrt{-3}^L \, M$ where $M$ is a $3 \times 3$ matrix over $\Z[\omega]$. Then $U$ is represented exactly by a metaplectic circuit of $R$-count at most $L+3$.
\end{thm}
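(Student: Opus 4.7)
The plan is to reduce $U$ column by column using the two reduction lemmas already proved, exploiting the fact that once columns are in place, unitarity constrains the remaining columns into exactly the form handled by the simpler lemmas.

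First, view the first column of $U$ as a unitary single-qutrit state $|\psi_1\rangle = (u\,|0\rangle+v\,|1\rangle+w\,|2\rangle)/\sqrt{-3}^L$ with Eisenstein-integer coefficients. The Short column lemma (Lemma \ref{lem:core:short:column}) yields an effectively synthesizable metaplectic circuit $c_1$ of $R$-count at most $L+1$ such that $c_1\,|\psi_1\rangle = |j_1\rangle$ for some standard basis vector. The matrix $c_1 U$ is then unitary with first column $|j_1\rangle$, and because metaplectic circuits act within $\Z[\omega][1/\sqrt{-3}]$, its remaining columns still have Eisenstein-rational entries over a common $\sqrt{-3}^{L'}$ denominator for some exponent $L'$ (a priori possibly larger than $L$, but this does not matter).

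Next, by orthogonality with $|j_1\rangle$, the second column of $c_1 U$ has its $|j_1\rangle$-component equal to zero, so, up to a classical permutation of the basis (which is free), it has the form $(v'\,|1\rangle + w'\,|2\rangle)/\sqrt{-3}^{L'}$ with $v',w'\in\Z[\omega]$. Lemma \ref{lem:short:column:case2} then applies and produces a circuit $c_2$ consisting of at most one $P$ gate, hence of $R$-count at most $1$, that sends this column to a standard basis vector $|j_2\rangle$. After applying $c_2 c_1 U$, the first two columns are standard basis vectors, and by unitarity the third column is $\lambda\,|j_3\rangle$ where $|j_3\rangle$ is the remaining basis vector and $\lambda\in\Z[\omega][1/\sqrt{-3}]$ has unit modulus. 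A unit-modulus element of this ring is necessarily an Eisenstein integer unit, i.e.\ $\lambda=(-\omega^2)^d$ for some $d$, so a single $P$ gate $c_3$ (again of $R$-count at most $1$, using Observation \ref{observ:two:P} if needed to absorb any global sign) cleans up this last phase and yields $c_3 c_2 c_1 U = I$.

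Taking adjoints gives $U = c_1^\dagger c_2^\dagger c_3^\dagger$ as a metaplectic circuit of $R$-count at most $(L+1)+1+1 = L+3$, completing the proof. The main obstacle — and really the only nontrivial thing to check — is the claim that after the first reduction the second column genuinely lands in a two-level state of Eisenstein-rational form amenable to Lemma \ref{lem:short:column:case2}. This requires observing that all generators used by the algorithm of Lemma \ref{lem:core:short:column} ($s_2$, the $P_i$, and the classical permutations) preserve $\Z[\omega][1/\sqrt{-3}]$ coefficient-wise, so that unitarity forces the second column into exactly the special form the auxiliary lemma demands. Once this structural observation is in hand, the rest of the argument is bookkeeping of $R$-counts.
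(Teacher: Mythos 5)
Your proof is correct and follows essentially the same route as the paper: reduce the first column with Lemma \ref{lem:core:short:column} (cost $L+1$), then handle the remaining $2\times 2$ block via Lemma \ref{lem:short:column:case2} plus one final $P$ gate for the residual Eisenstein-unit phase (cost $1+1$). The paper merely packages those last two steps as Lemma \ref{lemma:spec:case2}, so the two arguments coincide, including your structural observation that the generators preserve entries of the form $\Z[\omega]/\sqrt{-3}^{\,k}$ up to global phase.
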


In order to prove the theorem, we handle the following special case first:

\begin{lemma} \label{lemma:spec:case2}
Consider a $2 \times 2$ unitary matrix of the form $V=1/\sqrt{-3}^L \, M$ where $M$ is a $2 \times 2$ matrix over Eisenstein integers.
The $3 \times 3$ matrix  $U=\left(\begin{array}{cc}
              1 & 0 \\
              0 & V \\
            \end{array}\right)$ can be effectively reduced to identity by application of at most two $P$ gates and at most one classical gate.
\end{lemma}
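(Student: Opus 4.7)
The plan is to exploit the block structure of $U$: its first column is $|0\rangle$, and the lower-right $2\times 2$ block $V$ is itself unitary, with each of its two columns being a unit-norm vector of the form $(v,w)^T/\sqrt{-3}^L$ with $v,w\in\Z[\omega]$. I will proceed in three steps, none of which are genuinely hard.

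First, I would eliminate the denominator $\sqrt{-3}^L$ algebraically, without applying any gates. Reusing the argument in Lemma \ref{lem:short:column:case2}, whenever $L>0$ the unitariness of any such column forces $|v|^2+|w|^2=3^L\equiv 0\pmod 3$, and since $|z|^2\bmod 3 \in\{0,1\}$ for any $z\in\Z[\omega]$ one must have $|v|^2\equiv |w|^2 \equiv 0\pmod 3$. By Observation \ref{observ:orbits} both $v$ and $w$ are then divisible by $\sqrt{-3}=1+2\omega$ in $\Z[\omega]$. Because this applies identically to both columns of $V$, we can simultaneously peel off a common factor of $\sqrt{-3}$ from each column, dropping $L$ to $L-1$. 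Iterating drives the representation to $L=0$ with no gates used.

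Second, with $V$ now having Eisenstein-integer entries, I would invoke Lemma \ref{lem:zero:exponent} column by column. Each unit-norm column must have exactly one nonzero entry, and that entry must be an Eisenstein unit $(-\omega^2)^{d}$. So $V$ is a monomial matrix with unit entries --- either diagonal or anti-diagonal. If it is anti-diagonal, a single classical transposition $\tau_{1,2}$ (which fixes $|0\rangle$) converts $V$ to diagonal form; otherwise no classical gate is needed. This spends the ``at most one classical gate''.

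Finally, $U$ has become $\mathrm{diag}(1,u_1,u_2)$ with each $u_j$ an Eisenstein unit $(-\omega^2)^{d_j}$. Applying $P_1^{-d_1}$ on the left cleans the $(1,1)$ entry without touching any other diagonal position, and then $P_2^{-d_2}$ cleans the $(2,2)$ entry; the two operators act on disjoint basis vectors and commute, so the product is exactly $I$. This accounts for the ``at most two $P$ gates''. The only step that requires any real care is the first one --- verifying that the denominator reduction is truly gate-free, using only the orbit analysis modulo $3$ and the unitariness of each column of $V$; once that is granted, the remainder is bookkeeping essentially already packaged in Lemmas \ref{lem:zero:exponent} and \ref{lem:short:column:case2}.
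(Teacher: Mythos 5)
Your proposal is correct and takes essentially the same approach as the paper: the paper simply applies Lemma \ref{lem:short:column:case2} to the middle column $1/\sqrt{-3}^L\,[0,u,v]^T$ (that lemma already packages your gate-free denominator reduction together with the Lemma \ref{lem:zero:exponent} monomial conclusion), uses a classical gate to force that column to $|1\rangle$, and then lets unitarity reduce $U$ to $\mathrm{diag}(1,1,\varphi)$ with $\varphi$ an Eisenstein unit, fixed by one further $P$ gate. Your matrix-level reduction and explicit handling of both columns is only a mild reorganization of the same ingredients, yielding the same count of at most two $P$ gates and one classical gate.
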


\begin{proof} (Of the lemma.)
Let $1/\sqrt{-3}^L \, [0, u, v]^T$ be the second column of the matrix $U$.
As per lemma \ref{lem:short:column:case2} the column can be reduced to a standard basis vector using at most one $P$ gate. Applying an appropriate classical gates if necessary we can force it to be $|1\rangle$ and thus $U$ gets reduced to $diag(1,1,\varphi)$ where $\varphi \in \Z[\omega]$ is a phase factor and thus an Eisenstein unit.
Hence $\varphi = (-\omega^2)^d, d \in \Z$ and $P_2^{-d \mod 6}$ completes the reduction of the matrix to identity.

\end{proof}

\begin{proof} (Of the theorem.)

As per lemma \ref{lem:core:short:column} we can effectively find a unitary circuit $c_1$ of $R$-count at most $L+1$ and $H$-count at most $L$ that reduces the first column of $U$ to a basis vector and, in fact w.l.o.g. to $|0\rangle$.

Consider the matrix $c_1\, U$. Due to unitariness, it must be of the form $\left(\begin{array}{cc}
              1 & 0 \\
              0 & V \\
            \end{array}\right)$ with  $V = 1/\sqrt{-3}^{L_1} \, M_1$ where  $M_1$ is a certain $2 \times 2$ matrix over $\Z[\omega]$.

As per lemma \ref{lemma:spec:case2} this matrix can be effectively reduced to identity at the cost of at most two $P$ gates.

Therefore we have effectively found a circuit $c_2$ with $R$-count at most $L + 3$ and $H$-count at most $L$ such that $c_2\, U = I$ and thus $U = c_2^{-1}$.

\end{proof}

\section{Single-Qutrit State approximation} \label{sec:single:qutrit:approx}

\subsection{Norm equation in Eisenstein integers} \label{subsec:norm:equation}

The ring of the Eisenstein integers $\Z[\omega]$ is arguably the simplest \emph{cyclotomic} ring (\cite{LWashington}).

In what follows we would need certain properties of the equation

\begin{equation} \label{eisen:norm:equation}
|z|^2 = n, \, n \in \Z, \, z \in \Z[\omega]
\end{equation}

The two basic facts to deal with are: (a) the equation (\ref{eisen:norm:equation}) is solvable with respect to $z$ only for some of the right hand side values; (b) the complexity of solving the equation for $z$ is no less than the complexity of factoring the integer $n$.

The first thing to note is that $|z|^2$ is multiplicative in $z$. Therefore if $|z_1|^2=n_1$ and $|z_2|^2=n_2$ then $|z_1 \, z_2|^2 = n_1 \, n_2$. Hence disregarding the integer factorization we only need to know the effective solvability of the equation when $n$ is a power of a prime number. Moreover, since for $p \in \Z$, $|p|^2=p^2$, i.e. the equation is always solvable when $n$ is a complete square, we only need the effective solvability when $n$ is a prime number.

According to \cite{LWashington}, if $n$ is a positive prime number, the equation (\ref{eisen:norm:equation}) is solvable if and only if $n=1 \mod 3$ or $n=3$.

In case of $n=3$ the six solutions of the equation are $(-\omega)^{2\,d} \, (2 \, \omega+1), \, d=0,\ldots,5$.

In the more general case when $n$ is a prime with $n=1 \mod 3$ it is easy to obtain all the solutions of (\ref{eisen:norm:equation}) at a runtime cost that is probabilistically polynomial  in  $\log(n)$.

Here is the two step procedure to be used:

1) Compute $m \in \Z$ such that $m^2=-3 \mod n$ , using, for example, Tonelli-Shanks algorithm \cite{DShanks}.

2) Compute $z = GCD_{\Z[\omega]} ( m + 2\, \omega + 1, n)$

3) Now $\{(-\omega^2)^d \, z, \, (-\omega^2)^d \, z^*, \, d=0,\ldots 5\}$ are the solutions of (\ref{eisen:norm:equation}).

As a matter of principle we could limit ourself only to norm equations with integer prime right hand sides and thus sidestep the need for integer factorization.

If we pick an integer $n$ at random from some interval $(B/2,B)$, then the probability that $n$ is an integer prime with $n=1 \mod 3$ is going to be in $\Omega(1/\log(B))$ (cf. \cite{HazePrime}).

While it is sufficient for establishing asymptotic properties of the algorithms we are about to design, for improved practical performance it is beneficial to be able to deal with \emph{easily solvable} equations of the form (\ref{eisen:norm:equation}), that is the ones where the integer $n$ on the right hand side can be factored at some acceptable cost. A subset of solutions of the equation in this case is described by the following

\begin{thm} \label{thm:norm:eq:composite}
Let $n$ be an integer, factored to the form $n = m^2 \, p_1\, \cdots p_{\ell}$ , where $m \in \Z$ and $p_1\, \ldots p_{\ell}$ are distinct positive integer primes.

Then

1) The equation (\ref{eisen:norm:equation}) is solvable if and only if $p_j = 1 \mod 3, j = 1,\ldots,\ell$.

2) If $\{z_1,\ldots,z_{\ell}\}$ is a sequence of particular solutions of the equations $|z_j|^2 = p_j, \, j = 1,\ldots,\ell$ then all of the following are solutions of the equation (\ref{eisen:norm:equation}):

\begin{equation}
z = m \, Conj^{d_1}[z_1] \, \cdots Conj^{d_{\ell}}[z_{\ell}], d \in \{0,1\}^{\ell}
\end{equation}

where $Conj$ is the complex conjugation operator.

\end{thm}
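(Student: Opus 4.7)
The plan is to exploit the multiplicativity of the norm $N(z) := |z|^2 = z \, z^*$ on $\Z[\omega]$ together with the standard splitting behaviour of rational primes in this ring.

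First I would dispatch part (2), which is essentially a one-line consequence of multiplicativity: since $|z^*|^2 = |z|^2$, each factor $Conj^{d_j}[z_j]$ has norm $p_j$, and multiplicativity of $|\cdot|^2$ on $\Z[\omega]$ then gives
\[
\bigl|\, m \, Conj^{d_1}[z_1] \cdots Conj^{d_\ell}[z_\ell]\,\bigr|^2 \;=\; m^2 \prod_{j=1}^{\ell} p_j \;=\; n,
\]
independent of the choice of $d \in \{0,1\}^{\ell}$.

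Next I would prove the ``if'' direction of part (1). By the discussion immediately preceding the theorem, each prime $p_j \equiv 1 \pmod 3$ admits an effectively computable $z_j \in \Z[\omega]$ with $|z_j|^2 = p_j$ (obtained, e.g., via Tonelli--Shanks followed by a $\Z[\omega]$-gcd). Plugging these into the formula from part (2) with, say, $d = \bar 0$, produces an explicit solution $z = m \cdot z_1 \cdots z_\ell$ of $|z|^2 = n$.

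The non-trivial direction is the ``only if.'' For this I would invoke two facts from cyclotomic arithmetic: $\Z[\omega]$ is a unique factorization domain, and a rational prime $q$ decomposes in $\Z[\omega]$ in one of three ways, namely $q = 3$ is ramified, $q \equiv 1 \pmod 3$ splits as a product of two non-associate conjugate primes $\pi_q, \pi_q^*$ with $|\pi_q|^2 = q$, while $q \equiv 2 \pmod 3$ is inert, so $q$ itself remains a prime of $\Z[\omega]$ with $|q|^2 = q^2$. Assume some $p_{j_0} \equiv 2 \pmod 3$ appears among the $p_j$ and that $|z|^2 = n$ has a solution. Factor $z$ into Eisenstein primes; the only contributions on the left-hand side to the rational prime $p_{j_0}$ come from occurrences of $p_{j_0}$ itself in the factorization of $z$, each contributing $p_{j_0}^2$. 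Hence the exponent of $p_{j_0}$ in $n$ must be even. Using distinctness of the $p_j$ in the factorization $n = m^2 \, p_1 \cdots p_\ell$, that exponent is congruent to $1$ modulo $2$ (namely $1 + 2 v_{p_{j_0}}(m)$), a contradiction.

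The main obstacle is really this last parity argument, since it is the only step that uses genuine number-theoretic content; the rest is bookkeeping with the multiplicative norm. One should also note that the hypothesis ``$p_j \equiv 1 \pmod 3$'' tacitly excludes the ramified prime $3$ from appearing among the $p_j$; otherwise the ramified case $|1+2\omega|^2 = 3$ would give an extra solvable possibility, which is easily accommodated but outside the stated criterion.
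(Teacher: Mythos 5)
Your proof is correct. For comparison: the paper actually states Theorem \ref{thm:norm:eq:composite} without giving a proof, treating it as a consequence of the facts it has just recalled, namely multiplicativity of $|\cdot|^2$ and the criterion from \cite{LWashington} that a positive rational prime is a norm from $\Z[\omega]$ iff it is $1 \bmod 3$ or equals $3$. Your argument supplies exactly the standard details behind that: part (2) and the ``if'' direction of part (1) are the multiplicativity bookkeeping the paper has in mind, while your ``only if'' direction --- unique factorization in $\Z[\omega]$ plus the observation that an inert prime $q \equiv 2 \pmod 3$ contributes only even powers to any norm, contradicting the odd valuation $1 + 2\,v_{q}(m)$ --- is the one genuinely number-theoretic step, and it is sound. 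Your closing caveat is also a legitimate catch rather than a quibble: as literally stated the ``only if'' half fails when $3$ occurs among the $p_j$ (e.g.\ $n=3$ itself is solvable via $|1+2\omega|^2 = 3$), which is consistent with the paper's own earlier remark that for prime right-hand sides solvability holds iff $n \equiv 1 \pmod 3$ \emph{or} $n = 3$; so the theorem should be read as tacitly excluding the ramified prime $3$ from the $p_j$, or amended to allow it.
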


Recall that an integer is \emph{smooth} if it does not have prime factors above certain size \cite{GranvilleSmooth}.
Let us call an integer \emph{semi-smooth} if it is a product of a smooth integer and at most one larger prime number.

In view of the theorem and the above effective procedure for solving a norm equation with a prime right hand side, solving a norm equation with semi-smooth right hand side $n$ is easy and can be effectively performed at the runtime cost that is polynomial in $\log(n)$.

The distribution of smooth integers is described by  the de Bruijn function \cite{GranvilleSmooth}.
Even though the density of semi-smooth numbers $n$ for which the equation (\ref{eisen:norm:equation}) is solvable in interval $(B/2,B)$, may still be in $\Omega(1/\log(B))$ asymptotically, in practice such integers are much more dense than the primes with $n=1 \mod 3$.

Intuitively, in a random stream of norm equations easily solvable norm equations are not uncommon, and for large enough $B>0$ we need to sample some $O(\log(B))$ integers $n \in (B/2,B)$ to find, with sufficiently high probability, one that is semi-smooth and such that the equation (\ref{eisen:norm:equation}) is solvable.

Approximation methods developed in the next subsection depend on the following more specific

\begin{cnj} \label{conj:norm:eq:solvability}
Let $k$ be an arbitrarily large positive integer and let $u, v \in \Z[\omega]$ be randomly picked Eisenstein integers such that

$\Theta(3^{k/2}) \leq |u|^2+|v|^2 \leq 3^k$.

Then for $n = 3^k - |u|^2 - |v|^2$ the equation (\ref{eisen:norm:equation}) is easily solvable with probability that has uniform lower bound in $\Omega(1/k)$.
\end{cnj}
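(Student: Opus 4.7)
The plan is to view the integer $n(u,v) := 3^k - |u|^2 - |v|^2$ as a random variable on the Eisenstein lattice region $\Omega_k = \{(u,v) \in \Z[\omega]^2 : \Theta(3^{k/2}) \le |u|^2 + |v|^2 \le 3^k\}$, and to lower-bound the probability that $n(u,v)$ lands in the set $S_k$ of \emph{easily solvable} positive integers. By Theorem \ref{thm:norm:eq:composite} together with the preceding discussion, $n \in S_k$ iff (i) every rational prime $p\equiv 2\pmod 3$ dividing $n$ does so with even multiplicity, and (ii) $n$ is semi-smooth, i.e. a product of a smooth part and at most one larger prime. My strategy is to combine a quantitative equidistribution statement for $n(u,v)$ with a classical density estimate for $S_k$.

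First I would establish equidistribution of $n(u,v)$. Writing $u = a + b\omega$, $v = c + d\omega$, the map $(a,b,c,d) \mapsto |u|^2 + |v|^2$ is the positive-definite quaternary quadratic form $Q(a,b,c,d) = a^2 - ab + b^2 + c^2 - cd + d^2$. Classical theta-series analysis, or equivalently the Siegel mass formula for this rank-$4$ lattice, gives the representation number in the form $r_Q(m) = c\cdot m \prod_p \alpha_p(m)$ with local densities $\alpha_p(m)$ uniformly bounded above and below for $m$ outside an exceptional set of density $o(1)$. Pushing the uniform measure on $\Omega_k$ forward to $n(u,v)$, the resulting distribution on integers in $[0,\Theta(3^k)]$ differs from uniform by only a polylogarithmic factor off the exceptional set.

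Next I would lower-bound $|S_k \cap [0, 3^k]|$ using two standard ingredients. By Landau's theorem for the norm form of the imaginary quadratic field $\Q(\omega)$, the density of integers $\le x$ satisfying (i) is asymptotic to $c_0/\sqrt{\log x}$, i.e. $\Theta(1/\sqrt{k})$ in our range. By de Bruijn–Hildebrand estimates for the Dickman function, the density of $\Psi$-semi-smooth integers $\le x$ with $\Psi = (\log x)^{O(1)}$ is bounded below by a positive absolute constant. Because (i) is a condition on residues modulo the fixed small modulus $3$ while (ii) depends only on the sizes of prime divisors, a standard sieve argument (or direct inclusion–exclusion with characters mod $3$) shows the two are essentially independent, giving $|S_k \cap [0, 3^k]| = \Omega(3^k/\sqrt{k})$.

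Combining the two steps yields $\Pr[n(u,v) \in S_k] = \Omega(1/\sqrt{k})$ up to polylogarithmic loss from Step 1, which is in fact stronger than the claimed $\Omega(1/k)$. The main obstacle will be Step 1: while the average of $r_Q(m)$ is well controlled, pointwise fluctuations of $r_Q$ can be substantial and must be tamed to make the pushforward measure genuinely close to uniform. A pragmatic workaround is to strengthen the definition of $S_k$ by additionally requiring $n$ to have a prime factor in a controlled medium-size window — integers of this form avoid the pathological behavior of $r_Q$, and the density bound from Step 2 survives the extra restriction up to a constant factor. If even this refinement proves too delicate, one can weaken the conjecture to $\Omega(1/k^{1+o(1)})$, which is still enough to support the expected-polynomial-runtime claim in Lemma \ref{lem:core:two:level:state}.
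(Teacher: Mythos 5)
First, a point of comparison that matters here: the paper does not prove this statement at all --- it is stated as Conjecture \ref{conj:norm:eq:solvability} and is explicitly invoked as a ``mild number-theoretical hypothesis'' on which the proof of Lemma \ref{lem:core:two:level:state} rests. So there is no paper proof to match; what you have written would, if correct, strengthen the paper. As it stands, however, your sketch is a heuristic with a concrete quantitative error at its center. The set of ``easily solvable'' right-hand sides requires $n$ to be factorable at cost polynomial in $\log n$, which (with any rigorous factoring method) forces the smooth part of $n$ to have smoothness bound $y=(\log n)^{O(1)}$. For that range the density of semi-smooth integers is \emph{not} bounded below by a positive constant: counting pairs $(m,p)$ with $m$ a $y$-smooth integer and $p$ a single larger prime gives about $\frac{x}{\log x}\sum_{m\ y\text{-smooth}}\frac{1}{m}\asymp \frac{x\log y}{\log x}$, i.e.\ density $\Theta(\log\log x/\log x)$. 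The Dickman-function constant $\rho(A)>0$ that you appeal to pertains to smoothness bounds $y=x^{1/A}$, and for such $y$ factoring the smooth part is no longer ``easy,'' so it cannot be used here. Consequently your claimed $\Omega(1/\sqrt{k})$ bound collapses; after correcting this step you are back to densities of order $(\log k)/k$ at best, i.e.\ essentially the conjectured $\Omega(1/k)$ (which the paper already motivates more simply by the density $\asymp 1/(2k\ln 3)$ of primes $\equiv 1 \pmod 3$ near $3^k$), and still only conditionally.

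Second, your Step 1 is simultaneously easier and harder than you present it. For the uniform measure on the full annulus the equidistribution is in fact unproblematic: the theta series of $Q(a,b,c,d)=a^2-ab+b^2+c^2-cd+d^2$ is a weight-$2$ form on $\Gamma_0(3)$, a level with no cusp forms, so $r_Q(m)=12\,\sigma(m)-36\,\sigma(m/3)$ exactly and the pushforward of the uniform measure is uniform up to $O(\log\log)$ factors --- your worry about large pointwise fluctuations and the proposed ``medium-size prime window'' patch are unnecessary there. The real difficulty is that the conjecture is used (in Lemma \ref{lem:pair:approximation}) with $u,v$ confined to small discs around prescribed complex numbers, and only $O(k)$ candidate pairs are inspected; nothing in your argument addresses the distribution of $3^k-|u|^2-|v|^2$ under that much finer conditioning, which is a Linnik-type local equidistribution problem far beyond the average statement you invoke. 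Together with the asserted-but-unproven ``essential independence'' of the congruence condition (all primes $\equiv 2\pmod 3$ to even multiplicity, cf.\ Theorem \ref{thm:norm:eq:composite}) from the smoothness structure, this is precisely where the conjectural content of the statement lives, and your proposal does not close it.
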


\subsection{Approximation of single-qutrit states}

We start with the following

\begin{lem} \label{lem:pair:approximation}
Let $|\psi\rangle$ be a unitary state of the form $x \, |0\rangle + y\, |1\rangle, \, x,y \in \mathbb{C}, |x|^2+|y|^2=1$ and let $\varepsilon$ be small enough positive value. the unitary state $|\psi\rangle$ can be approximated to precision $\varepsilon$ by a unitary state for the form $(u \, |0\rangle+ v \, |1\rangle+ w \, |2\rangle)/\sqrt{-3}^k, \, u,v,w \in \Z[\omega],\, k \in \Z$ such that
$k \leq 4\, \log_3(1/\varepsilon)+O(\log(\log(1/\varepsilon)))$. The expected classical runtime required to do the approximation effectively is polynomial in $\log(1/\varepsilon)$.
\end{lem}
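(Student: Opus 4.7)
The plan is to construct the approximating state by picking $u, v \in \Z[\omega]$ close to $\sqrt{-3}^k \, x$ and $\sqrt{-3}^k \, y$, and then choosing $w \in \Z[\omega]$ so that the unitarity condition $|u|^2 + |v|^2 + |w|^2 = 3^k$ holds \emph{exactly}. Once this is achieved, unitarity of the candidate state $(u\,|0\rangle + v\,|1\rangle + w\,|2\rangle)/\sqrt{-3}^k$ is automatic, and the sizes of $u,v,w$ will directly control the distance to $|\psi\rangle$.

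First I would set $k = 4\,\log_3(1/\varepsilon) + c\,\log\log(1/\varepsilon)$ for a suitable constant $c$. Rounding $\sqrt{-3}^k x$ and $\sqrt{-3}^k y$ to the nearest Eisenstein integers $u_0, v_0$ contributes component errors of order $3^{-k/2}$ in the $|0\rangle$- and $|1\rangle$-directions, which is far below $\varepsilon$. Expanding $|u_0|^2 = 3^k|x|^2 + O(3^{k/2})$ and similarly for $v_0$ and summing, the residual $n := 3^k - |u_0|^2 - |v_0|^2$ has magnitude $O(3^{k/2})$, so any admissible $w$ with $|w|^2 = n$ satisfies $|w| = O(3^{k/4})$ and the resulting $|2\rangle$-component is of order $3^{-k/4} = O(\varepsilon)$. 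This is precisely where the leading factor of $4$ comes from.

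The technical crux is that not every nonnegative integer $n$ is representable as $|z|^2$ with $z \in \Z[\omega]$: by Theorem~\ref{thm:norm:eq:composite}, solvability is controlled by the prime factorization of $n$. The plan is therefore to sample candidates $(u,v) = (u_0 + \delta u,\, v_0 + \delta v)$ ranging over small Eisenstein perturbations, all of which remain $O(\varepsilon)$-close to the target provided the perturbation radius is kept below a fixed multiple of $\varepsilon\,3^{k/2}$. For each candidate one computes $n(u,v)$, discards it unless it is nonnegative and \emph{easily solvable} (semi-smooth with prime factors $\equiv 1 \bmod 3$, up to possible factors of $3$), and otherwise recovers $w$ via the Tonelli--Shanks plus $\Z[\omega]$-GCD procedure reviewed in Section~\ref{subsec:norm:equation}.

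The main obstacle, and the reason the argument is conditional, is bounding the number of candidates that must be inspected before a solvable $n$ is hit. This is exactly where Conjecture~\ref{conj:norm:eq:solvability} is invoked: it guarantees that a random $(u,v)$ in the relevant meniscus yields an easily solvable $n$ with probability $\Omega(1/k)$. Consequently $O(k\log k)$ trials suffice to succeed with overwhelming probability, and the additive $c\,\log\log(1/\varepsilon)$ slack in $k$ absorbs the constants needed to guarantee both the sampling volume and the final approximation quality. Since each trial costs time polynomial in $k$ and $k = O(\log(1/\varepsilon))$, the total expected classical runtime is polynomial in $\log(1/\varepsilon)$, as claimed.
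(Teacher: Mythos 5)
Your overall route is the paper's route: rescale by $\sqrt{-3}^k$, approximate $x$ and $y$ by Eisenstein integers, complete the third coordinate by solving the norm equation $|w|^2 = 3^k - |u|^2 - |v|^2$ via Tonelli--Shanks plus a $\Z[\omega]$-GCD, and randomize over nearby candidate pairs $(u,v)$ so that Conjecture~\ref{conj:norm:eq:solvability} gives a per-trial success probability $\Omega(1/k)$. However, there is a genuine quantitative gap in your sampling step. You allow perturbations $(\delta u,\delta v)$ of radius up to a fixed multiple of $\varepsilon\,3^{k/2}$ and claim the resulting candidates "remain $O(\varepsilon)$-close to the target." That is true only of the $|0\rangle$- and $|1\rangle$-components. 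For a typical such candidate, $|u|^2 = |u_0|^2 + 2\,\mathrm{Re}(u_0\overline{\delta u}) + |\delta u|^2$ differs from $|u_0|^2$ by $\Theta(\varepsilon\,3^{k})$, so the residual $n = 3^k - |u|^2 - |v|^2$ is typically of size $\Theta(\varepsilon\,3^k)$ rather than the $O(3^{k/2})$ of your opening error budget. The completed state then has $|2\rangle$-amplitude $|w|/3^{k/2} = \sqrt{n/3^k} = \Theta(\sqrt{\varepsilon})$, which dominates $\varepsilon$ and destroys the claimed precision: the output is only a $\Theta(\sqrt{\varepsilon})$-approximation. Your first paragraph (nearest-lattice rounding, $n=O(3^{k/2})$, $|2\rangle$-amplitude $O(3^{-k/4})=O(\varepsilon)$) has the right accounting, but the sampling radius you then choose is incompatible with it.

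The fix is exactly what the paper does: the components must be held to precision $\Theta(\varepsilon^2)$, i.e.\ the perturbation radius must be $O(\varepsilon^2 3^{k/2})$ (the paper takes $\delta=\varepsilon^2/5$ in Proposition~\ref{approx:one:complex:revised}), so that $n = O(\varepsilon^2 3^k)$ and the $|2\rangle$-amplitude stays $O(\varepsilon)$; this is also where the leading coefficient $4$ in $k$ actually comes from ($k \gtrsim 2\log_3(1/\delta) \approx 4\log_3(1/\varepsilon)$), equivalent to your $3^{-k/4}\lesssim\varepsilon$ bookkeeping. You should check that this much smaller disc still contains enough lattice points: with $k=k_0+\ell$ it contains $\Omega(3^{\ell})$ choices for each of $u$ and $v$, hence $\Omega(9^{\ell})$ pairs, so $\ell = O(\log\log(1/\varepsilon))$ accommodates the $O(k\log k)$ trials you need, consistent with your additive slack in $k$. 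Separately, you discard candidates with $n<0$; the paper instead imposes the side condition $|u/\sqrt{-3}^k|\le|x|$, $|v/\sqrt{-3}^k|\le|y|$ inside Proposition~\ref{approx:one:complex:revised}, which guarantees $n\ge 0$ for every candidate. If you keep the discard strategy you must argue that a non-negligible fraction of sampled pairs have $n\ge 0$ (e.g.\ by biasing the perturbations toward the origin); otherwise the $\Omega(1/k)$ success rate, and hence the runtime claim, is not justified as stated.
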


\begin{figure}[bt]
\includegraphics[width=3.5in]{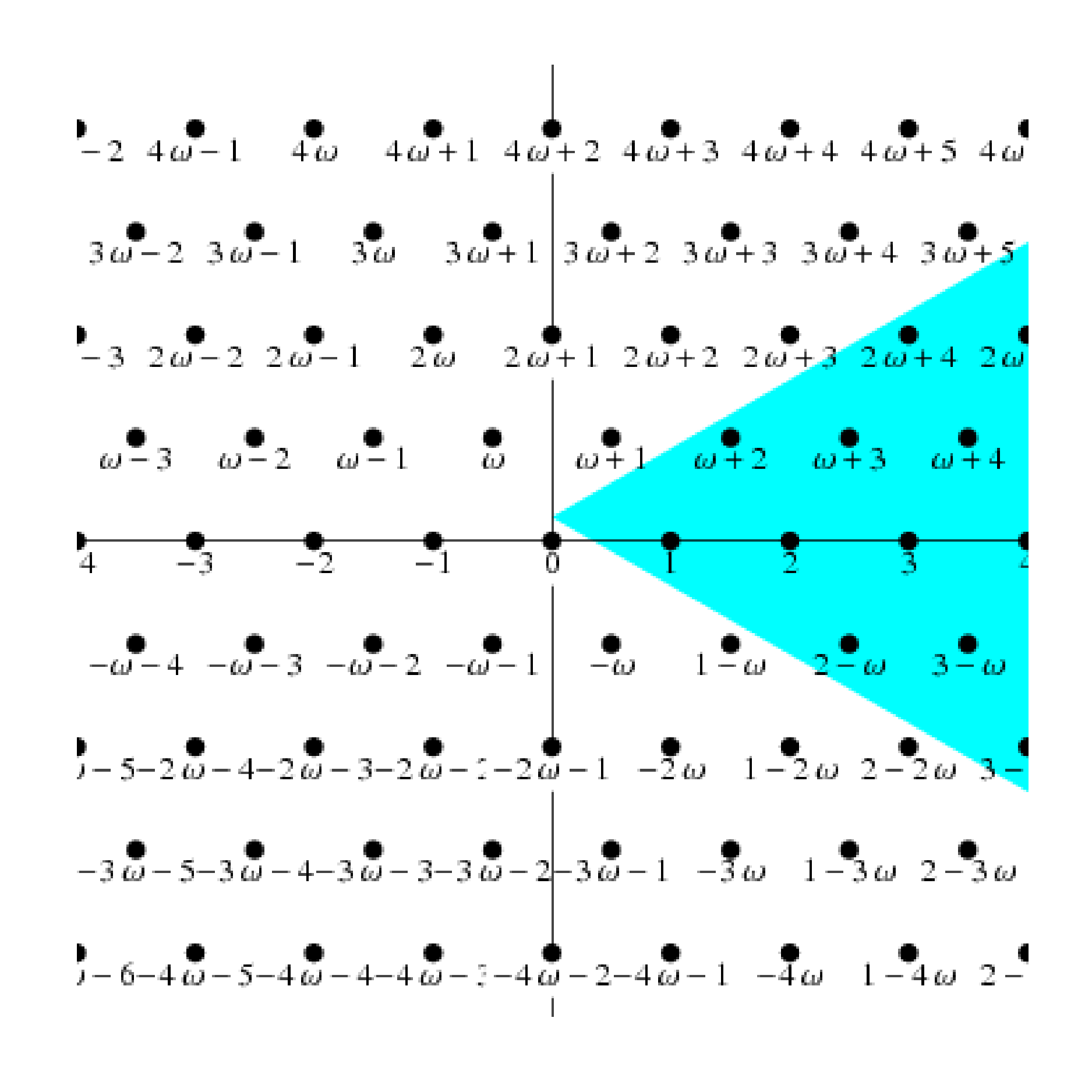}
\caption{\label{fig:eisen:integers} Lattice of Eisenstein integers. \footnote{Downloaded from http://mathworld.wolfram.com/EisensteinInteger.html, a Wolfram Research Inc. web resource.}}
\end{figure}

Before proving the lemma, let us make the following

\begin{prop} \label{approx:one:complex:revised}
For a given complex number $z$ with $|z| \leq 1$ and small enough $\varepsilon>0$ there exists an integer $k \leq 2\, \log_3(1/\varepsilon) + 5$ and an Eisensten integer $u \in \Z[\omega]$ such that $|u/\sqrt{-3}^k - z| < \varepsilon$ and $|u/\sqrt{-3}^k| \leq |z|$.

Set $k_0= \lceil 2\,\log_3(1/\varepsilon)+ 2\, \log_3(2)+2 \rceil$ and let $\ell$ be a non-negative integer that can be arbitrarily large. For $k = k_0 +\ell$ there are $\Omega(3^{\ell})$ distinct choices of Eisensten integer $u$ such that $|u/\sqrt{-3}^k - z| < \varepsilon$

\end{prop}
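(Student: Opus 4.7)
The plan is to view $\Z[\omega]$ as a triangular (hexagonal) sublattice of $\CC$ with fundamental parallelogram of area $\sqrt{3}/2$ and Voronoi covering radius $1/\sqrt{3}$. Since $|\sqrt{-3}|=\sqrt{3}$, the scaled lattice $L_k := \sqrt{-3}^{-k}\,\Z[\omega]$ is again a similar triangular lattice with point density $(2/\sqrt{3})\cdot 3^k$ and covering radius $\rho_k = 1/\sqrt{3^{k+1}}$. Given these elementary facts, part (1) reduces to a covering-radius statement and part (2) to a Gauss-style lattice point count inside a disk.

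For part (1), I would first dispense with the trivial regime $|z|<\varepsilon$ by taking $u=0$. For $|z|\geq\varepsilon$ I pull $z$ radially inward by $\varepsilon/2$ to obtain the auxiliary point $z_0 := z\,(1-\varepsilon/(2|z|))$, so that $|z_0|=|z|-\varepsilon/2$, and then pick any $p\in L_k$ with $|p-z_0|\leq\rho_k$; such $p$ exists by the definition of the covering radius. The triangle inequality delivers $|p-z|\leq\rho_k+\varepsilon/2$ on one side and $|p|\leq |z|-\varepsilon/2+\rho_k$ on the other, so both desired inequalities hold as soon as $\rho_k<\varepsilon/2$. That requirement translates to $k\geq 2\log_3(1/\varepsilon)+2\log_3 2-1$, comfortably below $2\log_3(1/\varepsilon)+5$, and the Eisenstein integer $u=\sqrt{-3}^k\,p$ meets the claim.

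For part (2), I would invoke the classical estimate that, for the disk $D(z,\varepsilon)$, the count $|D(z,\varepsilon)\cap L_k|$ equals $\pi\varepsilon^2\cdot(2/\sqrt{3})\cdot 3^k$ up to an error $O(\varepsilon\cdot 3^{k/2})$ arising from lattice points whose Voronoi cells meet the boundary annulus of width $\rho_k$. The prescribed $k_0=\lceil 2\log_3(1/\varepsilon)+2\log_3 2+2\rceil$ forces $3^{k_0}\geq 36/\varepsilon^2$, so already at $k=k_0$ the main term is a positive absolute constant while the error is another absolute constant of strictly smaller size. For $k=k_0+\ell$ the main term scales by the factor $3^\ell$ whereas the error scales only by $3^{\ell/2}$, yielding $\Omega(3^\ell)$ distinct lattice points in $D(z,\varepsilon)$ and correspondingly $\Omega(3^\ell)$ distinct Eisenstein integers $u$.

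The only genuinely delicate point is the simultaneous enforcement in part (1) of $|u/\sqrt{-3}^k-z|<\varepsilon$ and $|u/\sqrt{-3}^k|\leq |z|$: the \emph{nearest} lattice point to $z$ can sit just outside the closed disk of radius $|z|$ and so violate the magnitude constraint, which is precisely the reason for the inward pre-shift of $z$ by a safety margin of $\varepsilon/2$. The factor-of-$2$ penalty introduced by the shift widens the required $k$ by the additive constant $2\log_3 2\approx 1.26$, which fits comfortably inside the slack of $5$ in the stated bound.
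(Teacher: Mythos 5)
Your proof is correct in substance, but it follows a genuinely different route from the paper's. You treat $\sqrt{-3}^{-k}\Z[\omega]$ as a scaled triangular lattice and get part (1) from its covering radius $3^{-(k+1)/2}$, handling the side condition $|u/\sqrt{-3}^k|\leq|z|$ by the inward radial pre-shift of $z$ by $\varepsilon/2$ before rounding to the nearest lattice point; part (2) you get from a Gauss-type count of lattice points in the disk $D(z,\varepsilon)$, with main term $\propto \varepsilon^2 3^k$ and boundary error $\propto \varepsilon 3^{k/2}$. The paper instead argues constructively: it uses multiplication by the unit $-\omega^2$ to rotate $z$ into the sector $\pi/12\leq\arg z\leq 5\pi/12$, then explicitly picks the coordinates $b$ and $a$ of $u=a+b\omega$ from a vertical and a horizontal segment inside the lens $B(k,\varepsilon)$, each of length exceeding $1$ once $k\geq 2\log_3(1/\varepsilon)+2\log_3 2+2$; the $\Omega(3^{\ell})$ count then comes from the fact that both segments scale by $\Omega(\sqrt{3}^{\ell})$ when $k=k_0+\ell$. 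What the paper's version buys is an explicit, effective procedure for producing $u$ (and an enumerator over the $\Omega(3^{\ell})$ candidates), which is what the downstream approximation algorithm actually consumes; your nearest-lattice-point and disk-counting argument is shorter and more conceptual, and is also effective in principle (rounding coordinates), though you should say so, and you should be aware that your parenthetical claim that the main term already dominates the error exactly at $k=k_0$ depends on the unstated constant in the boundary-error bound — this does not matter for the asymptotic $\Omega(3^{\ell})$ claim, since the main term scales like $3^{\ell}$ against $3^{\ell/2}$, but as written it is slightly glossed. Also note that the paper's counting is done inside the lens $B(k,\varepsilon)$ (i.e.\ retaining the side condition), whereas you count in the full disk; this matches the literal statement of part (2), but the stronger lens version is what the paper reuses later, and your shifted-center trick would recover it with only a constant-factor loss.
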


\begin{proof}
Note that $|u/\sqrt{-3}^k - z| = |u/\sqrt{3}^k - z\, i^k|$, and we can simplify the statement a bit by relabeling $z\, i^k$ as $z$.

We start by taking a geometric view on the feasibility of  both claims in this proposition.

On the complex plain Eisenstein integers are found at the nodes of a hexagonal lattice spanned, for example,  by $1$ and $1+\omega=+1/2 + i \, \sqrt{3}/2$. These two lattice basis vectors are at the angle $\pi/3$ (and thus the entire lattice is a tiling of the plane with equilateral triangles of side length $1$, see Figure \ref{fig:eisen:integers}). A circle of radius $R$ centered at the origin contains at least $3\,R\,(R+1)$ nodes of this lattice. As per general properties of integral lattices, a convex domain with large enough area $A$ is to contain $O(A)$ lattice nodes
and, in this case, at least $3/\pi \, A$ nodes.

Desired Eisenstein integer $u$ must be within $\varepsilon \, \sqrt{3}^k$ from $z\, \sqrt{3}^k$ and satisfy the side condition

\begin{equation} \label{eq:side:condition}
|u| \leq |z|\, \sqrt{3}^k
\end{equation}

Geometrically this means that $u$ must belong to the intersection of the two circles
$B(k,\varepsilon)= \{|u| \leq |z|\, \sqrt{3}^k\} \cap |u - z \, \sqrt{3}^k| < \varepsilon  \, \sqrt{3}^k|$.

$B(k,\varepsilon)$ is a convex domain and, when  $\varepsilon$ is sufficiently smaller than $|z|$ it contains a sector of the smaller circle with the area of at least
$1/2 \, (1-\varepsilon/|z|) \varepsilon^2 \, 3^k$. Thus (assuming $\varepsilon < 2/3 \, |z|$) if $k$ is larger than $\underline{k} = \log_3(2/\varepsilon^2) +1$ then the area of $B(k,\varepsilon)$ is greater than $1$ and $B(k)$ has a good chance of containing at least one node of the Eisenstein lattice. It may not contain one for a specific geometric configuration, but one notes that for $k=\underline{k}+\ell$ the area of $B(k,\varepsilon)$ grows exponentially in $\ell$ so there exists a small constant $\ell_0$ such that for $k_0=\lceil \underline{k} \rceil +\ell_0$ the $B(k_0, \varepsilon)$ contains an Eisenstein lattice node. It is geometrically obvious that from that point on for integer $\ell>0$ the number of Eisenstein lattice points in $B(k_0+\ell, \varepsilon)$ grows as $O(3^{\ell})$.

We now propose a procedure for effectively finding such points in $B(k, \varepsilon)$ .

The task is reduced to the case when $\pi/12 \leq \arg{z} \leq 5\,\pi/12$. Indeed, the multiplication by the Eisensten unit $-\omega^2 = 1+\omega$ is interpreted as a central rotation of the complex plane by the angle $\pi/3$ and an automorphism of the Eisenstein integer lattice. A complex number $z \neq 0$ lying in any of the six sectors $\pi/12 + \pi/3 \, m \leq \arg{z} \leq 5\,\pi/12 + \pi/3 \, m, m = 0,\ldots,5$ can be moved into the sector $\pi/12 \leq \arg{z} \leq 5\,\pi/12$ by applying zero or more of such rotations. An Eisenstein integer properly approximating the rotated target can be rotated back into an Eisenstein integer approximating the original target.

We now assume, that $k \geq \log_{\sqrt{3}}(2/\varepsilon)+2 = 2\,\log_3(1/\varepsilon)+ 2\, \log_3(2)+2$
(this is a convenient even if somewhat excessive assumption).

This implies that $\varepsilon \, \sqrt{3}^{k-1} \geq 2 \, \sqrt{3}$ and $\varepsilon \, \sqrt{3}^k \geq 6$.

 Considering $\pi/12 \leq \arg{z} \leq 5\,\pi/12$, the circle (\ref{eq:side:condition}) contains the vertical segment
$[z \, \sqrt{3}^k -  i \,|z| \sqrt{3}^k (2\, \sin(\pi/12)), z \, \sqrt{3}^k]$ of length at least $1/2 \,  |z| \sqrt{3}^k$. Assuming, again, $\varepsilon < |z|$ the
$B(k,\varepsilon/4)$ contains the vertical segment $V=[z \, \sqrt{3}^k -   i \, \sqrt{3}^k \, \varepsilon/4, z \, \sqrt{3}^k]$.

We are now ready to build the desired Eisenstein integer $u = a + b \, \omega = (a-b/2)+ i \, (b\,\sqrt{3}/2), a,b \in \Z$. We are going to chose $b$ such that $b\,\sqrt{3}/2$ is at a distance
at most $(\varepsilon/4) \, \sqrt{3}^k $ from $\mbox{Im}(z)\, \sqrt{3}^k$. As per our choice of $k$ this implies that it is necessary and sufficient for the integer $b$ to belong to a segment of length $\varepsilon \, \sqrt{3}^{k-1} /2  \geq \sqrt{3}>1$. Therefore at least one such integer exists and can be effectively picked.

Next one must find an integer $a$ such that $u=a-(b/2)+i \, (b\,\sqrt{3}/2) \in B(k,\varepsilon)$.
As per the geometric condition $\arg{z} \leq 5\,\pi/12$, the circle (\ref{eq:side:condition}) contains horizontal segment $H =[z \, \sqrt{3}^k -  |z| \sqrt{3}^k \sin(\pi/12),z \, \sqrt{3}^k]$
of length at least $1/4 \,  |z| \sqrt{3}^k$ and under $\varepsilon < |z|$ the $B(k,\varepsilon)$
contains horizontal segment $H' =[z \, \sqrt{3}^k -  \varepsilon \, \sqrt{3}^k /4,z \, \sqrt{3}^k]$.
By elementary geometric considerations $B(k,\varepsilon)$ also contains the horizontal segment
$H'' =[z \, \sqrt{3}^k -i\, b\, \sqrt{3}/2 -  3/16\, \varepsilon \, \sqrt{3}^k,z \, \sqrt{3}^k-i\, b\, \sqrt{3}/2 ]$ of length at least $3/16\, \varepsilon \, \sqrt{3}^k$.

For our choice of $k$, $3/16\, \varepsilon \, \sqrt{3}^k \geq 3/16 \times 6 > 1$. It is necessary and sufficient for the desired integer $a$ to belong to the segment
$[\mbox{Re}(z) \, \sqrt{3}^k + b/2 -  3/16\, \varepsilon \, \sqrt{3}^k,\mbox{Re}(z) \, \sqrt{3}^k + b/2]$
of length greater than $1$ as we have just seen so the desired $a$ exists and can be effectively picked.

The geometry of  this approximation procedure is shown schematically on Figure \ref{fig:approx:geometry}.

Set $k_0 = \lceil 2\,\log_3(1/\varepsilon)+ 2\, \log_3(2)+2 \rceil$. Let $\ell$ be some positive integer.
Since the geometry of the problem for $k=k_0+\ell$ is simply the geometry of the problem at $k=k_0$ scaled out by the factor of $\sqrt{3}^{\ell}$ then the segments we used above to pick the values of $b$ and $a$ are scaled out by a factor of $\Omega(\sqrt{3}^{\ell})$ and thus allow at least $\Omega(\sqrt{3}^{\ell})$ distinct choices of $b$ and at least $\Omega(\sqrt{3}^{\ell})$ distinct choices of $a$ for each choice of $b$. Therefore there are $\Omega(3^{\ell})$ distinct choices of Eisenstein integer $u$ yielding as many distinct approximations of $z$ as claimed.

\end{proof}

\begin{figure}[bt]
\includegraphics[width=3.5in]{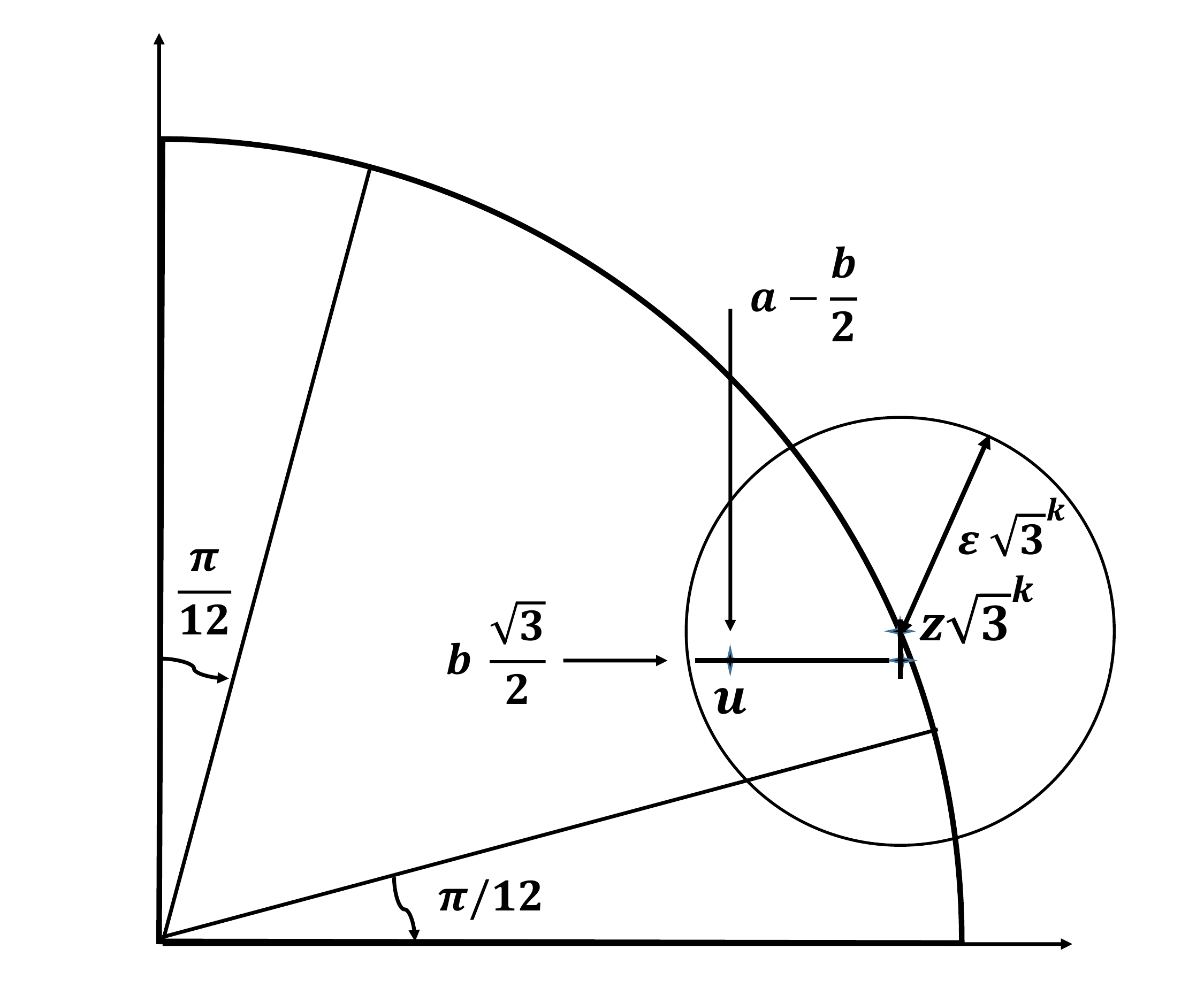}
\caption{\label{fig:approx:geometry} Approximating a scaled complex number by an Eisenstein integer.}
\end{figure}

\begin{proof}{(Of the lemma)}

For convenience we assume that $\varepsilon < 1$.

Let us do some preliminary analysis first.

We start by observing that for a unitary state  $|\phi\rangle$ to be within $\varepsilon$ of $|\psi\rangle$ , it would suffice that

\begin{equation} \label{eq:sufficient:bracket}
2 \, \mbox{Re}(\langle \phi|\psi\rangle)> 2-\varepsilon^2
\end{equation}

Consider some small $\delta > 0$ and a tri-level \emph{unitary} state $|\phi\rangle = u' \, |0\rangle + v' \, |1\rangle + w' \, |2\rangle$ and assume that
$|u'-x| <\delta, \, |v'-y| < \delta$.

By direct computation

$2\, \mbox{Re} (u' \, x^*) > |u'|^2 + |x|^2 - \delta^2$

$2\, \mbox{Re} (v' \, y^*) > |v'|^2 + |y|^2 - \delta^2$

Hence $2 \, \mbox{Re}(\langle \phi|\psi\rangle) > 2 - (1-|u'|^2-|v'|^2) - 2\, \delta^2$.

Expanding triangle inequalities $|u'|\geq |x|- |x-u'|, \, |v'|\geq |y|- |y-v'|$, we get $|u'|^2+|v'|^2 \geq 1 - 2 (|x|\,|x-u'|+|y|\,|y-v'|)+|x-u'|^2+|y-v'|^2 \geq 1- 4 \, \delta$.

Assuming w.l.o.g that $\delta^2 < \delta/2$ we conclude that $2 \, \mbox{Re}(\langle \phi|\psi\rangle) > 2 - 5\,\delta$.

Set $\delta = \varepsilon^2/5$ in order to satisfy the inequality (\ref{eq:sufficient:bracket})  and start with $k_0 = \lceil 2\,\log_3(1/\delta)+ 2\, \log_3(2)+2 \rceil \leq  4\, \log_3(1/ \varepsilon) + \log_3(5) + 5$.

We will look for a sufficient $k=k_0+\ell$ where $\ell$ iterates sequentially through non-negative integers.

As per the Proposition \ref{approx:one:complex:revised} there exist several suitable Eisenstein integers $u,v$ such that $u/\sqrt{-3}^k$ is an $\delta$-approximation of $x$ and $v/\sqrt{-3}^k$ is a $\delta$-approximation of $y$. In fact as $\ell$ grows, there are $\Omega(9^{\ell})$ distinct subunitary states $u/\sqrt{-3}^k |0\rangle + v/\sqrt{-3}^k |1\rangle$ that are $\delta$-close to $|\psi\rangle$.

To effectively prove the lemma it suffices to find one such state that can be completed to a unitary state
$|\phi\rangle = u/\sqrt{-3}^k |0\rangle + v/\sqrt{-3}^k |1\rangle + w/\sqrt{-3}^k |2\rangle$ for some $\ell$ that is not too large.

The sufficient inequality (\ref{eq:sufficient:bracket}) does not explicitly involve $w$ and is satisfied for $\delta = \varepsilon^2/5$ as shown above.

By unitariness of the desired $|\phi\rangle$, the $w \in \Z[\omega]$ must satisfy the equation

\begin{equation} \label{norm:eq:instance}
|w|^2 = 3^k - |u|^2 - |v|^2
\end{equation}

which is an instance of the norm equation (\ref{eisen:norm:equation}). As we have seen in the subsection \ref{subsec:norm:equation}, any particular instance of the norm equation is not necessarily solvable. However we are going to randomize the choice of $u$ and $v$ so that the Conjecture \ref{conj:norm:eq:solvability} becomes applicable.


To this end, let $\ell$ be an integer iterating from $0$ to some sufficiently large $L$ and $k=k_0+\ell$ iterate with it. For each subsequent value of $\ell$ we will inspect all the available $u,v$ that generate $\delta$-approximations $u/\sqrt{-3}^k, v/\sqrt{-3}^k$ of $x,y$. As we have pointed out the number of such distinct $u,v$  grows exponentially with $\ell$. Assuming  Conjecture \ref{conj:norm:eq:solvability} we only need to inspect as many as $O(\log(3^k - |u|^2 - |v|^2))=O(k)=O(k_0+\ell)$ of such distinct $u,v$  to find one for which the equation (\ref{norm:eq:instance}) is easily solvable with sufficiently high probability.

It is easy to see that there exists such $\ell = O(\log(k_0))$ for which an easily solvable norm equation (\ref{norm:eq:instance}) is obtained with near certainty. Therefore a desired unitary state $(u \, |0\rangle+ v \, |1\rangle+ w \, |2\rangle)/\sqrt{-3}^k$ will be obtained for some
$k = k_0+O(\log(k_0)) \leq
4 \, \log_3(1/\varepsilon) + O(\log(\log(1/\varepsilon)))$

Finally we note that we only needed to inspect $O(k) = O(\log(1/\varepsilon))$ candidate pairs $u,v$ for completion. Each inspection involved a decision whether the corresponding norm equation was easily solvable which incurred expected runtime cost that was polynomial in $O(\log(3^k))=O(k) = O(\log(1/\varepsilon))$, Therefore the overall expected runtime cost of the algorithm is also polynomial in $O(\log(1/\varepsilon))$.

\end{proof}

Below we present the method suggested by this lemma in pseudo-code format

\begin{algorithm}[H]
\caption{Approximation of a short state}
\label{alg:short:state:approx}
\algsetup{indent=2em}
\begin{algorithmic}[1]
\REQUIRE{$x,y \in \mathbb{C}$; $|x|^2+|y|^2=1$; $\varepsilon > 0$}
\STATE{$\delta \gets \varepsilon^2/5$}
\STATE{$k_0 \gets \lfloor 4\, \log_3(1/ \varepsilon) + \log_3(5) + 5\rfloor$}
\STATE{$w \gets \mbox{None}$; $k \gets k_0-1$}
\WHILE {$w = \mbox{None}$}
\STATE{ $k \gets k+1$ }
\STATE{ $\mbox{enum} \gets $ enumerator for all $u,v \in \Z[\omega]$}
\STATE{ s.t. $(u |0\rangle + v |1\rangle)/\sqrt{-3}^k$ is $\delta$-close to $x |0\rangle + y |1\rangle$}
\WHILE {$w = \mbox{None} \land \mbox{enum}.\mbox{Next}$}
\STATE{ $(u,v) \gets \mbox{enum}.\mbox{Current}$}
\IF{Equation $|z|^2=3^k-|u|^2-|v|^2$ is easily solvable for $z$}
\STATE{$w \gets z$}
\ENDIF
\ENDWHILE

\ENDWHILE
\RETURN { $\{u,v,w,k\}$}

\end{algorithmic}
\end{algorithm}

\section{Two-qutrit classical gates generated by $\mbox{SUM}$ and $\mbox{SWAP}$} \label{app:2qutrit:classical}

It is currently not known what two-qutrit gates can be represented exactly over the metaplectic basis. In particular it is not known whether the important classical $C^1(\mbox{INC})$
gate (\ref{eq:new:CINC:gate}) is so representable.

Let $S_9$ be the permutation group on $9$ elements.
There is a natural unitary representation of $S_9$ on $\mathbb{C}^{3^2}$ where a permutation $\pi$ is mapped to the unitary that extends the permutation $\pi$
applied to the standard basis vectors $\{|00\rangle , \ldots, |22\rangle \}$. The image of this faithful representation coincides, by definition, with the group of all the classical
two-qutrit gates. By a slight abuse of notation we also use $S_9$ to denote the image.

The following proposition addresses the maximality of the subgroup of $2$-qutrit classical gates obtained from braiding.

\begin{proposition}
The group, $G$, generated by $\mbox{SUM}, \mbox{SWAP},$ and all the $1$-qutrit classical gates is a maximal subgroup of $S_9$.
\end{proposition}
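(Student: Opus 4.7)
The plan is to identify $G$, via the bijection $|jk\rangle \leftrightarrow (j,k)$, with the affine group $AGL(2, \mathbb{F}_3)$ acting on $\mathbb{F}_3^2$ in its natural way, and then to invoke the CFSG-based classification of $2$-transitive subgroups of $S_9$ to conclude maximality.

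First I would verify that every generator of $G$ is affine over $\mathbb{F}_3$: $\mbox{SUM}$ becomes the linear shear $(j,k) \mapsto (j, j+k)$, $\mbox{SWAP}$ becomes the linear involution $(j,k) \mapsto (k,j)$, and each local $1$-qutrit classical gate acts affinely on its coordinate because the $6$-element groups $S_3$ and $AGL(1, \mathbb{F}_3)$ coincide inside $\mathrm{Sym}(\mathbb{F}_3)$. Hence $G \subseteq AGL(2, \mathbb{F}_3)$. For the reverse inclusion, the local $\mbox{INC}$ gates supply the entire translation subgroup $\mathbb{F}_3^2$. Writing $T$ for the linear part of $\mbox{SUM}$ and $S$ for that of $\mbox{SWAP}$, the matrix $T$ is the lower-triangular unipotent $2\times 2$ matrix over $\mathbb{F}_3$ and $STS$ is its upper-triangular counterpart; the pair $\{T, STS\}$ is a standard generating set for $SL(2, \mathbb{F}_3)$. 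A local transposition such as $\tau_{0,2}$ on the first qutrit contributes a diagonal matrix of determinant $2$, so the linear part of $G$ fills out all of $GL(2, \mathbb{F}_3)$. Combined with the translations, this yields $G = AGL(2, \mathbb{F}_3)$.

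Next I would establish the maximality of $AGL(2, \mathbb{F}_3)$ in $S_9$. Since $GL(2, \mathbb{F}_3)$ is transitive on nonzero vectors, $AGL(2, \mathbb{F}_3)$ is $2$-transitive on $\mathbb{F}_3^2$; consequently any proper overgroup of $G$ in $S_9$ is also $2$-transitive. By CFSG (see \cite{scott} and \cite{liebeck}), the $2$-transitive subgroups of $S_9$ of order greater than $|AGL(2, \mathbb{F}_3)| = 432$ are exactly $PSL(2,8)$, $P\Gamma L(2,8)$, $A_9$, and $S_9$. Lagrange's theorem rules out $PSL(2,8)$ (order $504$, not a multiple of $432$) and $P\Gamma L(2,8)$ (order $1512$, not a multiple of $432$). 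And $G \not\subseteq A_9$ because $\mbox{SWAP}$ acts on the nine basis vectors as a product of three transpositions (it fixes the diagonal and swaps each off-diagonal pair), hence as an odd permutation. The only remaining overgroup is $S_9$ itself, so $G$ is maximal.

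The main obstacle is the third step: it rests on the CFSG-based classification of primitive (equivalently, $2$-transitive) permutation groups of degree $9$, which is precisely what the cited references provide. The first two steps are essentially book-keeping: identifying each generator as an affine transformation, and checking that the resulting linear parts generate $GL(2, \mathbb{F}_3)$.
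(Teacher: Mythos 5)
Your proposal is correct, and its first half coincides with the paper's: both identify the two-qutrit basis with $\mathbb{F}_3^2$ and show $G\simeq AGL(2,\mathbb{F}_3)$, with $\mbox{SUM}$ and $\mbox{SWAP}$ giving the transvection and the coordinate swap, local $\mbox{INC}$ gates giving the translations, and a local transposition supplying a determinant-$2$ element (you actually spell out the generation of $GL(2,\mathbb{F}_3)$ from transvections plus a non-square determinant, where the paper only asserts it). The two arguments diverge in how maximality of $AGL(2,\mathbb{F}_3)$ in $S_9$ is obtained: the paper quotes it directly from the O'Nan--Scott theorem in the Liebeck--Praeger--Saxl form, whereas you deduce it from the CFSG classification of $2$-transitive groups of degree $9$ (any proper overgroup is $2$-transitive and of order a proper multiple of $432$, hence is $PSL(2,8)$, $P\Gamma L(2,8)$, $A_9$ or $S_9$; Lagrange kills the first two and the oddness of $\mbox{SWAP}$, a product of three transpositions, kills $A_9$). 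Both routes rest on classification-level input cited to the same references, but yours trades the black-box maximality statement for an explicit, easily checkable elimination of candidate overgroups, at the cost of needing the full list of $2$-transitive groups of degree $9$; the paper's route is shorter but leaves the parity check ($AGL(2,\mathbb{F}_3)\not\le A_9$, which is implicitly needed to get maximality in $S_9$ rather than in $A_9$) to the cited tables, a point your argument handles explicitly.
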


\begin{proof}
Of-course, one can always do a brute force computer search to verify this statement. Here we provide an elegant alternative proof. Let $AGL(2,\mathbb{F}_3) = GL(2,\mathbb{F}_3) \ltimes \mathbb{F}_3^2$ be the affine linear group acting on the $2$-dimensional vector space $\mathbb{F}_3^2$. Explicitly, given $\phi = (A, c) \in AGL(2,\mathbb{F}_3), v \in \mathbb{F}_3^2$, we have $\phi(v) = Av + c$. Note that $\mathbb{F}_3^2$ has in total $9$ vectors, whose coordinates under the standard basis are $\{(i,j) | i,j = 0,1,2\}$. We identify the coordinate $(i,j)$ with the $2$-qutrit basis vector $|i,j\rangle$. Since elements of $AGL(2,\mathbb{F}_3)$ permute the $9$ coordinates, we then have a group morphism $\psi: AGL(2,\mathbb{F}_3) \longrightarrow S_9 \subset U(3^2)$, such that $\psi(A,c)|i,j\rangle = A.\tvect{i}{j} + c$.

For instance, let $A = \begin{pmatrix}
1 & 0 \\
1 & 1 \\
\end{pmatrix}$, since $A.\tvect{i}{j} = \tvect{i}{i+j}$, then $\psi(A) = \mbox{SUM}$. Similarly, once can check the following correspondences.
$$
\begin{pmatrix}
1 & 0 \\
1 & 1 \\
\end{pmatrix}
\qquad
\mapsto
\qquad
\mbox{SUM}
$$
$$
\begin{pmatrix}
0 & 1 \\
1 & 0 \\
\end{pmatrix}
\qquad
\mapsto
\qquad
\mbox{SWAP}s
$$
$$
\begin{pmatrix}
1 & 0 \\
0 & 2 \\
\end{pmatrix}
\qquad
\mapsto
\qquad
Id \otimes S_{1,2}, \textrm{ where } S_{1,2} = \begin{pmatrix}
1 & 0 & 0\\
0 & 0 & 1\\
0 & 1 & 0 \\
\end{pmatrix}
$$
$$
\begin{pmatrix}
1 \\
0 \\
\end{pmatrix}
\qquad
\mapsto
\qquad
INC \otimes Id
$$
$$
\begin{pmatrix}
0 \\
1 \\
\end{pmatrix}
\qquad
\mapsto
\qquad
Id \otimes INC
$$
It's easy to check that the matrices$($vectors$)$ on the LHS of the above correspondences generate the group $AGL(2,\mathbb{F}_3)$ and the gates on the RHS generate $G$. Also, it's not hard to verify that the map $\psi$ is injective and thus $G \simeq AGL(2,\mathbb{F}_3).$ Now by $O'Nan$-$Scott$ Theorem \cite{scott}\cite{liebeck}, $AGL(2,\mathbb{F}_3)$ is a maximal subgroup of $S_9$.

Therefore $G$ is a maximal subgroup of $S_9 \subset U(3^2)$.
\end{proof}

An immediate consequence of this proposition is that, as soon as the $C^1(\mbox{INC})$ gate is   exactly representable then all the classical two-qutrit gates are also exactly representable.


\end{document}